\theoremstyle{plain}
\newtheorem{theorem}{Theorem}[section]
\newtheorem{proposition}[theorem]{Proposition}
\newtheorem{lemma}[theorem]{Lemma}
\newtheorem{corollary}[theorem]{Corollary}
\theoremstyle{definition}
\newtheorem{definition}[theorem]{Definition}
\theoremstyle{remark}
\DeclarePairedDelimiter\norm{\lVert}{\rVert}
\newcommand{\at}{\ensuremath{\tilde{a}}}
\let\Pr\relax\DeclareMathOperator*{\Pr}{\mathbf{Pr}}
\DeclareMathOperator*{\median}{\mathbf{median}}
\DeclareMathOperator*{\E}{\mathbb{E}}
\def\eps{\varepsilon}
\renewcommand{\epsilon}{\varepsilon}
\newcounter{algsubstate}
\patchcmd{\algorithmic}{\addtolength{\ALC@tlm}{\leftmargin} }{\addtolength{\ALC@tlm}{\leftmargin}}{}{}
\begin{document}

\title{Turnstile \texorpdfstring{$\ell_p$}{lp} leverage score sampling with applications}

\author{Alexander Munteanu\thanks{Dortmund Data Science Center, Faculties of Statistics and Computer Science, TU Dortmund University, Dortmund, Germany. Email: \texttt{alexander.munteanu@tu-dortmund.de}.}
\and Simon Omlor \thanks{Faculty of Statistics and Lamarr-Institute for Machine Learning and Artificial Intelligence, TU Dortmund University, Dortmund, Germany. Email: \texttt{simon.omlor@tu-dortmund.de}.}}

\vskip 0.3in

\maketitle

\begin{abstract}
The turnstile data stream model offers the most flexible framework where data can be manipulated dynamically, i.e., rows, columns, and even single entries of an input matrix can be added, deleted, or updated multiple times in a data stream. We develop a novel algorithm for sampling rows $a_i$ of a matrix $A\in\mathbb{R}^{n\times d}$, proportional to their $\ell_p$ norm, when $A$ is presented in a turnstile data stream. Our algorithm not only returns the set of sampled row indexes, it also returns slightly perturbed rows $\tilde{a}_i \approx a_i$, and approximates their sampling probabilities up to $\varepsilon$ relative error. When combined with preconditioning techniques, our algorithm extends to $\ell_p$ leverage score sampling over turnstile data streams. With these properties in place, it allows us to simulate subsampling constructions of coresets for important regression problems to operate over turnstile data streams with very little overhead compared to their respective off-line subsampling algorithms. For logistic regression, our framework yields the first algorithm that achieves a $(1+\varepsilon)$ approximation and works in a turnstile data stream using polynomial sketch/subsample size, improving over $O(1)$ approximations, or $\exp(1/\varepsilon)$ sketch size of previous work. We compare experimentally to plain oblivious sketching and plain leverage score sampling algorithms for $\ell_p$ and logistic regression.
\end{abstract}

\clearpage
\tableofcontents
\clearpage

\section{Introduction}

When analyzing huge amounts of data, even linear time and space algorithms may require large computing resources or even reach the limits of tractability. When dealing with data streams, or distributed data, we face additional restrictions regarding their accessibility or communication. In massively unordered models, huge amounts of data are stored and need to be processed in arbitrary order. To deal with such situations, it is necessary to preprocess the dataset and reduce its size before classical data analysis algorithms can perform on a compressed substitute data set. Two main techniques can be identified in the literature, referred to as \emph{coresets} and \emph{sketching}, that quickly compute some sort of smaller data summary while data is presented under the various restrictions mentioned above, and thereby provide mathematical guarantees on the approximation error obtained from analyzing the proxy \citep{Phillips17,Munteanu23}.

Coreset constructions often work by importance subsampling or selection of original rows of a data matrix and reweighting them reciprocally to their sampling probability \citep{MunteanuS18}. This yields unbiased and precise estimates using few rows of high importance that are likely to be included, while many low contributions are redundant and can be subsampled in a near uniform way \citep{LangbergS10,FeldmanSS20}.

Sketching is often seen as a descendant of random projections and aims at randomly isolating rows that have a very high impact on the objective function \citep{Woodruff14}. The idea behind the type of sketches considered in this paper is that these high impact contributions can be separated with high probability from each other by hashing them randomly into buckets, and collisions with less important data add only little noise \citep{CharikarCF04,Woodruff14,Mahabadi2020}.

Coresets admit batch-wise processing of data points using a black-box technique called Merge \& Reduce \citep{BentleyS80,GeppertIMS20,FeldmanSS20,CohenAddadWZ23}, and a lot of effort has been put recently into developing on-line algorithms that simulate $\ell_p$ norm subsampling in a data stream, when the input points are presented row-by-row \citep{ChhayaC0S20,CohenMP20,MunteanuOP22,WoodruffY23}. Dynamic data structures, allowing to remove points after their insertion \citep{FrahlingS05,FrahlingIS08,BravermanFLSY17}, are slightly less common in the coreset literature.

While the above models are often sufficient in practice, massively unordered and distributed data bases require handling so called \emph{turnstile} data streams \citep{Muthukrishnan05} that allow multiple additive updates to change single coordinates of a data matrix in an arbitrary order. Starting from an initial zero matrix $A=0$, data is represented as a sequence of updates of the form $(i,j,v)$ meaning that the previous value $A_{ij}$ is updated to $A_{ij} + v$. Note that this model can simply simulate (multiple) row- or column-wise updates and deletions as in the previous models. Allowing the full flexibility of {turnstile} data streams seems to lie in the domain of linear sketching algorithms, as most known turnstile streaming algorithms can be interpreted as linear sketches. Indeed, under certain conditions, linear sketching \citep{LiNW14,AiHLW16} is optimal for handling turnstile data streams.

Linearity provides a couple of useful properties. For instance in distributed systems, each computing node can calculate their own sketch $\Pi A_{(i)}$ and the final sketch representing the full data is obtained by summing all sketches $\Pi A= \Pi \sum_{i} A_{(i)} = \sum_{i} \Pi A_{(i)}$ at a central node. Linear sketches allow certain database operations to be applied in the sketch space. For instance, when a time varying signal is sketched at time instances $t_1 < t_2$, then the difference of the two sketches $\Pi A_{(t_2)}-\Pi A_{(t_1)}=\Pi A_{(t_1,t_2]}$ represents a sketch of all changes between the two time stamps. Associativity of matrix multiplication also enables projection operations in the sketch space since a sketch of projected data equals the projected sketch: $\Pi (A P) = (\Pi A) P$. Additionally, state of the art sketching techniques make heavy use of sparsity, which allows for fast updates with little, often constant or logarithmic overhead over the time spent on just reading the data. This is commonly referred to as input sparsity time or $\tilde O(\texttt{nnz}(A))$, where $\texttt{nnz}(A)$ denotes the number of non-zero entries in the representation of $A$.

For some problems, this flexibility comes at a price, as lower bounds for sketching $\ell_p$ related loss functions for $p>2$ indicate near linear $\Omega(n^{1-2/p}\log n)$ sketching size \citep{AndoniNPW13}, while subsampling can produce coresets of size $d^{O(p)}$ \citep{DasguptaDHKM09,WoodruffZ13,MunteanuOP22,WoodruffY23,WoodruffY23sens}. The situation is different for $1\leq p \leq 2$, where sketching is more powerful in compressing data. 

But recent research again indicates certain limitations. For logistic regression, data oblivious sketches were only known to give constant factor approximations until recently a first $(1+\eps)$-approximation was developed \citep{MunteanuOW23}, albeit with an exponential dependence on ${1}/{\eps}$. Similarly, a classic result \citep{Indyk06} on sketching the $\ell_1$ norm of vectors had $\exp({1}/{\eps})$ dependencies and this is likely necessary as indicated by impossibility results of \citet{CharikarCF04,LiW021,WangW22}. These seem to suggest that sketching cannot yield $(1+\eps)$ approximations for all queries below $\exp{(1/\eps)}$ or $\exp{(\Omega(\sqrt{d}))}$ size. However, we note that these impossibility results are derived under the assumption that the sketch must be taken as a final data approximation, and is not allowed to be post-processed, which is a major difference to our work.

We remark here that \citet{Indyk06} gave fully polynomial $(1\pm\eps)$-approximations for $\ell_p$ norms, using median operators that turn convex optimization problems to non-convex optimization problems in the sketch space. 
The considered sort of convex loss functions $f(Az)$ remains convex with respect to $z$ for any fixed dataset $A$ directly by rules of combining convex functions. In particular, if $A$ is replaced by any other fixed $A'$ such as a weighted subsample or a sketch, then $f_w(A'z)$ remains convex. It is probably more instructive to explain the source of non-convexity of previous $\ell_p$-norm sketches with $(1+\epsilon)$ guarantee within polynomial size. This came from the fact that for each query $z$, the estimate came from a different row $a'_i$ of $A'$ (namely the median row among all $|a'_i z|_p^p$). Now, imagine this as a dataset that is not fixed, but it is changing in a non-convex way for each query. The median technique is still useful for single estimations, but we avoid to use these methods for the final sketch, so as to preserve convexity and thus the efficient tractability of the optimization problem. 

Again, in contrast to sketching, sampling based coresets are known for $\ell_1$, and logistic regression within $\operatorname{poly}(d,{1}/{\eps},\log n)$ size and without affecting the efficiency of optimizing over the reduced data.
We thus ask the question if it is possible to get the best of the two worlds:

\begin{centering}\label{question}
    \textbf{Question 1:} \textit{Is it possible to obtain the full flexibility of turnstile streaming updates, and fully polynomial sketching/sampling size, while preserving a $(1\pm\eps)$ factor approximation, and convexity of the reduced problem?}
\end{centering}

In particular, we resolve the above question by developing a new algorithm for $\ell_p$ sampling over turnstile data streams.
\begin{definition}[$L_{p,p}$ sampling]\label{def:lpsampling}
    Let $A\in\mathbb{R}^{n\times d}$ with rows $a_i\in\mathbb{R}^d$, and $k\in\mathbb{N}$. An $L_{p,p}$ sampler is a turnstile streaming algorithm that returns a subset $S\subseteq [n]$ of size $S=\Theta(k)$, such that the probability that $S$ contains index $i$ is given by 
    $$\Pr[i\in S] \geq \min\left\{1, (1\pm\eps)\frac{k\|a_i\|_p^p}{\|A\|_p^p}\right\},$$
    where $\|A\|_p = (\sum_{ij} |A_{ij}|^p)^{1/p}$ denotes the entry-wise $p$ norm.
    Moreover, we call it an $\ell_p$ leverage score sampler, if the inclusion probabilities satisfy 
    \begin{align}\label{eq:overestim}
        \Pr[i\in S] \geq \min\left\{ 1,{k u_i^{(p)}}{} \right\},
    \end{align}
    where $u_i^{(p)}= \sup_{z \in \mathbb{R}^d\setminus\{0\}} \frac{|a_iz |^p}{\|Az \|^p_p}$ for $ i\in[n]$ are the $\ell_p$ leverage scores of $A$, see \cref{def:leveragescores}.
\end{definition}
We remark that the amount of overestimation in \cref{eq:overestim} translates into an increase in the sample size, and will thus be controlled by a constant that possibly depends on the dimension $d$, though not on the number of input points $n$.

\subsection{Our contributions}
We answer Question \ref{question} in the affirmative. We first develop an $L_{p,p}$ sampler that processes data presented in a turnstile data stream. After another stage of postprocessing, it identifies $\Theta(k)$ many indexes $i\in[n]$ whose inclusion probabilities satisfy the requirements of \cref{def:lpsampling}. We use known $\ell_p$ subspace embeddings that can be calculated in parallel while reading the turnstile data stream, and obtain a conditioning matrix $P\in\mathbb{R}^{d\times d}$. Post right-multiplication of the $L_{p,p}$ sampler sketch with $P$ yields a well-conditioned basis so that the sampler becomes an $\ell_p$ leverage score sampler. In addition to the row indexes $i\in S$, it returns slightly perturbed rows $\at_i \approx a_i$ such that $\|\at_i-a_i\|_p\leq O(\eps)\|a_i\|_p$, as well as accurate $(1\pm\eps)$-estimates on the sampling probabilities, which translate to $(1\pm\eps)$-approximations of the weights required by various importance sampling coreset constructions.

Our main contributions can be summarized as follows:
\begin{enumerate}
    \item[1)] We simplify and generalize the $L_{2,2}$ sampler of \citet{Mahabadi2020} to arbitrary $L_{p,p},$ for $p \in [1, 2]$, by developing new statistical test procedures on the sketch and providing a tailored analysis of our new algorithm.
    
    \item[2)] We show how our algorithm can be used to sample with probability approximately proportional to $\frac{\norm{a_i}_p^p}{\norm{A}_p^p }+1/n$ as well as $\frac{\norm{a_i}_p^p}{\norm{A}_p^p }+ \frac{\norm{a_i}_q^q}{\norm{A}_q^q}$ for distinct $p, q \in [1,2]$.
    \item[3)] We apply our algorithm to construct $\varepsilon$-coresets over turnstile data streams for a wide array of regression loss functions including linear-, ReLU-, probit-, and logistic regression, as well as their $\ell_p$ generalizations.
    \item[4)] We provide an experimental comparison to previous reduction algorithms for {$\ell_p$ and logistic regression} that were purely based \emph{either} on sketching \emph{or} subsampling.
\end{enumerate}
To our knowledge, we give the first algorithm that returns an $\varepsilon$-coreset for logistic regression and requires only polynomial space in the turnstile data stream setting, improving over the $\exp(1/\eps)$ dependence of \citet{MunteanuOW23}.
Given the impossibility results of \citep{LiW021,WangW22} mentioned above, it may seem surprising that we can circumvent exponential $1/\eps$ dependence. We can get around these limitations by first sketching obliviously, then post-processing the sketch and selecting the right information. These latter steps of 'cherry-picking' from the sketch are crucial to obtain our results. In particular, they violate pure obliviousness required by previous impossibility results.

\subsection{Comparison to related work}
Our work builds upon and extends the work of \citet{Mahabadi2020} on $L_{2,2}$ samplers to arbitrary $L_{p,p}$. The authors claimed that a generalization to other values of $p$ is possible, but out of scope of their paper, which focused on $L_{2,2}$, and the sum of $\ell_2$ norms, denoted $L_{1,2}$. We note that \citet{DrineasMMW12} gave a high level description for the case $p=2$ but required a second pass to collect the samples from the original data instead of extracting samples from the sketch. A similar $L_{1,1}$ sampling technique was developed in \citet{SohlerW11} in the context of $\ell_1$ regression. However, the paper gives only an outline of the proof and the full details apparently never appeared. Other classic literature on $\ell_p$ sampling, and recent advances improving the error of the subsampling distribution to zero, focused on the special case of sampling entries from a vector proportional to their $\ell_p$ norm contributions \citep{MonemizadehW10,AndoniKO11,JowhariST11,JayaramW21,JayaramWZ22}, rather than sampling rows of a matrix. We refer the interested reader to \citet{CormodeJ19} for a survey on this line of research.

The work of \citet{Mahabadi2020} requires generalizations of the well-known AMS \citep{AlonMS99} and CountSketch \citep{CharikarCF04} algorithms to estimate the Frobenius norm of their (transformed) input matrices and identify the rows that exceed a certain fraction thereof. Our techniques also rely on the CountSketch but the AMS sketch using Rademacher random variables is a special choice that does not allow to generalize beyond the case $p=2$. There exist alternatives for sketching $\ell_p$ norms via $p$-stable random variables, but these distributions are not expressible in closed form except for $p\in\{1,2\}$ and are cumbersome to analyze \citep{Indyk06,MaiMMRSW23}. On our quest for a unifying algorithm for all $p\in [1,2]$, we exploit the percentiles of norms sketched in independent repetitions of the Count\-Sketch data structure and do not require additional sketches to estimate the required thresholds. In particular, there is no special treatment across different values of $p\in[1,2]$, which simplifies our algorithms. We note that \citet{LiW16} developed similar ideas for a subroutine for estimating $\|A\|_p^p$ in special cases.
\begin{algorithm}[ht!]
  \caption{Finding $\ell_p$ heavy hitters.}\label{alg:findhh}
  \begin{algorithmic}[1]
    \Statex \textbf{Input:} data matrix $A \in \mathbb{R}^{n \times d}$ presented as a turnstile data stream, and parameters $s$, $r$ and $\varepsilon$.;
    \Statex \textbf{Output:} list $L\subseteq [n] \times \mathbb{R}^d$ of slightly perturbed rows of $A$ with large $\ell_p$ norms, each $(i, \at_i ) \in L $ satisfying $\norm{\at_i-a_i}_p \leq (\varepsilon/3) \norm{a_i}_p$; 
    \State  For $i \in [n]$ and $j \in [s]$ generate $h_{i, j}\in [r]$ uniformly at random;
    \State  For $i \in [n]$ and $j \in [s]$ generate a sign $\sigma_{i, j}\in \{-1, 1\}$ uniformly at random; 
    \State \textbf{//* sketching stage *//}
    \State  For $j \in [s]$ initialize $B_j \in \mathbb{R}^{r \times d}$ as $0$-matrix; 
        \For{$l=1\ldots N$}
            \State Let update $u_l$ be of the form $a_i=a_i+x_l$;
 	    \State For $j \in [s]$ set $B_{j, h_{i, j}}= B_{j, h_{i, j}}+ \sigma_{i, j} x_l$; 
        \EndFor
    \State \textbf{//* extraction stage *//}
    \State Let $L$ be an empty list ;
    \State Let $M_0:=M_0(A)$ be the $0.65$-percentile of the set $\{ \norm{B_{j, 1}}_p^p \mid j \in [s] \}$
    \For{$i \in [n]$} 
        \State For $j \in [s] $ denote $ \at_{i, j}=\sigma_{i, j} B_{j, h_{i, j}}$; 
 	\State Compute $v_i=\median_{j \in [s]} \norm{\at_{i, j}}_p^p $ ;
        \If{$v_i \geq (12/\varepsilon)^p M_0$} 
            \State Find $j \in [s]$ minimizing \\\qquad\qquad $\median_{j' \in [s]}\{ \norm{\at_{i, j}-\at_{i, j'}}_p^p \}$ ;
            \State Add $(i, \at_{i, j})$ to $L$ ;
        \EndIf
    \EndFor
    \State RETURN $L$;
  \end{algorithmic}
\end{algorithm}

As mentioned in the introduction, there are a lot of works on subsampling based on $\ell_p$ row norms, in particular using $\ell_p$ leverage scores \citep{DrineasMM2006a,DrineasMMW12,DasguptaDHKM09,MolinaMK18,MunteanuSSW18,MunteanuOP22,WoodruffY23sens,FrickKM24}, and related measures such as Lewis weights \citep{CohenP15,MaiMR21,WoodruffY23}.
Many of the above sampling algorithms can be handled in row-wise insertion data streams using a standard technique called Merge \& Reduce \citep{BentleyS80,GeppertIMS20,FeldmanSS20,CohenAddadWZ23}, or via online algorithms \citep{ChhayaC0S20,CohenMP20,MunteanuOP22,WoodruffY23}.

Our work extends $\ell_p$ leverage score sampling to the most flexible and dynamic setting of turnstile data streams. We simulate $\ell_p$ norm sampling algorithms by means of first sketching the data obliviously. After postprocessing the sketches, they allow us to extract an approximate sample that satisfies the coreset guarantee. Hereby, we provide a general framework that allows $\ell_p$ leverage score sampling based coreset constructions to be simulated almost generically with little overhead compared to the off-line construction. The approximate weights and probabilities are readily of such form as to provide $(1\pm O(\eps))$ factor guarantees. Thus, if we had access to the original data rows once again, our sampler would apply in a black-box manner to any off-line construction that uses $\ell_p$ leverage score sampling.
\begin{algorithm}[ht!]
  \caption{$\ell_p$ norm sampling.}\label{alg:turnstilesampling}
  \begin{algorithmic}[1]
    \Statex \textbf{Input:} data matrix $A \in \mathbb{R}^{n \times d}$ presented as a turnstile data stream, matrix $P \in \mathbb{R}^{d \times d}$ (identity matrix $P=I_d$ if not specified), and parameters $k$, $s$ and $r$.;
    \Statex \textbf{Output:} a sample $S$ consisting of tuples $(i, \at_i, w_i)$ where for $i \in [n]$, $\at_i \approx a_i $ and $w_i$ is roughly the inverse sampling probability of $i$; 
    \State  For $i \in [n]$ generate independent scaling factors $t_i \in (0, 1)$ uniformly at random;
    
    \State Let $A'=TA$ be the matrix where the rows $a_i$ of $A$ are multiplied by $t_i^{-1/p}$;
    \State Forward turnstile updates for $A'$ to \cref{alg:findhh};

    \State For $j \in [s]$ set $B_j=B_jP$ in \cref{alg:findhh}; 
    \State Let $L$ be the output of \cref{alg:findhh};
    \State Let $S_k$ be the set of $k$ elements of $L$ with the largest $\ell_p$ norms;
    \State Set $\alpha=\min_{i \in S_{k}} \norm{\at_i'}_p^p$;
    
    \State For $(i, \at_i') \in L$ we set $\at_i= \at_i't_i^{1/p}$ ; 
    \State Set $S=\{ (i, \at_iP^{-1},1/\min\{ 1, \frac{\norm{\at_i}_p^p}{\alpha} \}) \mid \norm{\at_i'}_p^p\geq\!\alpha \}$ ;
    \State RETURN $S$;
  \end{algorithmic}
\end{algorithm}
There is only one \emph{additional} requirement for full turnstile processing, where after seeing the data once, we only have access to the sketches instead of the original data. 
Namely, the loss function needs to be \emph{robust} to the small perturbations of the original rows returned by our algorithm. To provide a wide array of applications as a corollary of our methods, we prove the robustness property for wide classes of functions such as the linear regression loss, ReLU loss, logistic regression, probit regression, and their $\ell_p$-generalizations.

In particular, we give the first turnstile streaming algorithm for logistic regression that achieves a $(1+\eps)$-approximation with fully polynomial dependence on the input dimensions, improving over the $O(1)$-factor oblivious sketching algorithms of \citet{MunteanuOW21,MunteanuOW23}, and over the $(1+\eps)$-approximation of \citet{MunteanuOW23} that had an $\exp(1/\eps)$ dependence in its sketching dimension.
We point out that their sketches were directly the final approximations and input to the optimization algorithm, in which case the aforementioned impossibility results \citep{LiW021,WangW22} apply. To circumvent these limitations, our new algorithm uses oblivious sketches as intermediate data structures from which we extract an approximate coreset in a postprocessing stage. This might seem minor, but is actually a crucial point that allows to get below the exponential dependence and yields sketches and coresets of fully polynomial size with respect to all input parameters.

\section{Algorithms and technical overview}
As we have mentioned above, the sketching algorithm is similar to previous $\ell_p$ samplers using the CountSketch and randomized scaling. It is usual in this line of research to analyze the algorithms under the assumption of full independence of generated random numbers. Since this assumption implies $\Omega(n)$ space complexity, we will provide the necessary arguments to reduce this overhead to only a $\log(n)$ factor at the end of the section.

Our sketching matrix can be written as a concatenation of a diagonal $n \times n$ matrix $T=\operatorname{diag}(t_1^{-1/p},\ldots,t_n^{-1/p})$, where $t_i\sim U(0,1)$ and a CountSketch $S$ with $r$ rows and $s$ independent repetitions. Each repetition $S_j, j\in [s]$ is an $r \times n$ matrix with one single non-zero entry indexed by a uniform random value $h_{i,j}\in [r]$ in each column $i\in[n]$, that takes a uniform value $\sigma_{i,j}\in \{-1,1\}$. Each sketch of an input matrix $A\in \mathbb{R}^{n\times d}$ is then calculated by $B_j = \Pi_j A=S_j T A,$ for $j\in [s]$. The exact update procedure is given in \cref{alg:findhh} resp. \cref{alg:turnstilesampling}.

The idea behind the CountSketch algorithm (\cref{alg:findhh}) is that there cannot be too many large entries $i\in[n]$ and thus they get separated with good probability when they are mapped to the target coordinates by the functions $h$. Collisions still happen, but only with small entries, whose contributions become even smaller by summing them using random signs $\sigma$. This ensures that very large entries $a_i$ are approximately preserved not only with respect to their norm but also regarding their orientation, as their sketched approximations $\at_i$ after bringing them back to the original scale and sign satisfy $$\|\at_i-a_i\|_p\leq O(\eps) \|a_i\|_p.$$

The purpose of the uniform random values $t_i \sim U(0,1)$ is to randomly upscale the contributions to become heavy coordinates with probability proportional to our desired target $\ell_p$ distribution. The idea is illustrated by the fact that $$\Pr\left[\bigg\|\frac{a_i}{t_i^{1/p}}\bigg\|^p_p\geq \frac{\|A\|_p^p}{k}\right] = \Pr\left[t_i \leq \frac{k\|a_i\|_p^p}{\|A\|_p^p}\right] = \frac{k\|a_i\|_p^p}{\|A\|_p^p},$$ which is (up to clipping at $1$) exactly the right distribution for sampling $\Theta(k)$ elements proportional to their $\ell_p$ norm contribution with good probability.

Since $\|A\|_p^p$ is not easy to calculate over a turnstile data stream, previous work approximated the required threshold from an AMS sketch or using a sketch with i.i.d. Cauchy entries, i.e., specific methods designed for the special choices of $p\in\{1,2\}$. The Cauchy sketch is in principle extendable using $p$-stable distributions, which exist for $p\in[1,2]$, but except for the special cases $p\in \{1,2\}$, they do not admit closed form expressions and are cumbersome to analyze \citep{Indyk06,MaiMMRSW23}. We thus follow a different statistical idea for extracting the relevant information directly from the CountSketch.

\subsection{Idea 1: thresholding the CountSketch}
To calculate the required threshold, we select an arbitrary row/bucket out of the independent repetitions of the Count\-Sketch. W.l.o.g., we simply take the first bucket $B_{j,1}, j\in [s]$, and we let $M_0$ be the $.65$-percentile of the realized $\ell_p^p$ norm of the sketched buckets, i.e., of the set $\{\|B_{j,1}\|_p^p\mid j\in[s]\}$. The idea behind this value is that it can be upper bounded in terms of $M=\sum_{i \in S_R} \norm{a_i}_p^p$, the $\ell_p^p$ norm of the tail, ignoring the largest $r/20$ rows in $\ell_p^p$ norm, divided by the number of rows $r$ of the sketch. $M_0$ can also be lower bounded by the theoretical $.6$-percentile of the $\ell_p^p$ norm contributions of the buckets in the CountSketch, i.e., by $M'=\inf \{ w \in \mathbb{R}_{\geq 0} \mid P(\|B\|_p^p \leq w)\geq 0.6 \}$. With these quantities in place and choosing sufficiently large number of repetitions $s\gtrsim \log(n/\delta)$, we can give the following bound $$M' \leq M_0 \leq 4 M/r.$$
A direct analysis using $M_0$ is not possible but we can estimate this threshold by theoretical upper and lower bounds. The upper bound is used to show that all heavy elements with $\|a_i\|_p^p\gtrsim M/(\eps^p r)$ are included in the sample. The lower bound $M'$ allows us to prove that the elements whose median $\ell_p^p$ norm estimates $v_i$ in the sketch are large w.r.t. this threshold, are actually large in their original magnitude. It can further be shown for these elements that their median estimates are $(1\pm\eps)$-approximations to their true $\ell_p^p$ norm and thus that they are in the set of returned large elements.
Finally, we show that at least half of the sketches not only preserve the norm up to $(1\pm\eps)$ but also preserve the orientation up to a small relative error perturbation, i.e., $S_i:=\{j \in [s] \mid \norm{\at_{i, j}-a_i}_p\leq \varepsilon\|a_i\|_p/9\} \geq s/2$. Therefore, taking the repetition that minimizes the median $\ell_p$ distance to all other repetitions and applying the triangle inequality over the original element, yields an approximation $\at_{i}$ that is close to the original element, i.e., it satisfies $\|\at_{i}-a_i\| \leq (\eps/3)\|a_i\|_p$. 

Now, with these properties in place, we are able to prove that if the number of rows $r$ and repetitions $s$ are chosen sufficiently large, then all the items returned by the algorithm satisfy the desired approximation guarantees.
Overall, we conclude that all sufficiently large elements have an approximate representative in the output and all elements in the output are sufficiently close approximations of their respective original input points.

\begin{theorem}\label{thm_main:findhh}
    Let $\eps,\delta\in(0,1/20],\gamma\in(0,1)$. Let $L$ be the list of tuples in the output of \cref{alg:findhh}. Further let $S_R(r/20)$ be the subset of rows excluding the $r/20$ largest $\ell_p$ norms and let $M=\sum_{i \in S_R} \norm{a_i}_p^p$.
    If $r = 8\gamma^{-1} \cdot (12/\varepsilon)^p$ and $s \geq 3\ln(6n /\delta)/0.025^3$ then with probability at least $1-\delta$, the following properties hold: for any element $(i, \at_i) \in L $ it holds that $\norm{\at_i-a_i}_p\leq (\varepsilon/3)\norm{a_i}_p$ and $\norm{\at_i}_p^p=(1 \pm \varepsilon)\norm{a_i}_p^p$.
    Further, for any $i \in [n]$ with $\norm{a_i}_p^p \geq \gamma M$ it holds that $i \in L$.
\end{theorem}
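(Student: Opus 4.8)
The plan is to reduce the whole theorem to a two-sided control of the threshold $M_0$, the $0.65$-percentile of $\{\|B_{j,1}\|_p^p\}_{j\in[s]}$, via the sandwich $M' \le M_0 \le 4M/r$, where $M=\sum_{i\in S_R(r/20)}\|a_i\|_p^p$ is the tail mass and $M'=\inf\{w\ge 0:\Pr[\|B\|_p^p\le w]\ge 0.6\}$ is the population $0.6$-quantile of the $\ell_p^p$-mass of one sketch bucket; given this, both conclusions follow by triangle-inequality bookkeeping. Throughout I would write $\at_{i,j}=\sigma_{i,j}B_{j,h_{i,j}}=a_i+\eta_{i,j}$ with collision noise $\eta_{i,j}=\sigma_{i,j}\sum_{i'\ne i,\ h_{i',j}=h_{i,j}}\sigma_{i',j}a_{i'}$, so that $\|\at_{i,j}-a_i\|_p=\|\eta_{i,j}\|_p$ and $\|\at_{i,j}\|_p$ differs from $\|a_i\|_p$ by at most $\|\eta_{i,j}\|_p$. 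The single analytic ingredient that treats all $p\in[1,2]$ uniformly is the moment bound $\E\|\sum_k\sigma_k v_k\|_p^p\le\sum_k\|v_k\|_p^p$ for i.i.d.\ signs: Khintchine's inequality applied coordinatewise with upper constant $1$ (valid since $p\le 2$) gives $\E|\sum_k\sigma_k c_k|^p\le(\sum_k c_k^2)^{p/2}$, and $(\sum_k c_k^2)^{p/2}\le\sum_k|c_k|^p$ by subadditivity of $t\mapsto t^{p/2}$; summing over coordinates proves it.

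For the sandwich: a fixed bucket contains one of the $r/20$ heaviest rows with probability at most $(r/20)/r=1/20$, and conditioned on containing none of them the moment bound yields $\E[\|B_{j,1}\|_p^p\mid\text{no heavy row}]\le M/r$; Markov then gives $\Pr[\|B_{j,1}\|_p^p>4M/r]\le 1/20+1/4=0.3$, so $M_0$ can exceed $4M/r$ only if more than $0.35 s$ buckets do, which by a Chernoff bound has probability $\exp(-\Omega(s))$. Symmetrically $\Pr[\|B_{j,1}\|_p^p<M']\le 0.6$ by definition of $M'$, so $M_0<M'$ only if more than $0.65 s$ buckets fall below $M'$, again probability $\exp(-\Omega(s))$. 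Finally, for each fixed $i$ the noise $\eta_{i,j}$ is (up to the harmless outer sign) a uniformly random bucket of the CountSketch of $A$ with row $i$ deleted, and deleting one row changes a given bucket's $\ell_p^p$-mass only when that row hashed there; coupling accordingly, $\|\eta_{i,j}\|_p^p$ equals $\|B_{j,1}\|_p^p$ with probability $\ge 1-1/r$, hence $\Pr[\|\eta_{i,j}\|_p^p\le M']\ge 0.6-1/r\ge 0.59$ and, by a Chernoff bound over the $s$ repetitions, $G_i:=\{j\in[s]:\|\eta_{i,j}\|_p^p\le M'\}$ has $|G_i|\ge 0.55 s$ except with probability $\exp(-\Omega(s))$. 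With the stated $s$ each failure probability is at most $\delta/(6n)$, so a union bound over the $O(n)$ events leaves the sandwich and all sets $G_i$ in force with probability $\ge 1-\delta$; I condition on this.

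The two conclusions then come out of bookkeeping. If $\|a_i\|_p^p\ge\gamma M$, then $M'\le M_0\le 4M/r=\tfrac12\gamma M(\eps/12)^p\le\tfrac12\|a_i\|_p^p(\eps/12)^p$, so every $j\in G_i$ has $\|\eta_{i,j}\|_p\le(\eps/12)\|a_i\|_p$ and hence $\|\at_{i,j}\|_p^p\ge(1-\eps/12)^p\|a_i\|_p^p\ge\tfrac12\|a_i\|_p^p\ge\tfrac12\gamma M\ge(12/\eps)^p M_0$; since $|G_i|>s/2$ this gives $v_i\ge(12/\eps)^p M_0$, i.e.\ $i\in L$. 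Conversely suppose $v_i\ge(12/\eps)^p M_0$ and set $C=12/\eps$. Since $v_i$ is a median, at least $s/2$ indices $j$ have $\|\at_{i,j}\|_p^p\ge C^p M_0$; as $|G_i|>s/2$ some such $j_0$ lies in $G_i$, whence $\|a_i\|_p\ge\|\at_{i,j_0}\|_p-\|\eta_{i,j_0}\|_p\ge(C-1)M_0^{1/p}$ and therefore $M'\le M_0\le\|a_i\|_p^p/(C-1)^p$. Thus $\|\eta_{i,j}\|_p\le\|a_i\|_p/(C-1)\le(\eps/11)\|a_i\|_p$ for all $j\in G_i$, so $\|\at_{i,j}-\at_{i,j'}\|_p\le(2\eps/11)\|a_i\|_p$ for $j,j'\in G_i$; since $|G_i|>s/2$ the repetition $j^\star$ chosen by \cref{alg:findhh} satisfies $\median_{j'}\|\at_{i,j^\star}-\at_{i,j'}\|_p^p\le(2\eps/11)^p\|a_i\|_p^p$, and intersecting the $\ge s/2$ indices attaining this with $G_i$ gives $j_0'\in G_i$ with $\|\at_{i,j^\star}-\at_{i,j_0'}\|_p\le(2\eps/11)\|a_i\|_p$. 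By the triangle inequality $\|\at_i-a_i\|_p=\|\at_{i,j^\star}-a_i\|_p\le(3\eps/11)\|a_i\|_p\le(\eps/3)\|a_i\|_p$, and hence $\|\at_i\|_p=(1\pm 3\eps/11)\|a_i\|_p$, which gives $\|\at_i\|_p^p=(1\pm\eps)\|a_i\|_p^p$ because $(1\pm 3\eps/11)^p$ lies in $[1-\eps,1+\eps]$ for $p\in[1,2]$, $\eps\le 1/20$.

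The main obstacle is the upper half of the sandwich, $M_0\le 4M/r$: the \emph{mean} of $\|B_{j,1}\|_p^p$ cannot be controlled by $M/r$ because a single heavy row ruins it, so it is essential to combine a constant-probability bucket event (no heavy row present) with Markov and with the fact that the algorithm reads a percentile rather than a mean, and to route the $\ell_p^p$ moment through Khintchine and subadditivity so that the constant is independent of $p\in[1,2]$ (for $p=2$ this step was an identity). The coupling that equates $\|\eta_{i,j}\|_p^p$ with $\|B_{j,1}\|_p^p$ up to a $1/r$ loss is the other non-routine ingredient; once the sandwich and the sets $G_i$ are available, everything else is the triangle-inequality accounting above plus tuning the numerical constants (the factor $(12/\eps)^p$ in the test, the $0.65$ and $0.6$ percentiles, and $r=8\gamma^{-1}(12/\eps)^p$) so that all the slack is absorbed into the promised $\eps/3$ and $1\pm\eps$ guarantees.
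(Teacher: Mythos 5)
Your proof is correct and follows essentially the same route as the paper's: the Khintchine/subadditivity bound for $p\in[1,2]$, the two-sided percentile sandwich $M'\le M_0\le 4M/r$ via a no-heavy-row-plus-Markov event and Chernoff over the $s$ repetitions, a per-index control of the collision noise $\|\eta_{i,j}\|_p^p$ by comparison to the distribution of a random bucket up to the probability-$1/r$ event $\{i\in B_{j,1}\}$, and finally triangle-inequality bookkeeping with the sets $G_i$ of good repetitions. The only cosmetic difference is that the paper routes the per-index noise bound through an explicit auxiliary $0.575$-percentile $M''$ and phrases the large-norm implication by contraposition, whereas you use a coupling and argue directly from the existence of $j_0\in G_i$ with a large sketched norm; these are interchangeable.
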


\subsection{Idea 2: controlling random rescaling by means of the harmonic series}
For the sake of presenting the high level idea, we fix $p=1$ for the moment and consider the matrix $A \in \mathbb{R}^{n \times 1}$ consisting of $n$ copies of the row $a_i=1$.
If we multiply each row with $t_i^{-1}$, where $t_i \sim U(0, 1)$ are drawn uniformly at random, then the new matrix $A'=TA$ with rows $a'_i= a_i/t_i$ consists roughly of the entries $n, n/2, n/3 , \dots , n/(n-1), 1$ in expectation.
Summing over these entries forms a harmonic series that yields $\norm{A'}_1= \Theta(n \log(n))$ and the $k$ largest elements of $A'$ are bounded from below by $n/k$.

In other words, the previous threshold becomes $M=\Theta( n \log(n))$, i.e., it increases by a $\log n$ factor and we aim to find all rows with $\ell_1$ norm greater or equal to $n/k$.
If we now apply \cref{alg:findhh} to $A'$ with $r= O( k \log(n) /\varepsilon)$ then all elements with $a_i'\geq {n/k}=\Theta(M/(k\log(n)))$ will be in $L$ with high probability. The challenge is to control the randomness of the variables $t_i$ since by the uniform distribution they have a high variance, and to generalize the idea to arbitrary non-uniform instances and to different
$p\in[1,2]$.

In our detailed analysis, \cref{alg:turnstilesampling} is slightly modified by applying \cref{alg:findhh} twice in parallel to avoid dependencies between the threshold $\alpha$ and the final sample $S$.\footnote{See \cref{mod: alg2} for details.} The main purpose of this modification is to keep the analysis clean and simple while running time and space complexities remain bounded to within a factor of two. The plain algorithm as presented here in \cref{alg:turnstilesampling} is likely to have the same properties up to small constant factors but its analysis would require additional technicalities that distract from understanding the main ideas behind the algorithm. Moreover, we assume that the matrix $P$ equals the default choice of the identity matrix $I\in\mathbb{R}^{d\times d}$; other choices are discussed later in the applications of \cref{sec:applications}.

We summarize the properties of the sample returned by \cref{alg:turnstilesampling} as follows:

\begin{theorem}\label{thm_main:alg2}
    If we apply the modified version of \cref{alg:turnstilesampling} (see \cref{mod: alg2}) with $0 < \varepsilon,\delta \leq 1/20$, $k \geq 160\ln(12/\delta)$, $r \geq 32 k\ln(n) \cdot (72/\varepsilon)^p$, and $s \geq 3\ln(36n/\delta)/0.025^3$, then with probability at least $1- \delta$ it holds that
    \begin{itemize}
        \item[1)] $|S| \in [k, 2k]$,
        \item[2)] index $ i \in S$ is sampled with probability\\[5pt]\hspace*{33pt}$p_i:= P(i \in S) \geq \min \left\{1, \frac{k \norm{a}_p^p}{\norm{A}_p^p}\right\},$
        \item[3)] if $i \in S$ then $\norm{\at_i-a_i}_p \leq (\varepsilon/3) \norm{a_i}_p $,
        \item[4)] if $i \in S$ then $w_i =(1 \pm \varepsilon)\frac{1}{p_i}$,
        \item[5)] $\sum_{i \in S} w_i \norm{\at_i}_p^p= (1 \pm \varepsilon)\norm{A}_p^p$.
    \end{itemize}
\end{theorem}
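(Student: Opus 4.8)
The plan is to reduce Theorem~\ref{thm_main:alg2} to Theorem~\ref{thm_main:findhh} by understanding what happens after the random rescaling $A' = TA$. First I would set up the event on which we condition: let $\mathcal{E}_1$ be the event that $\|A'\|_p^p = \sum_i \|a_i\|_p^p/t_i \le C \ln(n) \|A\|_p^p$ for a suitable constant $C$, and let $\mathcal{E}_2$ be the event that the $k$ largest values $\|a_i'\|_p^p = \|a_i\|_p^p/t_i$ are each at least $\tfrac{1}{2k}\|A\|_p^p$ (i.e. $\alpha \ge \tfrac{1}{2k}\|A\|_p^p$). Both are concentration statements about sums/order statistics of the independent variables $\|a_i\|_p^p/t_i$: the upper tail of $\sum_i \|a_i\|_p^p/t_i$ is heavy (only a weak first-moment/Markov bound is available since $\E[1/t_i]=\infty$), so one truncates at level $n\|a_i\|_p^p$ and bounds the truncated sum by $O(\ln n)\|A\|_p^p$ in expectation, while the event that fewer than $k$ indices exceed the threshold is a sum of independent indicators with mean $\ge$ constant times $k$, controlled by a Chernoff bound — this is why the hypothesis $k \ge 160\ln(12/\delta)$ appears. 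On $\mathcal{E}_1 \cap \mathcal{E}_2$, the "tail mass" $M' = \sum_{i\in S_R(r/20)}\|a_i'\|_p^p \le \|A'\|_p^p \le C\ln(n)\|A\|_p^p$, so with $r \gtrsim k\ln(n)(72/\eps)^p$ the threshold $\gamma M'$ with $\gamma = r^{-1}\cdot 8\cdot(12/\eps)^p\cdot$(something) used inside \cref{alg:findhh} sits below $\tfrac{1}{2k}\|A\|_p^p \le \alpha$, which is exactly the "$\norm{a_i}_p^p\geq\gamma M$" condition needed to invoke Theorem~\ref{thm_main:findhh}.

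Next I would verify the five conclusions in turn. For (3), the perturbation bound $\|\at_i - a_i\|_p \le (\eps/3)\|a_i\|_p$: since $\at_i = \at_i' t_i^{1/p}$ and $a_i = a_i' t_i^{1/p}$, the bound $\|\at_i' - a_i'\|_p \le (\eps/3)\|a_i'\|_p$ from Theorem~\ref{thm_main:findhh} (applied to $A'$) scales through homogeneously — this is immediate. For (2), the inclusion probability: $i\in S$ iff $\|\at_i'\|_p^p \ge \alpha$; using that $\|\at_i'\|_p^p = (1\pm\eps)\|a_i'\|_p^p = (1\pm\eps)\|a_i\|_p^p/t_i$ and that (because of the two-parallel-runs modification) $\alpha$ is independent of the run producing $\at_i'$, we get $\Pr[i\in S] \ge \Pr[t_i \le (1-\eps)\|a_i\|_p^p/\alpha] \wedge \Pr[i\in L]$; since $\alpha \le \tfrac{1}{k}\|A\|_p^p$ on the good event and every such $i$ is heavy enough to be in $L$ by Theorem~\ref{thm_main:findhh}, this gives $\Pr[i\in S]\ge \min\{1, k\|a_i\|_p^p/\|A\|_p^p\}$ up to the $\eps$ factor, which can be absorbed by rescaling constants — this is the "harmonic series" idea of Section 2.3 made rigorous, and it is the step I expect to require the most care, because the event "$i\in L$" and the value of $\alpha$ and the scaling variable $t_i$ must be carefully decoupled, which is precisely the reason for the modified algorithm footnoted as \cref{mod: alg2}.

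For (4), $w_i = 1/\min\{1,\|\at_i\|_p^p/\alpha\}$ by construction, and $\|\at_i\|_p^p = \|\at_i'\|_p^p t_i = (1\pm\eps)\|a_i'\|_p^p t_i = (1\pm\eps)\|a_i\|_p^p$; combined with $p_i = \min\{1, (1\pm\eps)\|a_i\|_p^p/\alpha\}$ from the analysis of (2) — note $\alpha$ here plays the role of $\|A\|_p^p/k$ — one checks $w_i = (1\pm O(\eps))/p_i$, then re-parametrizes $\eps$ to clear the constant. For (5), $\sum_{i\in S} w_i\|\at_i\|_p^p$: substituting $w_i \approx 1/p_i$ and $\|\at_i\|_p^p \approx \|a_i\|_p^p$, this is a weighted sum of the form $\sum_{i\in S}\|a_i\|_p^p/p_i$, an importance-sampling estimator of $\|A\|_p^p = \sum_{i\in[n]}\|a_i\|_p^p$; I would split into the "definitely sampled" heavy indices (where $p_i=1$, contributing exactly, deterministically) and the light indices, bounding the variance of the latter via the standard $\sum_i p_i(1-p_i)(\|a_i\|_p^p/p_i)^2 \le \sum_i \|a_i\|_p^4/p_i \le \max_i(\|a_i\|_p^p/p_i)\cdot\|A\|_p^p \le (\alpha/(1-\eps))\|A\|_p^p \le O(\|A\|_p^{2p}/k)$, so a Chebyshev (or Bernstein, using $|S|\le 2k$) argument gives a $(1\pm\eps)$ relative error once $k\gtrsim \eps^{-2}$ — here I would either strengthen the stated hypothesis on $k$ or note that $\eps$ has been rescaled accordingly. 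Finally (1), $|S|\in[k,2k]$: the lower bound $|S|\ge k$ holds deterministically since $S \supseteq S_k$ (every element of $S_k$ has $\|\at_i'\|_p^p \ge \alpha$), and the upper bound $|S|\le 2k$ follows from $\mathcal{E}_2$-type concentration — the expected number of $i$ with $\|\at_i'\|_p^p \ge \alpha$ is, by the same $t_i$-calculation, about $k$, and a Chernoff bound on this sum of near-independent indicators (again decoupled via the modified algorithm) keeps it below $2k$ with probability $1-\delta/$poly, using $k\ge 160\ln(12/\delta)$. Throughout, a union bound over $\mathcal{E}_1$, $\mathcal{E}_2$, the failure event of Theorem~\ref{thm_main:findhh} (which already costs $\delta$), the upper-bound-on-$|S|$ event, and the variance event for (5) accounts for the $36n/\delta$ and $12/\delta$ factors in the hypotheses.
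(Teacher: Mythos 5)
Your proposal follows essentially the same route as the paper: truncate the heavy\hyp{}tailed rescaled norms $\norm{a_i}_p^p/t_i$ to control the tail mass by $O(\ln n)\norm{A}_p^p$, use Bernstein/Chernoff on the count of indices exceeding the threshold (this is where $k\geq 160\ln(12/\delta)$ enters), decouple $\alpha$ from the sample via the two parallel runs, invoke \cref{thm_main:findhh} on $A'$ with the inflated $r$, get item 3) by homogeneity of the scaling, and prove item 5) by the standard importance\hyp{}sampling variance bound. Your observation that item 5) implicitly needs $k\gtrsim \eps^{-2}\ln(1/\delta)$ is correct and in fact applies equally to the paper's own Bernstein step, so flagging the rescaling is the right instinct.

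Two local points need repair. First, your event $\mathcal{E}_1$ as literally stated --- $\norm{A'}_p^p\leq C\ln(n)\norm{A}_p^p$ --- fails with \emph{constant} probability: with probability about $1-1/e$ some $t_i<1/n$, and conditioned on that the corresponding $\norm{a_i'}_p^p$ is unbounded, so no high\hyp{}probability bound on the full norm of $A'$ exists. The correct formulation (which the paper uses in \cref{lem: A'Mound}) bounds only the tail mass $M(A',3k)$ after excluding both the top $O(k)$ rows of $A'$ and the at most $k/2$ indices with $t_i<1/n$; on the complement one has $t_i>\max\{1/n,\norm{a_i}_p^p/u\}$, which gives not only the $O(\ln n)$ expectation but also the variance bound $\E(X_i^2)\leq u\norm{a_i}_p^p$ needed for Bernstein --- a first\hyp{}moment/Markov bound alone would only yield constant failure probability, not $\delta$. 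Second, in the \emph{modified} algorithm the threshold $\alpha$ comes from the first run while $S$ is extracted from the second, so $|S|\geq k$ is not deterministic; both sides of $|S|\in[k,2k]$ are probabilistic and follow from the same concentration of $|\{i:\norm{\at_i'}_p^p\geq\tilde N(k_\alpha)\}|$ around $k_\alpha\in[(10/8)k,(14/8)k]$ that you already invoke for the upper bound.
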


The first item ensures that the sample size will be within a constant factor to the required size $k$.\footnote{Note that the plain \cref{alg:turnstilesampling} returns exactly $k$ elements, which is desirable for our experiments with fixed subsample sizes.} The second item ensures that the marginal sampling probabilities satisfy the right distribution of \cref{def:lpsampling}. The third item yields that each sample is a close approximation of their corresponding original input point. The fourth item ensures that the weight corresponds up to $(1\pm\eps)$ to the inverse inclusion probability, which is required to obtain an unbiased estimate of a sum by their weighted importance subsample. Finally, item five shows that the weighted sum over $\ell_p^p$ norms gives an $(1\pm\eps)$ estimate for the entry-wise $\ell_p^p$ norm of the full original data.

The proof of \cref{thm_main:alg2} is subdivided into several technical lemmas. The full details are in \cref{mod: alg2}. Here, we provide a high level overview:

First, we determine the expected norm of the $k$-th largest row of $A'$. Note that $\norm{a_i'}\geq \norm{a_i}$.
Instead of assuming that $ \norm{a_i}_p^p \geq \norm{A}_p^p/k$, we define $A(k) \in \mathbb{R}^{n \times d}$ to be the truncated matrix that we get by scaling down the largest rows of $A$ so that all rows $a_i(k)$ of $A(k)$ satisfy $\norm{a_i(k)}_p^p \geq \norm{A(k)}_p^p/k $.
The exact value of $\norm{a_i}_p^p$ does not matter but the analysis becomes more complicated for very large values.
We use this to show that rows with $\norm{a_i}_p^p \geq \norm{A}_p^p/k \geq \norm{A(k)}_p^p/k$ remain large rows after multiplying with $t_i$.

After truncating the large rows of $A'$ in this way, we show that the total sum $M''=\sum_{i\in S_R(r/20)}\norm{a_i'}_p^p$, excluding the largest contributions is small enough to guarantee that all rows of $A'$ with the $k$ largest norms are in $L$. Note that a $\gamma$ fraction of $M''$ serves as a threshold for the event $i\in L$ in \cref{thm_main:findhh}, so we would like $M''$ to be not much larger than the original $M$.

When proving that this is indeed the case, we need to take care of one complication. Namely, the expected value of 
$\norm{a_i'}_p^p=\norm{a_i}_p^p/t_i$ is unbounded.
However, after truncation, we know that $t_i>\max\{ 1/n,\norm{a_i}_p^p/u \}$ for some $u \in \mathbb{R}_{\geq 0}$, which enables to bound the expected value of $ \norm{a_i'}_p^p$ by $\norm{a_i}_p^p \log(n)$ and the variance by $ 2u \norm{a_i}_p^p$.

Using these properties, we can prove that the total contribution of the elements that are not large, is bounded by $M'' = O(M \log(n))$ as already indicated in the introductory example.
Then, we show that we can make the same analysis work up to further $(1\pm\eps)$ errors when we only have access to the sketched approximations $\at_i'$ instead of the exact values of $a_i'$.
Finally, we approximate the sampling probabilities, whose inverses serve as $(1\pm\eps)$ approximate weights.
Combining these additional uncertainties with the properties of \cref{alg:findhh} provided in \cref{thm_main:findhh}, we conclude the proof of \cref{thm_main:alg2}.

\subsection{Sublinear space with logarithmic overhead}
The hash functions denoted by $h$ as well as the random signs $\sigma$ admit random variables of bounded independence, for which hashing based random number generators are available that require only a seed of size $O(\log n)$ and are able to produce the entries instantly when they are required \citep{AlonBI86,AlonMS99,Dietzfelbinger96,RusuD07}. Derandomization of the random scalars $t_i$, as well as other random variables used in the applications of the next section, seems more complicated. To this end, we use in a black-box manner, a standard psedorandom number generator of \citet{Nisan92} that also produces its random numbers on the fly as required and uses only polylogarithmic overhead to simulate a polynomial amount of independent random bits required in our analysis.

\begin{proposition}[\citealt{Nisan92}, cf. \citealt{JayaramWZ22}]
    Let $\mathcal A$ be an algorithm that uses $S = \Omega(\log n)$ space and $R$ random bits. Then there exists a pseudorandom number generator for $\mathcal A$ that succeeds with high probability and runs within $O(S \log R)$ bits.
\end{proposition}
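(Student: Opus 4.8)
The plan is to recognize this statement as Nisan's pseudorandom generator for space-bounded computation, and to prove it by (i) modeling $\mathcal A$ as a small-width, read-once branching program in its random bits, (ii) exhibiting Nisan's recursive hash-based generator, and (iii) checking that it can be unfolded online in the stated space. Fix the input (the data stream together with the fixed matrix $A$) and view $\mathcal A$ purely as a function of its random string $y=(y_1,\dots,y_R)$, read in a fixed canonical order; here the canonical order is "block $i$ = the bits describing $t_i$, the bounded-independence seeds for $h_{i,\cdot}$ and $\sigma_{i,\cdot}$", laid out for $i=1,\dots,n$. At any step the working memory is one of at most $2^S$ configurations, so $\mathcal A$ (and likewise the predicate "$\mathcal A$ succeeds") is computed by an ordered read-once branching program $\mathcal B$ of width $w\le 2^{S}$ and length $R$: rerun the sketching update on block $i$, then the extraction stage, then compare the output against the hard-coded true quantities of $A$. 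Hence it suffices to produce $G:\{0,1\}^{m}\to\{0,1\}^{R}$ with $m=O(S\log R)$ such that for every width-$w$ length-$R$ program $\mathcal B$ and every sink set $E$, $\bigl|\Pr_{x}[\mathcal B(G(x))\in E]-\Pr_{y}[\mathcal B(y)\in E]\bigr|\le 1/\operatorname{poly}(R)$.

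For the generator, take a family $\mathcal H$ of pairwise-independent hash functions on $\{0,1\}^{b}$ with block length $b=\Theta(S+\log R)$, set $G_0(x)=x$ and recursively
$$G_j(x;h_1,\dots,h_j)=\bigl(G_{j-1}(x;h_1,\dots,h_{j-1}),\ G_{j-1}(h_j(x);h_1,\dots,h_{j-1})\bigr),$$
with $j=\lceil\log(R/b)\rceil$ levels, so the output has length $\ge R$ and the seed has length $b+j\cdot O(b)=O(b\log R)=O(S\log R)$. Correctness is a hybrid argument over the $j$ levels, whose engine is the gluing lemma: if a width-$w$ program is split in two halves with (row-stochastic $w\times w$) transition matrices $Q,Q'$, then feeding the first half a uniform block $z$ and the second half $h(z)$ for uniform $h\in\mathcal H$ yields, for all but a $\delta_{\mathcal H}$-fraction of $h$, a distribution on sink states within total variation $\delta_{\mathcal H}$ of the product $QQ'$ — intuitively because a machine carrying only $\log w$ bits of state cannot detect the dependence between $z$ and $h(z)$, and $\delta_{\mathcal H}$ is a discrepancy/collision term scaling like $w^{2}/|\mathrm{range}(\mathcal H)|$. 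Applying this at every node of every level and union-bounding gives total error $O(R\,\delta_{\mathcal H})$, which is $1/\operatorname{poly}(R)$ once $b=\Theta(S+\log R)$; in our regime $R=\operatorname{poly}(n)$ and $S=\Omega(\log n)$, so $\log R=O(S)$ and the seed is $O(S\log R)$ as claimed.

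Finally, $G_j$ is evaluated by a depth-$j$ recursion that, to emit one output block, keeps the $j$ hash functions and $j$ intermediate points, i.e. $O(jb)=O(S\log R)$ bits; so the derandomized streaming algorithm stores only the $O(S\log R)$-bit seed on top of its usual $\operatorname{poly}(d,1/\eps,\log n)$-size sketch, and whenever an update touches row $i$ it recomputes block $i$ of $G(\text{seed})$ within that working space — this is exactly the "on the fly" property, and it removes the apparent $\Omega(n)$-storage need of the full-independence analysis at the cost of the $\log n$ factor. Since the success predicate of \cref{thm_main:findhh,thm_main:alg2} is checkable by such a width-$2^{O(S)}$, length-$R$ program against the fixed input, all the probability bounds established there degrade by at most an additive $1/\operatorname{poly}(n)$, which is absorbed into $\delta$. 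The main obstacle is the gluing lemma together with bounding the accumulation of its error over the $\log R$ recursion levels — this is the heart of Nisan's argument and requires the matrix/discrepancy estimate for pairwise-independent hashing against width-$w$ programs; the branching-program modeling and the online unfolding are routine bookkeeping around it.
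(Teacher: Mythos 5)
The paper does not prove this proposition at all: it is imported verbatim as a black-box citation of \citet{Nisan92} (via \citet{JayaramWZ22}), so there is no in-paper argument to compare yours against. Your reconstruction is the canonical proof of Nisan's theorem — read-once branching-program modeling, the recursive pairwise-independent-hash generator with $\lceil\log(R/b)\rceil$ levels and block length $b=\Theta(S+\log R)$, the hybrid argument driven by the gluing/discrepancy lemma, and the $O(S\log R)$-space on-the-fly block evaluation — and the outline is correct, with the genuinely hard step (the gluing lemma and the accumulation of its error over the levels) correctly identified rather than elided.

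One point deserves to be made explicit rather than implicit. Nisan's generator only fools algorithms that read their random bits in a \emph{one-way} order, whereas the actual streaming algorithm revisits $t_i$, $h_{i,\cdot}$, $\sigma_{i,\cdot}$ every time an update to row $i$ arrives, in an adversarial order. Your fix — defining the branching program to ``rerun the sketching update on block $i$'' for $i=1,\dots,n$ in a fixed canonical order — is the standard reordering argument of Indyk: it is valid precisely because the sketch is a \emph{linear} function of the stream, so its final state is invariant under permuting the updates, and hence the success predicate is computed by an equivalent read-once program over the canonically ordered random blocks. Without linearity (or some other order-invariance of the final state) this step would fail, so it is worth stating as a hypothesis rather than folding it silently into the choice of block order. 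With that caveat made explicit, your proposal is a faithful proof of the cited proposition.
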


\section{Applications}\label{sec:applications}
Our algorithms provide a fairly general framework for turnstile streaming algorithms that simulates under mild conditions any off-line coreset construction that builds upon $\ell_p$ leverage score sampling, up to little overheads in the sketch resp. subsample size. In this section, we discuss the additional conditions and give a brief overview over the analysis for the loss functions of several important regression problems, showing that they can be handled within our framework.
In the presented form, our algorithms simulate -- by means of sketching a turnstile data stream -- drawing a subsample of the rows from the input matrix proportional to their $\ell_p^p$ norm contribution, i.e., proportional to $\|a_i\|_p^p/\|A\|_p^p$. This is commonly referred to as row-norm sampling and usually yields only additive error guarantees. For the desired multiplicative $(1\pm\eps)$ guarantees, the probabilities need to be replaced by (approximate) $\ell_p$ leverage scores obtained from a \emph{well-conditioned} basis $U$ so as to sample proportionally to $\|u_i\|_p^p/\|U\|_p^p$. In addition, many algorithms require sampling from a mixture of $\ell_p$ leverage scores with another, e.g., a uniform distribution. To sample approximately from such distributions, we need some additional ideas.

\subsection{Idea 3: sampling from mixture distributions and \texorpdfstring{$\ell_p$}{lp} conditioning}
Say, we would like to sample from a mixture of two distributions $p$ and $q$. Then we can show by simple algebraic manipulations that if $S_1 \sim p$ and $S_2 \sim q$ then $S=S_1\cup S_2$ is a sample whose marginal inclusion probabilities are in $\Pr[i\in S]=\Theta(p_i+q_i)$. And if $p$ and $q$ are only known up to $(1\pm\eps)$ factors, as is the case with our $\ell_p$ samplers, then $\Pr[i\in S]$ can be approximated up to $(1\pm\eps)$ factors, which implies that all properties ensured by the sampler continue to hold for the combined sample. The second distribution is often a simple uniform sample, in which case it can be included into the sketching algorithm for the $\ell_p$ distribution by only hashing the entries $i\in [n]$ that satisfy $t_i > c/n$ and otherwise including them in the uniform sample.
\begin{corollary}\label{cor:combinedsamplingmain}
    Combining a sample $S_1$ from \cref{alg:turnstilesampling} with parameter $k$ and a uniform sample $S_2$ with sampling probability $k/n$ we get a sample $S_1 \cup S_2$ of size $\Theta(k)$ and the sampling probability of $i$ is $\Omega\left(k \left(\frac{\norm{a_i}_p^p}{\norm{A}_p^p} +1/n\right)\right)$, for any sample $\at_i$ we have that $\norm{\at_i-a_i}_p\leq (\varepsilon/3) \norm{a_i}_p $.
    Further, the sampling probability and thus appropriate weights can be approximated up to a factor of $(1\pm\eps)$.
\end{corollary}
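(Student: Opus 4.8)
The plan is to combine the guarantees already in hand for each subsample, treating $S_1$ and $S_2$ as independent (the general mixing principle sketched above); the space-saving implementation that reuses the scalars $t_i$ only affects constants and can be checked separately. First I would record, via \cref{thm_main:alg2} applied to $S_1$ with the stated parameters, that with probability $\ge 1-\delta$: $|S_1|\in[k,2k]$; $p^{(1)}_i:=\Pr[i\in S_1]\ge\min\{1,k\norm{a_i}_p^p/\norm{A}_p^p\}$; $\norm{\at_i-a_i}_p\le(\eps/3)\norm{a_i}_p$ for every $i\in S_1$; and the reported weight satisfies $w^{(1)}_i=(1\pm\eps)/p^{(1)}_i$. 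For $S_2$ I would maintain the index set $U:=\{i: t_i\le k/n\}$ explicitly; a Chernoff bound gives $|U|\le 2k$ with probability $1-\exp(-\Omega(k))\ge 1-\delta$ (using $k\ge 160\ln(12/\delta)$), so $U$ fits into the allotted space. Feeding the rows $a_i$, $i\in U$, into a CountSketch with $\Theta(s)$ repetitions and $\Theta(k)$ buckets isolates each $i\in U$ alone in its bucket in a majority of the repetitions, so the median / closest-repetition extraction of \cref{alg:findhh} returns $\at_i$ with $\norm{\at_i-a_i}_p\le(\eps/3)\norm{a_i}_p$ by exactly the argument behind \cref{thm_main:findhh}. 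Hence $S_2=U$, so $p^{(2)}_i:=\Pr[i\in S_2]=\min\{1,k/n\}$ exactly, and the perturbation bound in the statement holds for every element of $S=S_1\cup S_2$, regardless of which subsample produced it.

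By independence, the inclusion probability of $S$ is
$$\Pr[i\in S]=1-(1-p^{(1)}_i)(1-p^{(2)}_i)=p^{(1)}_i+p^{(2)}_i-p^{(1)}_i p^{(2)}_i,$$
and from $\tfrac{1}{2}(p^{(1)}_i+p^{(2)}_i)\le\max\{p^{(1)}_i,p^{(2)}_i\}\le\Pr[i\in S]\le p^{(1)}_i+p^{(2)}_i$ together with the elementary identity $\min\{1,x\}+\min\{1,y\}=\Theta(\min\{1,x+y\})$ I would conclude
$$\Pr[i\in S]=\Theta\left(\min\left\{1,\frac{k\norm{a_i}_p^p}{\norm{A}_p^p}\right\}+\min\left\{1,\frac{k}{n}\right\}\right)=\Omega\left(k\left(\frac{\norm{a_i}_p^p}{\norm{A}_p^p}+\frac{1}{n}\right)\right),$$
the last bound understood modulo clipping at $1$. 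For the size, $k\le|S_1|\le|S|\le|S_1|+|S_2|=\Theta(k)$, since $|S_2|=|U|=\Theta(k)$ with high probability. A union bound over the $O(1)$ failure events keeps the overall failure probability $O(\delta)$, which is absorbed by rescaling $\delta$.

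Finally, for the weights I would reconstruct an estimate of $\Pr[i\in S]$ from what the algorithm already outputs: $p^{(2)}_i=\min\{1,k/n\}$ is known exactly, and item 4 of \cref{thm_main:alg2} gives $\hat p^{(1)}_i:=1/w^{(1)}_i=(1\pm\eps)p^{(1)}_i$. Setting $\hat q_i:=1-(1-\hat p^{(1)}_i)(1-p^{(2)}_i)$ and writing $\hat p^{(1)}_i=(1+\theta)p^{(1)}_i$ with $|\theta|\le\eps$ gives $\hat q_i-\Pr[i\in S]=\theta\,p^{(1)}_i(1-p^{(2)}_i)$; since $p^{(1)}_i(1-p^{(2)}_i)\le\Pr[i\in S]$ this yields $\hat q_i=(1\pm\eps)\Pr[i\in S]$, so the reported weight $w_i:=1/\hat q_i$ is $(1\pm O(\eps))/\Pr[i\in S]$, which proves the claim after rescaling $\eps$ by a constant (with the convention $w_i=1$ whenever $p^{(1)}_i$ or $p^{(2)}_i$ clips to $1$, in which case $i$ is deterministically in $S$). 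I expect this last step to be the only genuinely delicate point — verifying that a $(1\pm\eps)$ error in $p^{(1)}_i$ does not amplify through the non-linear combination $1-(1-p^{(1)}_i)(1-p^{(2)}_i)$, in particular near the clipping boundary — whereas the Chernoff bound for $|U|$, the CountSketch recovery of the rows in $S_2$, and the closing union bound are routine.
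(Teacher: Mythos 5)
Your proof is correct and follows the paper's own approach: it rests on the inclusion--exclusion identity $\Pr[i\in S_1\cup S_2]=p_i^{(1)}+p_i^{(2)}-p_i^{(1)}p_i^{(2)}$ and the bound $p_i^{(1)}p_i^{(2)}\le\frac12(p_i^{(1)}+p_i^{(2)})$, which the paper packages as \cref{lem:pqsampling}, combined with the per-item guarantees of \cref{thm_main:alg2} for $S_1$. Two remarks worth making. First, your construction of $S_2$ is overengineered: once you track $U=\{i:t_i\le k/n\}$ explicitly and have argued $|U|=\Theta(k)$ with high probability, you can simply store the $\Theta(k)$ rows $a_i$, $i\in U$, exactly (this is what the paper's remark following the corollary does), giving $\at_i=a_i$ with zero perturbation for the uniform part; the additional CountSketch on $U$ adds an avoidable perturbation and a collision analysis for no gain in space or generality. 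Second, your weight-error argument is a slight improvement on the paper's \cref{lem:pqsampling}: by exploiting that $p_i^{(2)}=\min\{1,k/n\}$ is known exactly you obtain the closed-form error $\hat q_i-\Pr[i\in S]=\theta\,p_i^{(1)}(1-p_i^{(2)})$ with $|\theta|\le\eps$, which immediately yields the $(1\pm\eps)$ factor without the ``maximize over $c_1,c_2$ at the boundary'' step that the paper invokes but leaves to the reader.
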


To obtain $(1\pm\eps)$ relative error guarantees by $\ell_p$ leverage score sampling, we need to be able to transform the input to a so called well-conditioned basis $U$ for the $\ell_p$ column space of $A$ \citep{DasguptaDHKM09}. This is a generalization of the orthonormal basis in $\ell_2$ to general $\ell_p$ which are not rotationally invariant and therefore require more complicated constructions to ensure low bounded distortions. 

\begin{definition}[\citealt{DasguptaDHKM09}]
\label{def:good_basismain}
Let $A$ be an $n\times d$ matrix, let $p \in [1,\infty)$, and let $q \in (1, \infty]$ be its dual norm, satisfying $\frac{1}{p}+\frac{1}{q}=1$.
Then an $n \times d$ matrix $V$ is an \emph{$(\alpha,\beta,p)$-well-conditioned basis} for the column space of $A$ if\\
(1) $\Vert V \Vert_p:=\left( \sum_{i \leq n, j \leq d}|V_{ij}|^p\right)^{1/p}\leq \alpha$, and\\
(2) for all $z\in\mathbb{R}^d$, $\Vert z \Vert_q \leq \beta \Vert V z\Vert_p$.

We say that $V$ is an \emph{$\ell_p$-well-conditioned basis} for the column
space of $A$ if $\alpha$ and $\beta$ are in $d^{O(1)}$,
independent of $n$.
\end{definition}

The required basis transformations involve right-multiplication of our sketches with a conditioning matrix $P$. To this end, we can simply use the associativity of matrix multiplication to postprocess the sketches. I.e., it holds that $\Pi U = \Pi (AP) = (\Pi A) P$ (see \cref{alg:turnstilesampling}). To obtain $P$, we run in parallel to the $\ell_p$ row-sampler another turnstile sketch $\Pi_2 A$ that gives an $\ell_p$ subspace embedding in low dimensions, from which a $QR$-decomposition yields via $\Pi_2 A=QR$ the desired conditioning matrix $P=R^{-1}$. This idea goes back to \citet{SohlerW11,DrineasMMW12,WoodruffZ13} and has become a standard technique in recent literature. Using the oblivious $\ell_p$ subspace embeddings of \citet{WoodruffY23lpSE}, we get the following result.
\begin{proposition}\label{pro:Rmatrixmain}
    There exists a turnstile sketching algorithm that for a given $p\in[1,2]$ computes an invertible matrix $R$ such that $AR^{-1}$ is $(\alpha,\beta,p)$-well-conditioned with $\alpha=O(d^{2/p-1/2}(\log d)^{1/p-1/2}),$ and $\beta=O((d(\log d)(\log\log d))^{1/p})$, and $(\alpha\beta)^p = O(d^{3-p/2}(\log d)^{2-p/2}(\log\log d))$ for $p\in[1,2)$. For $p=2$ it holds that $\alpha=O(\sqrt{2d}),\beta=O(\sqrt{2})$, and $(\alpha\beta)^p=O(d)$. Moreover, the $\ell_p$ leverage scores $u_i^{(p)}$ satisfy $u_i^{(p)}\leq \beta^p\norm{a_i R^{-1}}_p^p$, and $\sum_i u_i^{(p)}\leq (\alpha\beta)^p=d^{O(1)}$.
\end{proposition}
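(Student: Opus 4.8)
The plan is to invoke an off-the-shelf oblivious $\ell_p$ subspace embedding, take a $QR$-style factorization of the sketched matrix, and then verify that the inverse of the triangular factor yields the claimed well-conditioning constants by a standard computation. Concretely, I would first apply the oblivious $\ell_p$ subspace embedding of \citet{WoodruffY23lpSE}: in a turnstile stream we maintain $\Pi_2 A$ for a suitable sketching matrix $\Pi_2$ with $\operatorname{poly}(d)$ rows, such that with high probability $\|\Pi_2 A z\|_p = (1\pm\tfrac12)\|Az\|_p$ for all $z\in\mathbb{R}^d$ (any constant distortion suffices). Since $\Pi_2 A$ is a small fixed matrix, I compute a $QR$-decomposition $\Pi_2 A = QR$ where $Q$ has orthonormal columns in the $\ell_2$ sense and $R\in\mathbb{R}^{d\times d}$ is upper triangular and invertible (invertibility follows because $\Pi_2$ preserves the rank of $A$ on the relevant subspace; if $A$ is rank-deficient one restricts to its column space). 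Set $P = R^{-1}$ and $V = AR^{-1}$.

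Next I would establish property (2) of \cref{def:good_basismain} for $V$, i.e. a bound on $\beta$. For any $z$, write $\|z\|_q$ in terms of $\|Rz^{-1}\cdots\|$... more precisely, substituting $y = R^{-1}z$ we need $\|R y\|_q \le \beta' \|A R^{-1} (Ry)\|_p$ — rather, the cleanest route is: for $z \in \mathbb{R}^d$, $\|z\|_q = \|R R^{-1} z\|_q$, and using that $Q$ has orthonormal columns, $\|R R^{-1}z\|_2 = \|Q R R^{-1} z\|_2 = \|\Pi_2 A R^{-1} z\|_2 \le (\text{rows of }\Pi_2)^{1/2-1/p}\cdot\|\Pi_2 A R^{-1}z\|_p$ by norm inequalities (for $p\le 2$), and then the subspace embedding gives $\|\Pi_2 A R^{-1} z\|_p \le \tfrac32 \|A R^{-1} z\|_p = \tfrac32\|Vz\|_p$; combined with $\|z\|_q \le \|z\|_2$ (again $q\ge 2$) this yields $\beta = O((\text{rows of }\Pi_2)^{1/p-1/2})$. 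Plugging in the $\operatorname{poly}(d,\log d)$ bound on the number of rows from \citet{WoodruffY23lpSE} gives the stated $\beta = O((d\log d\log\log d)^{1/p})$. For property (1), the bound on $\alpha$, I would bound $\|V\|_p = \|AR^{-1}\|_p$ by relating it to $\|\Pi_2 A R^{-1}\|_p = \|Q\|_p$ via the subspace embedding applied columnwise (each column of $R^{-1}$ is a vector $z\in\mathbb{R}^d$), then bound $\|Q\|_p$ by $\|Q\|_2$-type inequalities: $Q$ has $d$ orthonormal columns so $\|Q\|_F = \sqrt d$, and $\|Q\|_p \le m^{1/p-1/2}\|Q\|_2 \le m^{1/p-1/2}\sqrt d$ where $m$ is the number of rows; summing the bookkeeping gives the claimed $\alpha = O(d^{2/p-1/2}(\log d)^{1/p-1/2})$. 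The product $(\alpha\beta)^p$ is then a direct multiplication, and for the exact constant $p=2$ one uses a genuine $\ell_2$ subspace embedding with $O(d)$ rows (e.g.\ a Gaussian or SRHT sketch) for which $Q=\Pi_2 A R^{-1}$ is itself orthonormal up to $\sqrt2$, giving $\alpha = O(\sqrt{2d})$, $\beta = O(\sqrt 2)$.

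Finally, the two statements about leverage scores follow from \cref{def:leveragescores} and the well-conditioning: for any $i$ and any $z\ne 0$, $\frac{|a_i z|^p}{\|Az\|_p^p} = \frac{|a_i R^{-1}(Rz)|^p}{\|A R^{-1}(Rz)\|_p^p} = \frac{|(a_iR^{-1}) y|^p}{\|Vy\|_p^p}$ with $y = Rz$; by Hölder, $|(a_i R^{-1})y| \le \|a_i R^{-1}\|_p \|y\|_q \le \|a_iR^{-1}\|_p \cdot \beta\|Vy\|_p$, so taking the supremum over $z$ (equivalently over $y$) gives $u_i^{(p)} \le \beta^p \|a_i R^{-1}\|_p^p$. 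Summing over $i$, $\sum_i u_i^{(p)} \le \beta^p \sum_i \|a_i R^{-1}\|_p^p = \beta^p \|V\|_p^p \le (\alpha\beta)^p = d^{O(1)}$.

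The main obstacle I anticipate is not conceptual but bookkeeping: tracking the exact exponents $1/p-1/2$ through the chain of $\ell_p$--$\ell_2$ norm conversions (which direction the inequality goes depends on whether $p\le 2$), and making sure the distortion constants from \citet{WoodruffY23lpSE} and from the norm-conversion steps multiply out to exactly the stated $(\alpha\beta)^p = O(d^{3-p/2}(\log d)^{2-p/2}(\log\log d))$ rather than something off by a power of $\log d$. A secondary subtlety is the turnstile implementability: one must check that $\Pi_2$ from \citet{WoodruffY23lpSE} is a linear sketch (it is), so that $\Pi_2 A$ can be maintained under arbitrary additive updates and the postprocessing $QR$-step happens entirely offline on the small sketch.
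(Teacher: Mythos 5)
Your overall architecture coincides with the paper's: maintain an oblivious $\ell_p$ subspace embedding $\Pi_2 A$ as a linear sketch in the stream, compute $\Pi_2 A = QR$ offline, set $V = AR^{-1}$, bound $\alpha$ by a columnwise $\ell_p$-to-$\ell_2$ conversion on $Q$ (your exponent bookkeeping there matches the paper's $\|U\|_p \le \eta\, d^{1/p} r^{1/p-1/2}$), and derive the leverage-score bounds via H\"{o}lder's inequality. The paper cites a lemma of Munteanu et al.\ for that last step, but your direct H\"{o}lder derivation is correct and is exactly what that lemma does, so the $\alpha$ computation and the leverage-score part go through.

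The genuine gap is in your treatment of $\beta$. You assume the Woodruff--Yasuda embedding satisfies $\|\Pi_2 A z\|_p = (1\pm\tfrac12)\|Az\|_p$ with $\operatorname{poly}(d)$ rows and remark that ``any constant distortion suffices.'' For $p<2$ no such oblivious embedding exists in $\operatorname{poly}(d)$ dimensions (for $\ell_1$, constant-distortion oblivious subspace embeddings require nearly exponential sketch size); the embedding actually used has the one-sided form $\|Ax\|_p/\eta \le \|\Pi_2 Ax\|_p\le\gamma\|Ax\|_p$ with $\eta=O(1)$ but $\gamma = O\bigl((d\log d\log\log d)^{1/p}\bigr)$. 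That large upper distortion $\gamma$ is precisely where $\beta$ comes from: the correct chain is $\|z\|_q\le\|z\|_2=\|Qz\|_2=\|\Pi_2AR^{-1}z\|_2\le\|\Pi_2AR^{-1}z\|_p\le\gamma\|Vz\|_p$, where the middle step uses only $\|x\|_2\le\|x\|_p$ for $p\le 2$, with no dimension factor at all. Your version instead inserts a factor $(\text{rows of }\Pi_2)^{1/2-1/p}$ --- an inequality that is false as written, since the exponent has the wrong sign --- and then concludes $\beta = O(m^{1/p-1/2})$, which contradicts your own displayed factor and, with $m=O(d\log d)$, gives for $p=1$ the value $O((d\log d)^{1/2})$ rather than the stated $O(d\log d\log\log d)$. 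So the source of $\beta$ is misattributed: it is the embedding's upper distortion, not a row-count norm conversion, and the constant-distortion premise must be dropped. (Your $p=2$ suggestion of Gaussian/SRHT is fine since these are linear sketches; the paper uses CountSketch for input-sparsity reasons, a minor difference.)
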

Since the above conditioning result uses dense $\ell_p$ subspace embedding matrices which come with the computational bottleneck of the current matrix multiplication time, we remark that there exist sparse alternatives for $\ell_p$ subspace embeddings given in \citealp[Theorems  4.2, 5.2 of][]{WangW22}.
However this comes at the cost of slightly larger $d$ dependence resulting in $(\alpha\beta)^p = O(d^{2+p/2}(\log d)^{1+p/2}).$

Another interesting aspect is that the proof of \citep{WoodruffY23lpSE} uses so called $\ell_p$ spanning sets, relaxing slightly the dimension of well-conditioned bases, which yields an almost optimal linear $(\alpha\beta)^p = O(d\log\log d)$ conditioning. However, their computation is based on repeatedly reweighted $\ell_2$ leverage score calculations. Current non-adaptive/adaptive sketching techniques \citep{Mahabadi2020} are limited to post right-multiplication, but re-weighting would require post left-multiplication. It is thus currently unclear whether the direct construction of $\ell_p$ spanning sets is possible in our setting of turnstile data streams. It seems even less clear whether recent local search and non-constructive improvements \citep{BhaskaraMV23} can be leveraged. Developing a constructive version that operates on turnstile data streams is thus an important and exciting open problem.

\subsection{Idea 4: robustness of various loss functions under small perturbations}
Our final step before applying our new samplers to provide a framework for approximating a broad array of loss functions studied in previous literature, is to show that they can handle the small perturbations that are introduced by replacing the original data samples $a_i$ by their sketched versions $\at_i$ with $\|\at_i - a_i\|_p\leq O(\eps)\|a_i\|_p$. This is not immediate for the considered loss functions, and needs to be verified on a case-wise basis.
We note that the remaining items, i.e., the $(1\pm\eps)$ factor approximations to the sampling probabilities and the corresponding approximations of weights are readily in a form that approximates the entire loss function in the common case where it is simply a summation of single loss functions.
We have the following theorem, which uses a data dependent parameter $\mu$ that is standard in the analysis of asymmetric loss functions \citep{MunteanuSSW18,MunteanuOP22}. 
\begin{theorem}\label{lem:applicationsmain}
    Let $A\in\mathbb{R}^{n\times d}$ be $\mu$-complex (see \cref{def:mu_complex}). 
    Given a leverage score sampling algorithm that constructs an $\eps$-coreset of size $k$, as for the loss functions below (summarized in \cref{pro:leveragescoresampling} in \cref{app:application}), there exists a sampling algorithm that works in the turnstile stream setting that with constant probability outputs a weighted $2\varepsilon$-coreset $(A', w)\in \mathbb{R}^{k' \times d} \times \mathbb{R}_{\geq 1}^{k'}$ of size $k'=\Theta(k)$, such that
    \[
        \forall z  \in \mathbb{R}^d\colon \left| \sum_{i \in [k']}w_i g(a_i' z ) - \sum_{i=1}^n g(a_i z ) \right|\leq 2\varepsilon \sum_{i=1}^n g(a_i z ).
    \]
    The size of the sketching data structure used to generate the sample is $r \cdot s$, where $s=3\ln(36 n/\delta)$ and
    \[ 
        r=
        \begin{cases}
            O\left(k \ln(n) (\alpha^p \beta^p/\varepsilon)^p\right) & \text{if $g(t)=|t|^p $},\\
            O\left(k \ln(n) (\mu \alpha^p \beta^p/\varepsilon)^p\right) &\text{if $g(t)=\max\{0, t\}^p,$}\\
            O\left(k \ln(n) (\mu \alpha \beta/\varepsilon)\right) & \text{if $g(t)=\ln(1+e^t)$,}\\
            O\left(k \ln(n) (p \mu^2 \alpha^p \beta^p/\varepsilon)^p\right) &\text{if $g(t)=-\ln(\Phi_p(-t))$,}
        \end{cases}    
    \]
    where $\Phi_p\colon \mathbb{R} \rightarrow [0,1]$ denotes the CDF of the $p$-generalized normal distribution.
    In particular if the matrix $P:=R^{-1}$ of \cref{pro:Rmatrixmain} is used in \cref{alg:turnstilesampling}, then the overhead is at most $O(\ln(n) (p\mu^2 \alpha^p \beta^p/\varepsilon)^p)=\operatorname{poly}(\mu d/\eps)\log(n)$.
\end{theorem}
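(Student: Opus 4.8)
The plan is to reduce to the off-line $\ell_p$ leverage-score coreset construction of \cref{pro:leveragescoresampling} by running the modified \cref{alg:turnstilesampling} with the conditioning matrix $P=R^{-1}$ obtained from the parallel oblivious $\ell_p$ subspace embedding of \cref{pro:Rmatrixmain} (for the relevant exponent: $p=1$ for the logistic loss, and the loss's own exponent in the other three cases). Writing $U=AR^{-1}$, associativity makes the sampler's sketch equal to $S_j T(AR^{-1})$, so \cref{thm_main:alg2} applied to $U$, combined with the conditioning bounds $u_i^{(p)}\le\beta^p\norm{u_i}_p^p$, $\norm{U}_p^p\le\alpha^p$ of \cref{pro:Rmatrixmain}, yields a sample $S$ with: $|S|=\Theta(k')$; marginals $p_i:=\Pr[i\in S]\ge\min\{1,k\,u_i^{(p)}\}$, once the internal sampler parameter and accuracy are scaled by the appropriate $(\alpha\beta)^{O(p)}=\operatorname{poly}(d)$ amount; weights $w_i=(1\pm\eps')/p_i$ for the internal accuracy $\eps'$; perturbed rows $\at_i$ with $\norm{\at_i-u_i}_p\le(\eps'/3)\norm{u_i}_p$; and $\sum_{i\in S}w_i\norm{\at_i}_p^p=(1\pm\eps')\norm{U}_p^p$; the returned row is $a_i':=\at_i P^{-1}$. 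For the constructions that sample from a mixture of the leverage scores with a uniform distribution, \cref{cor:combinedsamplingmain} provides the same guarantees with $p_i$ dominating $\min\{1,k(u_i^{(p)}+1/n)\}$. Because the threshold $\alpha$ is computed from a decoupled copy of \cref{alg:findhh} (see \cref{mod: alg2}), the events $\{i\in S\}$ are conditionally independent given $\alpha$, so these marginals and weights meet the hypotheses of \cref{pro:leveragescoresampling} up to the harmless $(1\pm\eps')$ slack in the weights.

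With this reduction in hand, fix $z$ and split via the triangle inequality:
\[
    \left|\sum_{i\in S}w_i g(a_i'z)-\sum_{i=1}^n g(a_iz)\right|
    \;\le\; \sum_{i\in S}w_i\bigl|g(a_i'z)-g(a_iz)\bigr|
    \;+\; \left|\sum_{i\in S}w_i g(a_iz)-\sum_{i=1}^n g(a_iz)\right|,
\]
where the last term is at most $\eps\sum_i g(a_iz)$ by the off-line guarantee. It remains to bound the perturbation term by $\eps\sum_i g(a_iz)$. Putting $y=P^{-1}z$ so that $a_iz=u_iy$, $a_i'z=\at_iy$, and, by well-conditioning, $\norm{y}_q\le\beta\norm{Uy}_p=\beta\norm{Az}_p$, we obtain the uniform estimate $|a_i'z-a_iz|=|(\at_i-u_i)y|\le\norm{\at_i-u_i}_p\norm{y}_q\le(\eps'/3)\beta\norm{u_i}_p\norm{Az}_p$. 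Feeding this into the one-dimensional perturbation inequality appropriate for $g$ --- $\bigl||a|^p-|b|^p\bigr|\lesssim\eps_0|b|^p+\eps_0^{1-p}|a-b|^p$ for $g(t)=|t|^p$; a clipped version for $g(t)=\max\{0,t\}^p$; $1$-Lipschitzness $|g(a)-g(b)|\le|a-b|$ for $g(t)=\ln(1+e^t)$; and a two-regime estimate (Lipschitz for bounded argument, controlled polynomial growth via $g'$ otherwise) for $g(t)=-\ln(\Phi_p(-t))$ --- and summing against the weights, the $\norm{u_i}_p$-factors get absorbed by $\sum_{i\in S}w_i\norm{\at_i}_p^p\le2\norm{U}_p^p\le2\alpha^p$, the $g(u_iy)$-factors by the (already bounded) off-line term, and for the three asymmetric losses the $\mu$-complexity of $A$ (\cref{def:mu_complex}) converts a bound in a power of $\norm{Az}_p$ into one in $\sum_i g(a_iz)$. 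The upshot is that the perturbation term is at most $O(\eps'\kappa_g)\sum_i g(a_iz)$, where $\kappa_g$ is a $\operatorname{poly}(\alpha\beta)$ for $g=|t|^p$, carries an extra power of $\mu$ for the ReLU and logistic losses, and an extra $\mu^2$ and a factor $p$ for the $p$-probit loss.

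Choosing $\eps_0=\eps/4$ and the internal accuracy $\eps'=\Theta(\eps/\kappa_g)$ makes both terms at most $\eps\sum_i g(a_iz)$, so a final rescaling of $\eps$ turns the bound into $2\eps\sum_i g(a_iz)$; the output has size $\Theta(k')=\Theta(k)$ by item~1 of \cref{thm_main:alg2} (and \cref{cor:combinedsamplingmain}), and the weights lie in $\mathbb{R}_{\geq1}$. Substituting the scaled internal parameter and $\eps'$ into the requirements $r\ge32\,k_{\mathrm{int}}\ln(n)(72/\eps')^p$, $s\ge3\ln(36n/\delta)$ of \cref{thm_main:alg2} produces exactly the tabulated $r$ and $s$, and the ``in particular'' claim follows by plugging in $(\alpha\beta)^p=O(d^{3-p/2}(\log d)^{2-p/2}\log\log d)=\operatorname{poly}(d)$ from \cref{pro:Rmatrixmain}. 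A union bound over the success of the subspace embedding of \cref{pro:Rmatrixmain} and of \cref{thm_main:alg2} (resp.\ \cref{cor:combinedsamplingmain}), each tunable to an arbitrarily small constant, gives overall success with constant probability.

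The step I expect to be the main obstacle is the robustness analysis for the asymmetric losses $\max\{0,t\}^p$, $\ln(1+e^t)$, and $-\ln(\Phi_p(-t))$: unlike $|t|^p$, these can be essentially flat or negligibly small exactly on the rows where $|a_i'z-a_iz|$ is not, so converting $\sum_{i\in S}w_i|g(a_i'z)-g(a_iz)|$ into a multiple of $\sum_i g(a_iz)$ rather than of $\norm{Az}_p^p$ requires invoking the $\mu$-complexity delicately, and for the $p$-probit loss one must in addition track how $-\ln(\Phi_p(-\cdot))$ and its derivative transition between the regime where $\Phi_p$ is near $1$ and the regime of polynomial growth. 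By contrast, the reduction to off-line leverage-score sampling via $P=R^{-1}$, the conditional independence of $\{i\in S\}$ given $\alpha$, and the bookkeeping of $r$ and $s$ are comparatively routine once \cref{thm_main:alg2}, \cref{cor:combinedsamplingmain}, and \cref{pro:Rmatrixmain} are available.
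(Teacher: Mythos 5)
Your proposal follows essentially the same route as the paper: condition via $P=R^{-1}$ from \cref{pro:Rmatrixmain}, reduce to the off-line leverage-score coreset of \cref{pro:leveragescoresampling} by applying \cref{thm_main:alg2} (and \cref{cor:combinedsamplingmain}) to $U=AR^{-1}$, split off the perturbation term by the triangle inequality, and bound it per loss function via H\"older, well-conditioning, and $\mu$-complexity, with the internal accuracy $\eps'$ scaled down by the corresponding $\kappa_g$ to produce the tabulated $r$. The one substantive detail you leave implicit is that for $g(t)=-\ln(\Phi_p(-t))$ the two-regime bound on $g'$ forces running the sampler twice in parallel (once for $\ell_p$ and once for $\ell_1$) and combining the samples via \cref{lem:pqsampling}, since the constant/Lipschitz part of $g'$ needs $\ell_1$-type control while the polynomial part needs $\ell_p$-type control; you correctly flag this case as the main obstacle, and the paper additionally uses the lower bound $\sum_i g(a_i z)\geq n/\mu$ to absorb the resulting additive term.
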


We would like to add that our algorithm serves as a general framework, that in principle extends beyond the loss functions presented in \cref{lem:applicationsmain}. It likely works for any loss function which is close to the $\ell_p$ norm.\footnote{A known limitation is that $p>2$ would imply $\tilde\Omega(n^{1-2/p})$ sketch size, although the final sample can be small again.} In particular, any off-line $\ell_p$ leverage score algorithm can be simulated with little overhead. If one could access the original rows $a_i$ for $i$ in the sample, our algorithm serves as a generic black-box. But to work with the approximated samples $\tilde a_i$ one needs to show additionally and on a case-wise basis that the loss function is robust to their perturbation. This last item limits \cref{lem:applicationsmain} to the presented loss functions, since we have proven the robustness property only for those four functions as exemplary applications.

We further note that any improvement of conditioning parameters $\alpha,\beta\in d^{O(1)}$ will reduce the overhead. Additionally, the analysis takes an established subsample size $k$, possibly depending on $d$, and adds $d^{O(1)}$ overhead for the turnstile simulation. Thus, our work conditions the turnstile result on readily available off-line subsampling and matrix conditioning results. It might save some $d$ dependence if all analyses were integrated more directly.

\section{Experimental illustration}
\begin{figure*}[ht!]
\begin{center}
\begin{tabular}{cccc}
{~}&
{\small\hspace{.5cm}\textsc{CoverType}}&{\small\hspace{.5cm}\textsc{WebSpam}}&
{\small\hspace{.5cm}\textsc{KddCup}}\\
\rotatebox{90}{\hspace{6pt}\textsc{Logistic Regression}}&
\includegraphics[width=0.2951\linewidth]{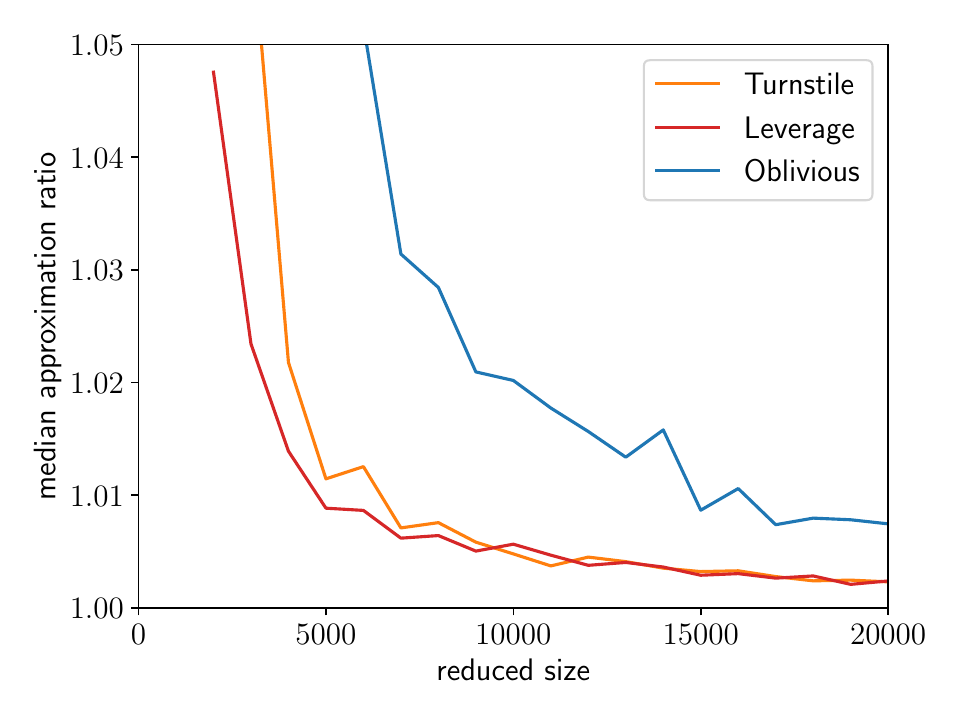}&
\includegraphics[width=0.2951\linewidth]{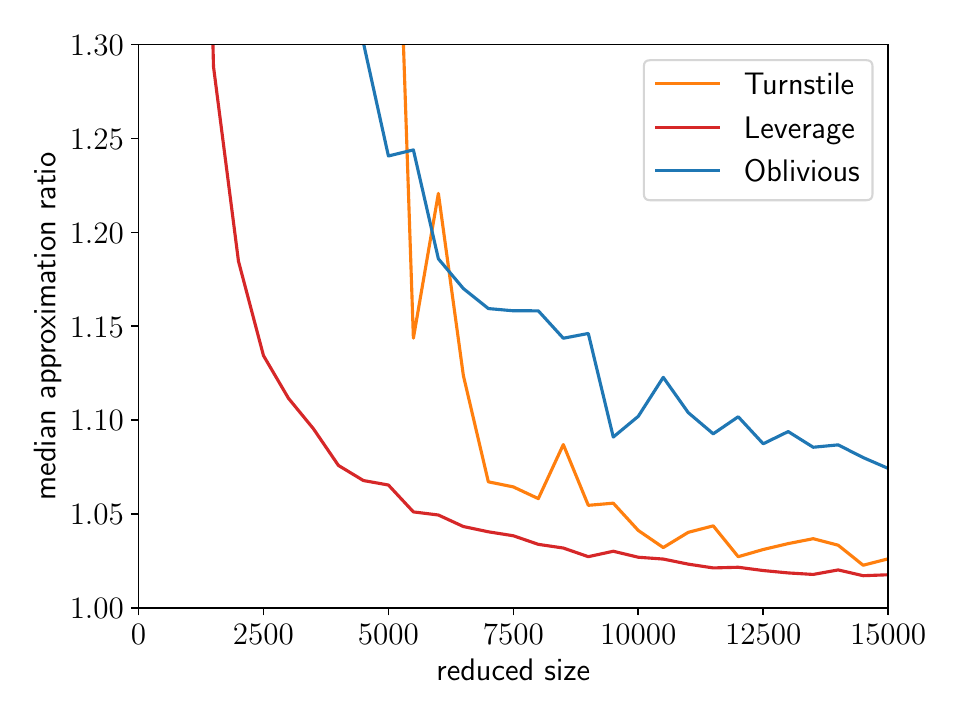}&
\includegraphics[width=0.2951\linewidth]{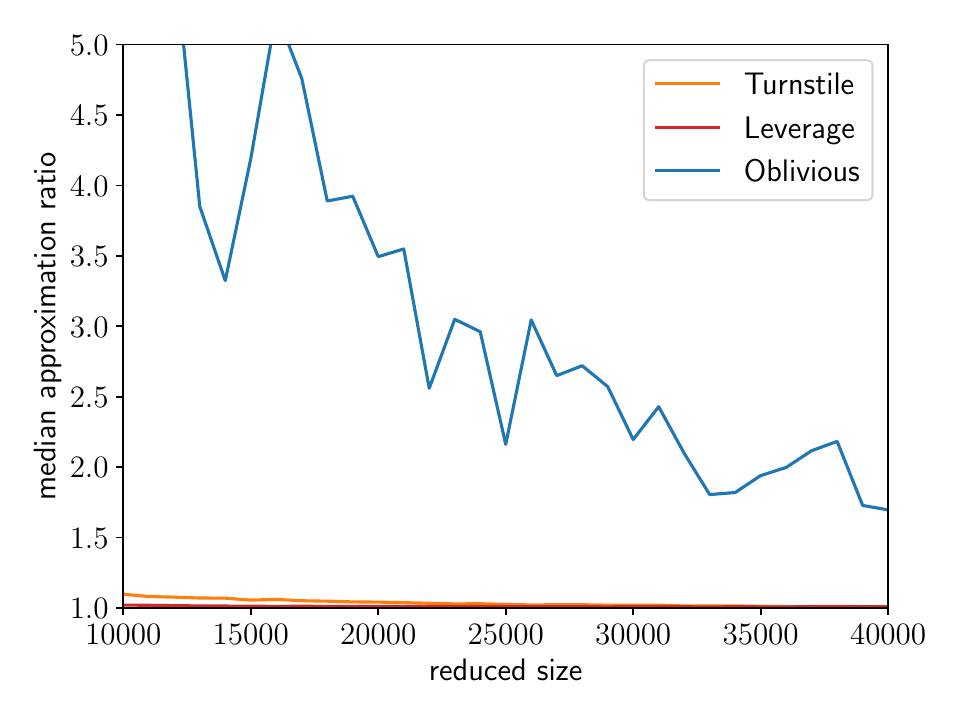}\\
\rotatebox{90}{\hspace{20pt}$\ell_1$ \textsc{Regression}}&
\includegraphics[width=0.2951\linewidth]{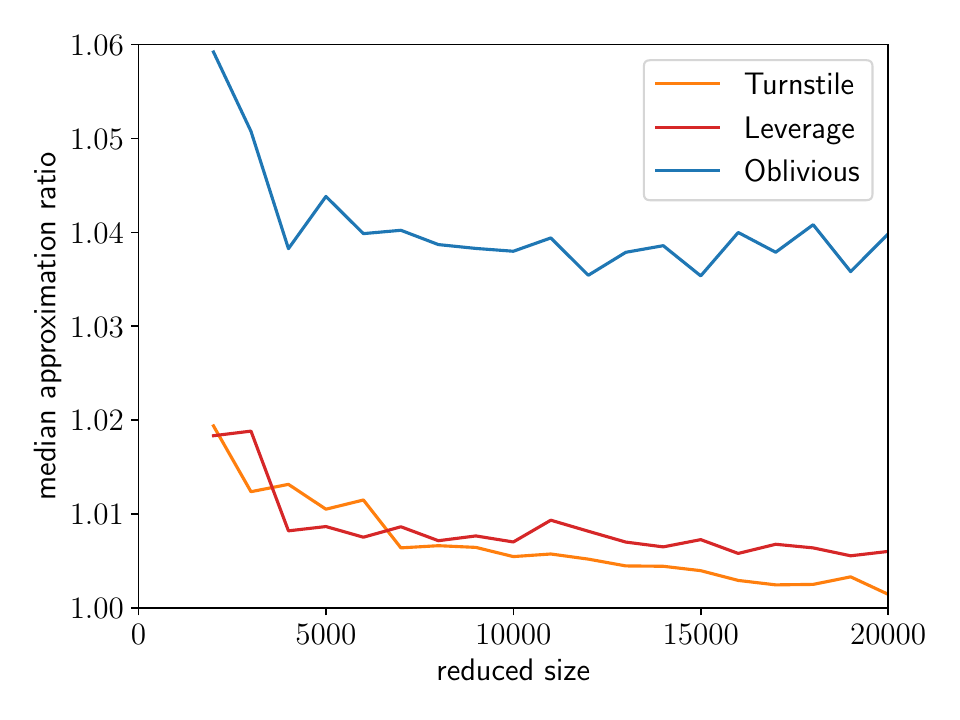}&
\includegraphics[width=0.2951\linewidth]{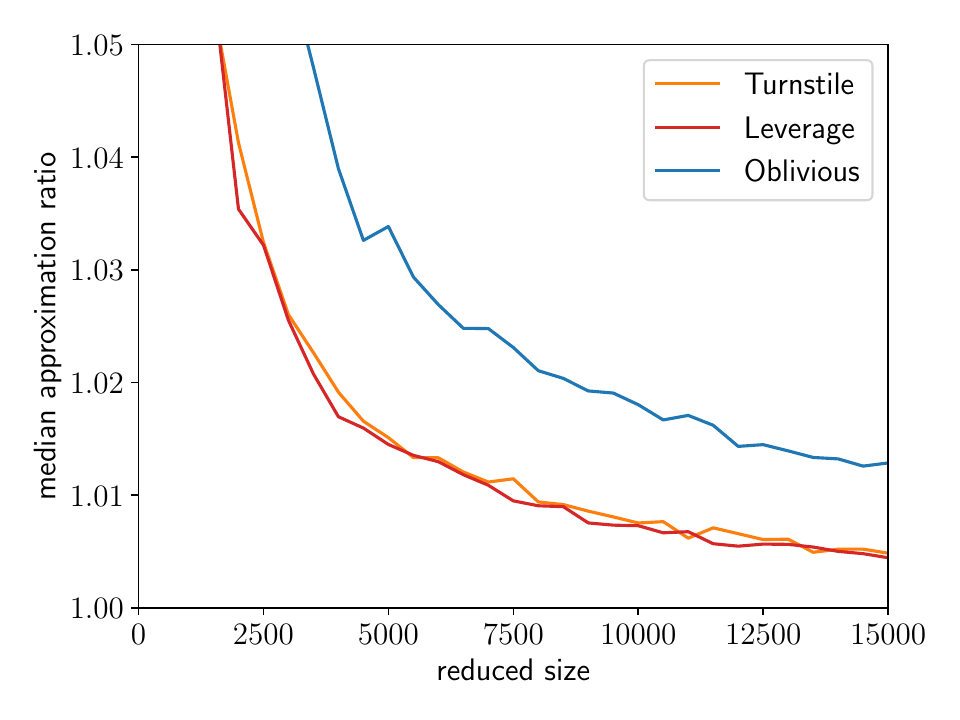}&
\includegraphics[width=0.2951\linewidth]{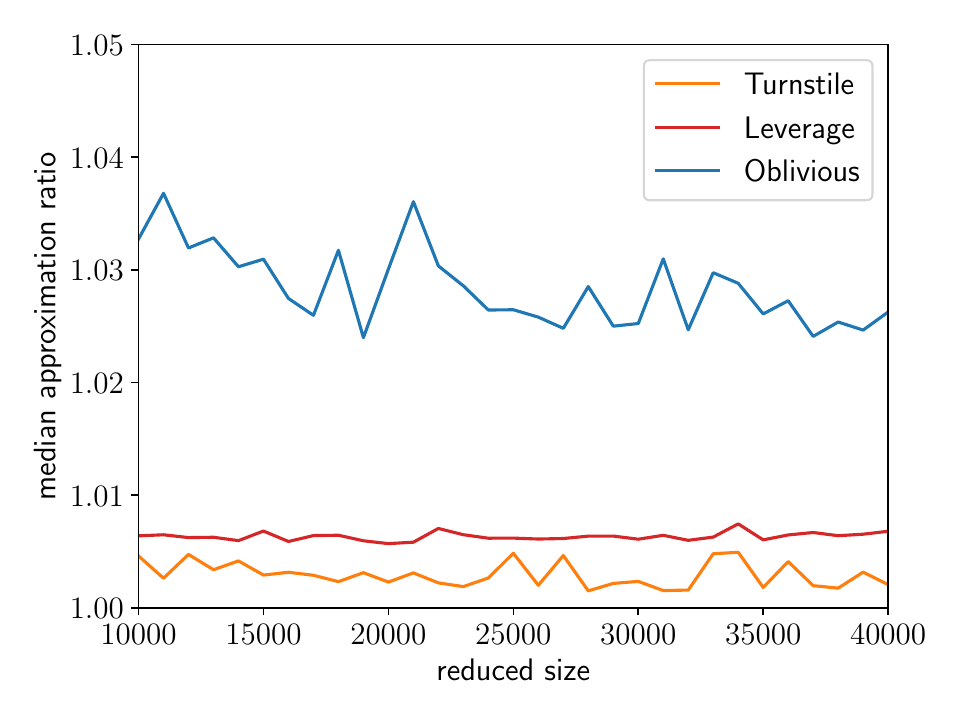}\\
\end{tabular}
\caption{Comparison of the approximation ratios for logistic regression, and $\ell_1$ regression on various real-world datasets. The new turnstile data stream sampler (orange) is compared to plain leverage score sampling (red), and to plain oblivious sketching (blue). The plots indicate the median of approximation ratios taken over 21 repetitions for each reduced size. Best viewed in colors, lower is better.
}
\label{fig:experiments}
\end{center}
\end{figure*}
We demonstrate the performance of our novel turnstile $\ell_p$ sampler. 
Recall, that our algorithm is a hybrid between an oblivious sketch and a leverage score sampling algorithm. It thus makes most sense to compare to pure oblivious sketching as well as to pure off-line leverage score sampling. To this end, we implement our new algorithm into the experimental framework of the near-linear oblivious sketch of \citet{MunteanuOW23}, and add the code of \citet{MunteanuOP22} for $\ell_1$ leverage score sampling.\footnote{Our new code is available at \url{https://github.com/Tim907/turnstile-sampling}.} 

Our a priori hypothesis from the theoretical knowledge on the three regimes is that the performance should be somewhere in the middle between the performances of the competitors. Ideally, we would want our algorithm to perform as closely as possible to off-line leverage score sampling.

The following real-world datasets have become standard baselines to measure the performance of data reduction algorithms for logistic regression and $\ell_1$ regression: Covertype, Webspam, and KDDCup, see \cref{sec:app:experiments_data} for details.
For each dataset, and each of the two problems, we first solve the original large instance to optimality to obtain $z_{opt}$. We then run the data reduction algorithms, for varying target coreset resp. sketch sizes, and solve the reduced and reweighted problem to optimality to obtain the approximation $\tilde z$. For each target size, we repeat this process $21$ times and plot in \cref{fig:experiments} the median of the resulting approximation ratios $f(\tilde z)/f(z_{opt})$. {We experienced convergence problems using the \texttt{scipy} optimizer for the non-differentiable $\ell_1$ loss. Thus, for $\ell_1$ regression, $z_{opt}$ denotes the best (though not necessarily optimal) solution found.}
The results are consistent across all settings: our new turnstile sampler outperforms pure oblivious sketching by a large margin. Its performance lies between the two competitors and is very close to off-line leverage score sampling. In some cases, it even performs slightly better for $\ell_1$ regression, which is likely due to the reported inaccuracies of the \texttt{scipy} optimizer, rather than the reduction algorithms.

The experiments affirm our hypothesis, and corroborate the usefulness of our novel turnstile $\ell_p$ leverage score sampling sketch in practical applications.
We refer to \cref{sec:app:experiments} for more experiments using $p=1.5$, and a mixture of $\ell_1+\ell_2$ leverage scores, as well as details on data, computing environment, running times, and memory requirements.

\section{Conclusion}
We generalize the turnstile $\ell_2$ row sampling algorithm of \citet{Mahabadi2020} to work for all $p\in[1,2]$ using novel statistical tests that rely only on the CountSketch data structure, rather than requiring auxiliary or $p$-specific sketches. This is used to simulate $\ell_p$ leverage score sampling over a turnstile data stream.
The combination of different $\ell_p$ distributions and uniform sampling extends our methods to logistic regression and $\ell_p$ generalizations of linear, ReLU, and probit regression losses. Our experiments show good performance for $\ell_p$ and logistic regression as compared to pure oblivious sketching and off-line sampling.
The most intriguing open question is whether it is possible to simulate the construction of $\ell_p$ spanning sets \cite{WoodruffY23lpSE,BhaskaraMV23} in turnstile data streams, which would bring larger powers of $d$ down to near-optimal linear dependence \cite{MunteanuO24optimallpsampling}.

\section*{Acknowledgements}
The authors would like to thank the anonymous reviewers of ICML 2024 for very valuable comments and discussion. We also thank Tim Novak for helping with the experiments. This work was supported by the German Research Foundation (DFG), grant MU 4662/2-1 (535889065), and by the Federal Ministry of Education and Research of Germany (BMBF) and the state of North Rhine-Westphalia (MKW.NRW) as part of the Lamarr-Institute for Machine Learning and Artificial Intelligence, Dortmund, Germany. Alexander Munteanu was additionally supported by the TU Dortmund - Center for Data Science and Simulation (DoDaS).

\bibliography{arxivmain}
\bibliographystyle{apalike}

\newpage
\appendix
\onecolumn
\allowdisplaybreaks
\section{Preliminaries}
We are given a data matrix $A \in \mathbb{R}^{n \times d}$ with row vectors $a_i, \dots a_n \in \mathbb{R}^d$ presented in a turnstile data stream.
We assume that $n \gg d$.
Further let $p \in [1, 2]$ have a fixed value.
Let $s_i\geq u_i^{(p)}+\frac{1}{n}$ where $u_i^{(p)}=\sup_{x \in \mathbb{R}^d\setminus \{0\}}\frac{|a_ix|^p}{\norm{Ax}_p^p}$ are the $\ell_p$ leverage scores (see \cref{def:leveragescores}).
Our goal is to develop an algorithm that samples row $i$ with probability $p_i\gtrsim \frac{ks_i}{S}$ in one pass over a turnstile data stream and determine weights $w_i \approx \frac{1}{p_i}$.
We allow an error controlled by a parameter $\varepsilon>0$ in both, the sampled vector as well as the weight.

\section{The algorithms}

Our first algorithm (\cref{alg:findhh}) determines heavy rows of a matrix $A$. It is a modification of the CountSketch \citep{CharikarCF04}, that performs additional statistical tests on $s$ repetitions of the sketch to 1) determine a suitable threshold $M_0$ using the $0.65$-percentile among the $s$ repetitions, relative to which any row will be considered 'heavy', 2) estimate the $\ell_p$ norm of the current row up to $(1\pm\eps)$ error using the median among the $s$ repetitions, and compares the estimate to the threshold, and 3) find a representative element among the $s$ repetitions using the median again, to find an approximation of the row that lies close to most other approximations. This will ensure that it also lies close to the original input row, which it represents. See the main text for more details.

Our second algorithm (\cref{alg:turnstilesampling}) multiplies random scaling factors $t_i^{-1/p}$, where $t_i \sim {U}(0, 1)$ to the rows of a matrix $A$ to get a new Matrix $A'=TA$, where $T=\operatorname{diag}(t_1^{-1/p},\ldots,t_n^{-1/p})$ is a diagonal $n \times n$ matrix. Then \cref{alg:findhh} is applied to determine the heavy rows of $A'P$. Hereby $A$ is presented in a turnstile data stream, and $P$ is a conditioning matrix that is obtained in a postprocessing step after the stream has reached its end. This can be done using another turnstile sketching primitive applied to the stream that represents $A$ in parallel to our algorithm. The postprocessing step is then completed by right-multiplication of our sketch with $P$ (in most of our analysis $P=I$; other choices are discussed later in the applications of \cref{sec:applications}).
If $r$ and $s$ are sufficiently large, then we can guarantee that $A'$ has at least a certain number of heavy rows, the (roughly) $k$ largest of which are back transformed to their original sign, scale and basis, and returned as an approximate sample $S$ together with estimated sampling probabilities. This is done by calculating a threshold $\alpha$ which is the smallest approximated $\ell_p$ norm of the $k$ largest elements.
For $(i, \at_i P^{-1}, w_i) \in S $ the first entry is the index of a row $a_i$ of $A$, the second entry is a slightly perturbed row $\at_i P^{-1} \approx a_i$, and the third entry is a weight which is roughly the inverse of the sampling probabilities $p_i\approx\min\{1,{\norm{\at_i}_p^p}/{\alpha}\}\approx\min\{1,{\norm{a_i P}_p^p}/{\norm{A P}_p^p}\}$.

\section{Outline of the analysis}

\begin{itemize}
    \item[1)] We first prove some technical lemmas that are used multiple times and give intuitions about how parts of the analysis work.
    In particular, we analyze sums of Bernoulli random variables, medians and other percentiles, as well as the expected $\ell_p$ norm of a random bucket.
    \item[2)] We analyze \cref{alg:findhh}.
    Here, we show that there is an upper bound for $M_0$ which guarantees that it finds and returns all 'heavy' rows.
    Further, we show that there is a lower bound for the threshold $M_0$, which guarantees that any element returned by the algorithm is approximated up to a relative error of $\varepsilon$.
    \item[3)] We then proceed by analyzing a slightly modified version of \cref{alg:turnstilesampling} (see \cref{mod: alg2} for details).
    We first give a high level intuition of how the algorithm works.
    We prove that the probability of sampling row $i$ is greater
    or equal to $(1 - \varepsilon){\norm{a_i P}_p^p}/{\alpha} \approx c\cdot {k\norm{a_i P}_p^p}/{\norm{A P}_p^p}$ for an appropriate $\alpha$ (and constant $c$) and that the number of samples is in the interval $[k, 2k]$.
    We then use the properties proven in 2) to show that the norm of each row is approximated up to a relative error of $\varepsilon$.
    Finally, we analyze the weights for which we show that they are roughly the inverse sampling probabilities and that they can be used to approximate $\norm{AP}_p^p$ up to a factor $(1\pm\eps)$.
    \item[4)] We show that if we can sample from two distributions $p_i, p_i'$, we can also sample from a joint distribution where the sampling probability is roughly $\frac{p_i + p_i'}{2}$.
    In particular, we use this to combine \cref{alg:turnstilesampling} with uniform sampling to sample with probability proportional to $ \frac{\norm{a_iP}_p^p}{\norm{AP}_p^p}+\frac{1}{n}$.
    \item[5)] We show how our results can be applied to construct an $\eps$-coresets for the $\ell_p$ variants of linear regression, ReLU regression, probit regression, as well as logistic regression. 
\end{itemize}

\section{Tools for the analysis}

Let us start with some facts following from well known results of probability theory.
The first fact is about the median of Bernoulli random variables.
The lemma will be crucial for arguments regarding the median or other percentiles and to obtain bounds on the number of samples.

\begin{lemma}\label{lem: bernolem}
    Let $m \in \mathbb{N}$ and $0 < \delta <1$.
    Let $X_1, \dots, X_m$ be a sequence of independent Bernoulli random variables with $P(X_i=1)=p>0.075$.
    If $m \geq  3\ln(2/\delta)/0.025^3$ then with probability at least $1 -\delta$ it holds that $X=\sum_{i=1}^m X_i = |\{ i \mid X_i=1 \}| \in (1 \pm 0.025)pm$.
\end{lemma}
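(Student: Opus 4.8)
The plan is to invoke the standard multiplicative Chernoff bound for sums of independent $\{0,1\}$-valued random variables. Recall that for such a sum $X=\sum_{i=1}^m X_i$ with mean $\mu:=\E[X]=pm$ and any $\beta\in(0,1)$ one has the upper tail bound $\Pr[X\geq(1+\beta)\mu]\leq\exp(-\beta^2\mu/3)$ and the lower tail bound $\Pr[X\leq(1-\beta)\mu]\leq\exp(-\beta^2\mu/2)$; combining them (and using $\exp(-\beta^2\mu/2)\leq\exp(-\beta^2\mu/3)$) yields the two-sided estimate $\Pr[\,|X-\mu|\geq\beta\mu\,]\leq 2\exp(-\beta^2\mu/3)$. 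I would apply this with $\beta=0.025$, so that the target event $X\in(1\pm0.025)pm$ is exactly the complement of $\{|X-\mu|\geq\beta\mu\}$, and it therefore suffices to show $2\exp(-\beta^2\mu/3)\leq\delta$, i.e. $\mu\geq 3\ln(2/\delta)/\beta^2$.

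The remaining step is pure constant bookkeeping. Using the hypothesis $p>0.075=3\cdot 0.025$ to lower-bound the mean gives $\mu=pm>0.075\,m$, and then invoking $m\geq 3\ln(2/\delta)/0.025^3$ yields
$$
\mu \;>\; 3\cdot 0.025\cdot\frac{3\ln(2/\delta)}{0.025^3}
\;=\;\frac{9\ln(2/\delta)}{0.025^2}
\;\geq\;\frac{3\ln(2/\delta)}{0.025^2}
\;=\;\frac{3\ln(2/\delta)}{\beta^2},
$$
which is precisely the inequality required above. Substituting back into the Chernoff bound gives $\Pr[\,|X-\mu|\geq 0.025\,\mu\,]\leq\delta$, and hence $X\in(1\pm0.025)pm$ with probability at least $1-\delta$, as claimed.

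The argument is essentially routine and I do not anticipate a genuine obstacle. The only points deserving (minimal) care are: checking that $\beta=0.025\in(0,1)$ so that the quoted Chernoff form is valid; keeping the numerical constants straight (in particular $0.075=3\cdot0.025$ and $0.025^3=0.025\cdot0.025^2$); and observing that the stated hypothesis on $m$ carries a factor-$3$ of slack over what is strictly needed, which is harmless and in fact convenient, since this lemma is applied with a \emph{single} common value of $m$ to many events that are later combined by a union bound in the analysis of \cref{alg:findhh} and \cref{alg:turnstilesampling}.
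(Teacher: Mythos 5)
Your proposal is correct and follows essentially the same route as the paper: a two-sided multiplicative Chernoff bound with $\beta=0.025$, followed by the same constant bookkeeping to verify that $m \geq 3\ln(2/\delta)/0.025^3$ suffices. The only cosmetic difference is that you exploit the full hypothesis $p>0.075$ (gaining a factor of $9$ in the exponent), whereas the paper's chain of inequalities only uses $p\geq 0.025$; both yield the stated bound.
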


\begin{proof}
    Let $X=\sum_{i=1}^m X_i $ be the number of $1$'s in $ \{ X_1 \dots X_m\}$.
    Since $X$ is a sum of Bernoulli random variables, the expected value of $X$ equals $\mathbb{E}(\sum_{i=1}^m X_i)=pm$.
    By Chernoff's bound it holds that 
    \begin{align*}
        P(|X - pm| > 0.025 pm ) \leq 2\exp\left( -\frac{0.025^2 pm}{3} \right)
    \leq 2\exp\left( -\frac{0.025^3 m}{3} \right)\leq \delta.
    \end{align*}
\end{proof}

The next lemma is similar to the previous one but handles Bernoulli random variables with small expected sum.

\begin{lemma}\label{lem: altbernolem}
    Let $m \in \mathbb{N}$ and $0 < \delta <1$.
    Let $X_1, \dots, X_m$ be a sequence of independent Bernoulli random variables with $P(X_i=1)=p_i$ and let $k\geq 20\ln(2/\delta)$.
    If $\E(X)\leq 9k $ then with probability at least $1 -\delta$ it holds that 
    \[
        X=\sum\nolimits_{i=1}^m X_i = |\{ i \mid X_i=1 \}| \in [\E(X)-k, \E(X)+k].
    \]
\end{lemma}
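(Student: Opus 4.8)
The plan is to use a concentration inequality for sums of independent Bernoulli random variables with non-uniform success probabilities, namely a Chernoff/Bernstein-type bound in the regime where the expectation $\E(X) =: \lambda \leq 9k$ is moderately small. The key point is that we want an \emph{additive} deviation bound of the form $|X - \lambda| \leq k$, not a multiplicative one, because $\lambda$ could be as small as a constant (or even zero), in which case a multiplicative bound would be vacuous or meaningless. First I would split into the upper tail $\Pr[X \geq \lambda + k]$ and the lower tail $\Pr[X \leq \lambda - k]$ (the latter being trivially zero if $\lambda < k$, so only the case $\lambda \geq k$ needs attention there, but the bound handles it uniformly).

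For the upper tail, I would apply the standard multiplicative Chernoff bound $\Pr[X \geq (1+\theta)\lambda] \leq \exp(-\theta^2 \lambda / (2+\theta))$ with $\theta = k/\lambda$. Since $\lambda \leq 9k$, we have $\theta \geq 1/9$; the worst case for the exponent is when $\lambda$ is as large as possible, i.e.\ $\lambda = 9k$ and $\theta = 1/9$, giving an exponent of order $-(1/81)(9k)/(2 + 1/9) = -k/(81 \cdot (19/9)) = -k/171$, which is $\Omega(k)$. More carefully, one checks that $\theta^2\lambda/(2+\theta) = k^2/(\lambda(2 + k/\lambda)) = k^2/(2\lambda + k) \geq k^2/(18k + k) = k/19$, so $\Pr[X \geq \lambda + k] \leq \exp(-k/19)$. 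For the lower tail, the bound $\Pr[X \leq (1-\theta)\lambda] \leq \exp(-\theta^2\lambda/2)$ with $\theta = k/\lambda$ (valid when $\theta \leq 1$, i.e.\ $\lambda \geq k$) gives exponent $\theta^2\lambda/2 = k^2/(2\lambda) \geq k^2/(18k) = k/18$, so $\Pr[X \leq \lambda - k] \leq \exp(-k/18) \leq \exp(-k/19)$. Combining, $\Pr[|X - \lambda| > k] \leq 2\exp(-k/19)$.

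Finally I would verify that the hypothesis $k \geq 20\ln(2/\delta)$ suffices: $2\exp(-k/19) \leq 2\exp(-20\ln(2/\delta)/19) = 2 \cdot (\delta/2)^{20/19} \leq 2 \cdot (\delta/2) = \delta$, using $(\delta/2)^{20/19} \leq \delta/2$ since $\delta/2 < 1$ and $20/19 > 1$. This closes the argument. The main obstacle — really the only subtlety — is being careful about which Chernoff variant to invoke and ensuring the exponents stay $\Omega(k)$ uniformly over the allowed range $\lambda \in [0, 9k]$, including the degenerate small-$\lambda$ cases where the lower tail is empty; everything else is a routine plug-in. One should also double-check the constants in the standard Chernoff statement being cited (there are several slightly different normalizations in the literature), but with the generous slack $k \geq 20\ln(2/\delta)$ the argument is robust to the exact constant used.
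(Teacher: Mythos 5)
Your proof is correct, and it is closely related to but not identical to the paper's. The paper invokes Bernstein's inequality directly: since each $X_i\in\{0,1\}$ one has $\E\bigl(\sum X_i^2\bigr)=\E(X)\le 9k$ and $|X_i|\le 1$, so Bernstein gives the two-sided additive bound
\[
P\bigl(|X-\E(X)|\ge k\bigr)\le 2\exp\!\left(-\frac{k^2/2}{\E(X)+k/3}\right)\le 2\exp(-k/20)\le\delta
\]
in a single line, with no case distinction. You instead use the two one-sided multiplicative Chernoff bounds, converting to an additive statement by choosing $\theta=k/\lambda$, which forces you to handle the corner cases $\lambda<k$ (vacuous lower tail) and $\lambda=0$ separately, and leads to the marginally worse exponent $k/19$ in place of $k/20$. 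Since for Bernoulli sums the multiplicative Chernoff bound in the form $\exp(-\theta^2\lambda/(2+\theta))$ is itself a consequence of (and essentially equivalent to) Bernstein's inequality, the two proofs are morally the same; the Bernstein route is just a bit cleaner because it avoids the $\lambda$-dependent reparametrization and the tail-by-tail case analysis. Your closing step $2(\delta/2)^{20/19}\le\delta$ is correct and parallels the paper's.
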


\begin{proof}
    We will prove this by using Bernstein's inequality.
    First, note that $\E(\sum_{i=1}^m X_i^2)=\E(\sum_{i=1}^m X_i)=\E(X)\leq 9k$ since $X_i$ are Bernoulli random variables.
    Second, note that $X_i \leq 1$.
    Thus using Bernstein's inequality we get that
    \[
        P( |X-\E(X)| \geq k )\leq 2\exp \left(- \frac{ k^2/2 }{ \E(X)+k/3 } \right)
        \leq 2\exp \left(- \frac{ k }{20 } \right) \leq \delta.
    \]
\end{proof}

An important property of a sum with random signs is that it preserves the $\ell_2$ norm of the entries.
The following lemma uses this fact and shows the relation of the expected value of the $p$th power of a sum with random signs over the elements of a vector $v$ to its $\ell_p$ norm $\norm{v}_p^p $.

\begin{lemma}\label{lem: bucketcontbound}
    Let $v_1, \dots v_n \in \mathbb{R}^d$ and let $\sigma_1, \ldots, \sigma_n \in \{-1, 1\} $ be uniform and pairwise independent random signs.
    If $p\leq 2$ then it holds that $\mathbb{E}(\norm{\sum_{i=1}^n \sigma_i v_i }_p^p)\leq \sum_{i=1}^n \norm{v_i}_p^p$.
\end{lemma}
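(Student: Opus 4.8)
## Proof proposal for Lemma (bucket contribution bound)

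\textbf{The plan is to} fix a coordinate $j\in[d]$ and analyze the one-dimensional random variable $Y_j := \sum_{i=1}^n \sigma_i v_{ij}$, then sum over $j$. Since $\norm{\sum_i \sigma_i v_i}_p^p = \sum_{j=1}^d |Y_j|^p$, by linearity of expectation it suffices to show $\E(|Y_j|^p)\leq \sum_{i=1}^n |v_{ij}|^p$ for each fixed $j$. This reduces the statement to a purely scalar inequality about a Rademacher sum, which is exactly where the hypothesis $p\leq 2$ enters.

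\textbf{For the scalar step} I would use the following chain. Since $t\mapsto t^{p/2}$ is concave on $[0,\infty)$ when $p\leq 2$, Jensen's inequality gives
\[
\E\bigl(|Y_j|^p\bigr) = \E\bigl((Y_j^2)^{p/2}\bigr) \leq \bigl(\E(Y_j^2)\bigr)^{p/2}.
\]
Because the signs $\sigma_i$ are uniform and \emph{pairwise} independent, $\E(\sigma_i\sigma_{i'}) = 0$ for $i\neq i'$ and $\E(\sigma_i^2)=1$, so the cross terms vanish and $\E(Y_j^2) = \sum_{i=1}^n v_{ij}^2$ — note only pairwise independence is needed here, which matches the derandomization remarks in the paper. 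It then remains to bound $\bigl(\sum_i v_{ij}^2\bigr)^{p/2}\leq \sum_i |v_{ij}|^p$, i.e. $\norm{w}_2^p \leq \norm{w}_p^p$ for the vector $w=(v_{1j},\dots,v_{nj})$; this is the standard monotonicity of $\ell_p$ norms in $p$ (here $p\leq 2$ so $\norm{w}_2\leq\norm{w}_p$), raised to the $p$-th power. Chaining these three inequalities yields $\E(|Y_j|^p)\leq \sum_i |v_{ij}|^p$.

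\textbf{To finish}, sum over $j\in[d]$:
\[
\E\Bigl(\bigl\|\textstyle\sum_{i=1}^n \sigma_i v_i\bigr\|_p^p\Bigr)
= \sum_{j=1}^d \E\bigl(|Y_j|^p\bigr)
\leq \sum_{j=1}^d \sum_{i=1}^n |v_{ij}|^p
= \sum_{i=1}^n \norm{v_i}_p^p,
\]
which is the claimed bound.

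\textbf{The main obstacle}, such as it is, is simply making sure every invocation uses only what is assumed: concavity of $t^{p/2}$ genuinely requires $p\leq 2$ (for $p>2$ the inequality reverses and the lemma would be false), and the second-moment computation must be justified with pairwise — not full — independence so that the lemma is compatible with the limited-independence hash families used later. Everything else is routine; no heavy machinery is needed. One minor point to state carefully is the edge case $Y_j=0$, where $(Y_j^2)^{p/2}=0$ causes no trouble, and the degenerate case $p=2$, where all three inequalities hold with equality.
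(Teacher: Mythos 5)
Your proof is correct and follows the same coordinate-wise decomposition as the paper's: both reduce to bounding $\E\bigl|\sum_i \sigma_i v_{ij}\bigr|^p$ by $\norm{v^{(j)}}_2^p$ for each coordinate $j$, and both then use $\norm{\cdot}_2 \leq \norm{\cdot}_p$ (valid since $p \leq 2$) and sum over $j$. The only place you diverge is in justifying the scalar bound $\E|Y_j|^p \leq (\sum_i v_{ij}^2)^{p/2}$: the paper cites Khintchine's inequality (Haagerup), whereas you derive it directly from Jensen's inequality applied to the concave function $t\mapsto t^{p/2}$ together with the second-moment computation $\E(Y_j^2)=\sum_i v_{ij}^2$. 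Your route is a bit more self-contained and, more importantly, it makes explicit that only \emph{pairwise} independence of the signs is used — which is exactly what the lemma hypothesizes and what the bounded-independence hash families later require. Khintchine's inequality is usually stated for fully independent Rademacher variables, so the paper's citation is a small shortcut; your Jensen argument closes that gap cleanly without any extra cost. In short: same skeleton, but your justification of the key step is slightly more elementary and more faithful to the pairwise-independence hypothesis.
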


\begin{proof}
    First note that for uniform and pairwise independent random signs we have that
    \begin{align*}
        \E\left( \left\|\sum_{i=1}^n \sigma_i v_i \right\|_p^p \right)=\E\left(\sum_{j=1}^d  \left| \sum_{i=1}^n   \sigma_i v_{ij}  \right |^p \right)
        = \sum_{j=1}^d \E \left( \left| \sum_{i=1}^n   \sigma_i v_{ij}  \right |^p \right).
    \end{align*}
    Khintchine's inequality \citep[see][]{Haagerup1981} followed by the standard inter-norm inequality yield
    \[ \mathbb{E}\left(\left|\sum_{i=1}^n \sigma_i v_{ij} \right|^p\right)\leq \norm{v^{(j)}}_2^p \leq \norm{v^{(j)}}_p^p \]
    where $v^{(j)} \in \mathbb{R} $ is the vector with coordinates $v_{ij}$ for $i \in [n]$. 
    Combining the previous two inequalities we get that
    \[
        \E\left( \left\|\sum_{i=1}^n \sigma_i v_i \right\|_p^p \right) \leq \sum_{j=1}^d \norm{v^{(j)}}_p^p
        = \sum_{i,j \in [n] \times [d]} |v_{ij}|^p=\sum_{i=1}^n \norm{v_i}_p^p
    \]
\end{proof}

\paragraph{Some notation} Consider a bucket $B$ consisting of a set of indices together with the corresponding set of random signs we define $G_p(B)=\norm{\sum_{i \in B}\sigma_i a_i }_p$. The specific signs $\sigma_i=\sigma_{i,j}, j\in[s]$ will be clear from the context.

\section{Analysis of \texorpdfstring{\cref{alg:findhh}}{}}

\paragraph{High level idea} For $k \in [n]$ let $S_L(k, A)\subset [n] $ be the subset of the $k$ indices of elements with the largest $\ell_p$ norm (ties are broken arbitrarily) and let $S_R(k, A)=[n] \setminus S_L(k, A)$ be the subset of the remaining indices.
If $A$ is clear from the context we simply write $S_L(k) $ and $S_R(k)$. If $k$ is also clear from the context we just write $S_L$ and $S_R$.

The idea of \cref{alg:findhh} is that if we hash the elements to $r$ buckets, then for $k=r/20$, at least $r-r/20$ buckets, do not contain any large element of $S_L(k)$.
Further the expected squared $\ell_2$ norm of a bucket is $M/r$ for $M=\sum_{i \in S_R(k)} \norm{a_i}_p^p$.
Using \cref{lem: bucketcontbound} and the union bound we can extend this result showing that with probability at least $1-1/4-1/20$, the contribution of a bucket $B$ is $G(B)^p \leq 4M/r$.

The argument can also be applied to the buckets containing a certain index $i$, i.e., if we consider a bucket $B_i$ containing the element $i$ then with probability at least $1-1/4-1/20$ we have that $\norm{B_i-\sigma_i a_i}_p^p=\norm{\sum_{j \in B\setminus \{ i\}} \sigma_j a_j }_p^p \leq 4M/r$.
Thus if $\norm{a_i}_p^p \gtrsim \frac{ M}{ \varepsilon^p r}$
then most of the buckets containing element $i$ will be close to $a_i$ and using the median, which is the approximation $\at_i$ calculated by \cref{alg:findhh}, we can approximate the large elements exceeding a fraction of $\gamma M$ up to an error of $\varepsilon$ with respect to their $\ell_p$ norm by setting $r=O(\frac{1}{\gamma \varepsilon^p})$.

In addition to the definitions given in the high level idea, we define $$M'=\inf \{ w \in \mathbb{R}_{\geq 0} \mid P(G(B)^p \leq w)\geq 0.6 \}$$ to be the (theoretical) $.6$-percentile of the $\ell_p^p$ norm contributions of buckets.
The following Lemma yields an upper and a lower bound for $M_0$:

\begin{lemma}\label{lem: Mbound}
    If $s\geq 3\ln(2/\delta)/0.025^3$, then the value of $M_0$ in \cref{alg:findhh} satisfies
    \[
        M' \leq M_0 \leq 4 M/r
    \]
    with failure probability at most $2\delta$.
\end{lemma}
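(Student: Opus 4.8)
The plan is to regard the $s$ bucket norms $\{G(B_{j,1})^p\}_{j\in[s]}$ as i.i.d.\ copies of a single random variable $G(B)^p$ --- independence across repetitions $j$ follows from the assumed independence of the hashes $h_{i,j}$ and the signs $\sigma_{i,j}$ --- and to prove the two inequalities separately, each by one application of \cref{lem: bernolem} to a suitable Bernoulli sequence, closing with a union bound over the two failure events of probability $\delta$ each.

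For the upper bound $M_0\le 4M/r$ I would first establish the population statement $\Pr[G(B)^p\le 4M/r]\ge 0.7$, which is precisely the CountSketch bucket-concentration argument from the high-level discussion. Writing $I=\{i: h_{i,1}=1\}$ for the indices hashed into the bucket, each index lands in $I$ with probability $1/r$, so $\Pr[I\cap S_L(r/20)\ne\emptyset]\le (r/20)/r=1/20$ by a union bound; and conditioning on the hash, then applying \cref{lem: bucketcontbound} to the surviving tail indices, then taking expectation over the hash gives $\E\bigl[\,\bigl\|\sum_{i\in I\cap S_R(r/20)}\sigma_i a_i\bigr\|_p^p\,\bigr]\le \tfrac1r\sum_{i\in S_R(r/20)}\norm{a_i}_p^p = M/r$, so Markov bounds by $1/4$ the probability that this quantity exceeds $4M/r$. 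Off both bad events the bucket contains only tail indices and has $G(B)^p\le 4M/r$, giving the claimed $1-1/20-1/4=0.7$. Now $Z_j:=\mathbbm{1}[G(B_{j,1})^p\le 4M/r]$ is i.i.d.\ Bernoulli with mean $q\ge 0.7>0.075$, so \cref{lem: bernolem} (valid since $s\ge 3\ln(2/\delta)/0.025^3$) yields $\sum_j Z_j\ge(1-0.025)q s\ge 0.6825\,s$ except with probability $\delta$; as $0.6825\,s$ comfortably exceeds $0.65\,s$ in the admissible range of $s$, more than a $0.65$ fraction of the sampled norms are $\le 4M/r$, whence the $0.65$-percentile satisfies $M_0\le 4M/r$.

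For the lower bound $M_0\ge M'$ I would dualize: by definition of $M'$ as the $0.6$-percentile of $G(B)^p$ and right-continuity of its distribution function, $\Pr[G(B)^p<M']\le 0.6$, so $U_j:=\mathbbm{1}[G(B_{j,1})^p\ge M']$ is i.i.d.\ Bernoulli with mean $\ge 0.4>0.075$. Applying \cref{lem: bernolem} once more, $\sum_j U_j\ge(1-0.025)\cdot 0.4\cdot s=0.39\,s$ except with probability $\delta$, hence at most $0.61\,s$, i.e.\ fewer than a $0.65$ fraction, of the sampled norms lie strictly below $M'$, and therefore the $0.65$-percentile cannot be smaller than $M'$. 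A union bound over the two events of probability at most $\delta$ gives overall failure probability at most $2\delta$.

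The argument presents no real obstacle; the only care needed is quantitative bookkeeping --- checking that the population probabilities $0.7$ and $0.4$ sit far enough above the thresholds $0.65$ and $0.35$ that the multiplicative $(1\pm 0.025)$ slack in \cref{lem: bernolem} together with the integer rounding implicit in the empirical percentile never spoils the conclusion (they do, since $s$ is at least several thousand whenever $\delta<1$) --- and noting that if $M'=0$ the lower bound is trivial. The single genuinely substantive step is the population bound $\Pr[G(B)^p\le 4M/r]\ge 0.7$, and that reduces to \cref{lem: bucketcontbound}, Markov's inequality, and a union bound as indicated above.
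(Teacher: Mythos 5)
Your proposal is correct and follows essentially the same route as the paper: the upper bound via the $1/20$ collision bound with $S_L(r/20)$, \cref{lem: bucketcontbound} plus Markov to get the population probability $0.7$, and \cref{lem: bernolem} to transfer this to the empirical $0.65$-percentile; and the lower bound via \cref{lem: bernolem} applied to indicators comparing the bucket norms to $M'$. Your treatment of the lower bound (using $\Pr[G(B)^p<M']\le 0.6$ from the infimum definition, rather than a bound on $\Pr[G(B)^p\le M']$, which need not be close to $0.6$ if the distribution has an atom at $M'$) is in fact slightly more careful than the paper's.
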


\begin{proof}
    Let $S_L=S_L(r/20)$ be the set of the $r/20$ indices with the largest $\ell_p$ norm and $ S_R=[n]\setminus S_L$.
    Let $M=\sum_{i \in S_R} \norm{a_i}_p^p$.
    Consider any bucket $B$.
    The probability that $B$ contains any specific element is $1/r$. By a union bound, the probability that $B$ contains an element of $S_L$ is bounded by $P(B \cap S_L \neq \emptyset) \leq r/20 \cdot 1/r = 1/20$.
    Further denoting by $P(S)$ for a set $S$ the probability that $S=B \setminus S_L$ and using \cref{lem: bucketcontbound} it holds that
    \begin{align*}
        \E (G_p(B \setminus S_L)^p)&=\sum_{S\subset S_R} P(S)\E\left(\left\Vert \sum_{i\in S} \sigma_i a_i \right\Vert_p^p\right) \leq \sum_{S\subset S_R} P(S) \left( \sum_{i\in S} \norm{a_{i}}_p^p \right).
    \end{align*}
    Now, by double counting the last term, we also have that
    \[
         \sum_{S\subset S_R} P(S) \left( \sum_{i\in S} \norm{a_i}_p^p \right) 
         = \sum_{i\in S_R} \norm{a_i}_p^p \left(\sum_{S\subset S_R, i \in S} P(S) \right)
         = \sum_{i \in S_R} \norm{a_i}_p^p \cdot  P(i \in B)
         = \sum_{i \in S_R} \frac{1}{r}\cdot\norm{a_i}_p^p = M/r.
    \]
    Thus using Markov's inequality we have that $ G_p(B \setminus S_L)^p \leq 4M/r$ with probability at least $1-1/4$.
    Using the union bound we have that with probability at least $1-1/4-1/20=0.7$, an arbitrary bucket $B$ contains no element of $S_L$ and $G_p(B \setminus S_L)^p \leq 4M/r$.

    Since $s \geq 3\ln(2/\delta)/0.025^3$, \cref{lem: bernolem} implies that at least $0.675\cdot s$ many random buckets satisfy these properties with failure probability at most $\delta$, so in particular this holds for the (realized) $.65$-percentile $M_0$. We conclude that $M_0 \leq 4M/r$.

    The lower bound also follows by \cref{lem: bernolem} for $s \geq 3\ln(2/\delta)/0.025^3$, which implies that the (theoretical) $.6$-percentile is not exceeded by more than $.025$. Specifically, this yields $|\{ j\in [s] \mid G(B_{j, 1})^p \leq M' \}| \leq 0.625 s$. Consequently the (realized) $0.65$-percentile $M_0$ is larger than $M'$. The failure probability is again bounded by at most $\delta$, and the overall failure probability is bounded by $2\delta$ by another union bound, which concludes the proof.
\end{proof}

In the following lemma, these bounds will be used to show that with high probability all elements in the output $L$ of \cref{alg:findhh} are close to the original rows. Further it shows that all rows with large $\ell_p$ norm will be in $L$.

\begin{lemma} \label{lem: L-Lemma}
    If $s \geq 3\ln(2n/\delta)/0.025^3$, $r \geq 50$, $0 < \varepsilon \leq 1/3$ and $M' \leq M_0$, then the following holds with failure probability at most $\delta$:
    For any $i  \in [n]$ with $v_i \geq (12/\varepsilon)^p M' $ it holds that $\norm{a_i}_p^p \geq (3/\varepsilon)^p M' $.
    Further, for any $i  \in [n]$ with  $\norm{a_i}_p^p \geq (3/\varepsilon)^p M'$ it holds that $v_i=(1\pm \varepsilon)\norm{a_i}_p^p$.
    In particular, this implies that for any $i$ with $ \norm{a_i}_p^p \geq (12/\varepsilon)^p M'/(1-\varepsilon)$ it holds $i \in L$.
    Finally, it holds for $S_i:=\{j \in [s] \mid \norm{\at_{i, j}-a_i}_p\leq \varepsilon\|a_i\|_p/9\}$ that $|S_i|\geq s/2$.
\end{lemma}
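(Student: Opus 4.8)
The plan is to exploit the CountSketch structure: each $\at_{i,j} = \sigma_{i,j} B_{j,h_{i,j}}$ equals $a_i$ plus the signed sum of the other rows colliding with $i$ in repetition $j$. I would first isolate, for each fixed $i$, the ``good repetition'' event $E_{i,j}$ that the bucket $B_{j,h_{i,j}}$ contains no row of $S_L(r/20)$ \emph{and} that the collision noise $G_p(B_{j,h_{i,j}} \setminus (S_L \cup \{i\}))^p \leq 4M'/ \text{(something)}$ — more precisely, I would re-run the Markov argument of \cref{lem: Mbound} conditioned on $i \in B$, getting $\E[G_p^p] \leq M/r$ and hence noise $\leq 4M/r$ with probability $\geq 3/4$; combined with the no-heavy-collision event (prob $\geq 1 - (r/20)/r = 19/20$) this gives $\Pr[E_{i,j}] \geq 1 - 1/4 - 1/20 = 0.7$. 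On $E_{i,j}$ we have $\|\at_{i,j} - a_i\|_p^p \leq 4M/r \leq M_0$ (using \cref{lem: Mbound}'s upper bound $M_0 \leq 4M/r$ — wait, I need it the other way; I'd instead use the \emph{lower} bound $M' \le M_0$ and the deterministic fact $4M/r \le$ a constant multiple of $M'$ from the percentile comparison, or more cleanly just carry $4M/r$ symbolically and note $4M/r \le 4 M_0 \cdot (r/r)$... the correct route: $M' \le M_0$ is assumed; and the noise bound $4M/r$ is what we control, then relate $M/r$ to $M'$ via the percentile estimates). The cleanest statement: on $E_{i,j}$, $\|\at_{i,j}-a_i\|_p^p \le 4M/r$, and by the analysis in \cref{lem: Mbound} a constant fraction of buckets have contribution $\le 4M/r$, so $M'$ (the $0.6$-percentile) satisfies $M' \le 4M/r$ is \emph{not} forced — rather I should directly compare: since $\Pr[G(B)^p \le 4M/r] \ge 3/4 > 0.6$, indeed $M' \le 4M/r$. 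Good, so on $E_{i,j}$: $\|\at_{i,j}-a_i\|_p^p \le 4M/r$ and $M' \le 4M/r$, but I want an \emph{upper} bound on $M/r$ in terms of $M'$ too, which I do \emph{not} have — so instead I keep everything in terms of $4M/r =: \tau$ and note $M' \le \tau$.

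With the event structure fixed, I would push through the three claims. For the first (if $v_i \geq (12/\eps)^p M'$ then $\|a_i\|_p^p \geq (3/\eps)^p M'$): by \cref{lem: bernolem} with $s \geq 3\ln(2n/\delta)/0.025^3$, for \emph{every} $i$ simultaneously (union bound over $n$, absorbing the $n$ into the $\log$), at least half the repetitions satisfy $E_{i,j}$, hence the median $v_i = \median_j \|\at_{i,j}\|_p^p$ lies between the $\ell_p^p$ norms on two good repetitions. On a good repetition, $\|\at_{i,j}\|_p \le \|a_i\|_p + \tau^{1/p}$ and $\ge \|a_i\|_p - \tau^{1/p}$ by triangle inequality. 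So $v_i \ge (12/\eps)^p M'$ forces $(\|a_i\|_p + \tau^{1/p})^p \ge (12/\eps)^p M' \ge (12/\eps)^p \cdot (\text{something}) $... here I need $\tau$ small relative to $M'$, which fails in general. \textbf{This is the crux and the main obstacle}: the statement as written uses $M'$ on \emph{both} sides, so I actually need the reverse percentile bound $\tau = 4M/r \le C\cdot M'$ for an absolute constant $C$ — and this is exactly what \cref{lem: Mbound}'s \emph{pair} of bounds $M' \le M_0 \le 4M/r$ does \emph{not} directly give, but a sharper percentile argument does: the $0.6$-percentile and the $0.75$-tail-probability point are within a bounded ratio only if the distribution is not too spread out. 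The resolution (and I expect this is the paper's route): one shows $G(B)^p$ concentrates enough that $\Pr[G(B)^p \le M'] \ge 0.6$ and $\Pr[G(B)^p \le 4M/r] \ge 0.75$ are \emph{compatible} with $4M/r$ being only a constant times larger, OR — more likely — the lemma is applied later only in the regime where $r$ is large enough ($r \ge 50$, and in Theorem \ref{thm_main:findhh} $r = 8\gamma^{-1}(12/\eps)^p$) that the relevant rows $a_i$ have $\|a_i\|_p^p$ genuinely huge compared to $M/r$, so the slack between $M'$ and $4M/r$ is swallowed by the $(12/\eps)^p$ versus $(3/\eps)^p$ gap (a factor $4^p \ge$ the ratio). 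Indeed $4^p \cdot (3/\eps)^p = (12/\eps)^p$, so the constants are chosen \emph{precisely} so that $v_i \ge (12/\eps)^p M'$, $v_i \le (\|a_i\|_p + (4M/r)^{1/p})^p$, and $M' \le 4M/r$ combine — via $\|a_i\|_p + (4M/r)^{1/p} \ge (12/\eps)(M')^{1/p}$ and absorbing $(4M/r)^{1/p}$ as a $(3/\eps)$-fraction of the RHS — to yield $\|a_i\|_p \ge (3/\eps)(M')^{1/p}$, i.e. the claim. I would carry out exactly this arithmetic.

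For the second claim (if $\|a_i\|_p^p \ge (3/\eps)^p M'$ then $v_i = (1\pm\eps)\|a_i\|_p^p$): again at least half the repetitions are good, so $v_i$ is sandwiched by good-repetition values $(\|a_i\|_p \pm (4M/r)^{1/p})^p$; since $(4M/r)^{1/p} \le 4^{1/p}(M')^{1/p} \le 4^{1/p}(\eps/3)\|a_i\|_p \le (\eps/3)\|a_i\|_p \cdot 4^{1/p}$ and $4^{1/p} \le 4$, I'd need the cleaner bound — actually $(4M/r)^{1/p}$ versus $(3/\eps)$-fraction: from $\|a_i\|_p \ge (3/\eps)(M')^{1/p} \ge (3/\eps)(4M/r)^{1/p}/4^{1/p}$... let me just say $(4M/r)^{1/p} \le (\eps/3)\cdot 4^{1/p}\|a_i\|_p \le \eps\|a_i\|_p$ for $p\ge 1$ needs $4^{1/p}/3 \le 1$, true since $4^{1/p} \le 4 <$ no, $4/3 > 1$ — so one needs $p$ close enough to... hmm, for $p \le 2$, $4^{1/p} \le 2$, and $2/3 < 1$, good. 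So $(4M/r)^{1/p} \le (2\eps/3)\|a_i\|_p < \eps\|a_i\|_p$, giving $v_i \in ((1-\eps)^p, (1+\eps)^p)\|a_i\|_p^p \subseteq (1\pm\eps)\|a_i\|_p^p$ after adjusting $\eps$ by a constant (or noting $(1\pm 2\eps/3)^p$ is within $(1\pm\eps)$ for $p\le 2$, $\eps \le 1/3$). The third claim then follows: if $\|a_i\|_p^p \ge (12/\eps)^p M'/(1-\eps)$ then $v_i \ge (1-\eps)\|a_i\|_p^p \ge (12/\eps)^p M' \ge (12/\eps)^p M_0 \cdot$ — wait, I need $v_i \ge (12/\eps)^p M_0$ for the algorithm's threshold, and $M_0$ could be larger than $M'$; but $M_0 \le 4M/r$ and I'd show $\|a_i\|_p^p$ big enough compared to $M/r$ too. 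Cleanest: the algorithm thresholds at $(12/\eps)^p M_0 \le (12/\eps)^p \cdot 4M/r$, and for such $a_i$, $v_i \ge (1-\eps)\|a_i\|_p^p$ exceeds this, so $i \in L$. Finally, for the orientation claim $|S_i| \ge s/2$: this is \emph{immediate} from the good-event analysis, since on each good repetition $E_{i,j}$ we have $\|\at_{i,j} - a_i\|_p^p \le 4M/r \le 4M'$ (using $M' \ge$ ... no) — I need $\|\at_{i,j}-a_i\|_p \le \eps\|a_i\|_p/9$, i.e. $4M/r \le (\eps/9)^p\|a_i\|_p^p$, which holds whenever $\|a_i\|_p^p \ge (9/\eps)^p \cdot 4M/r$; but the lemma's $S_i$ claim is stated for \emph{all} $i$ without hypothesis — so actually $|S_i| \ge s/2$ must hold trivially when $\|a_i\|_p$ is small (both sides tiny?) — no. I suspect the intended reading is: $S_i$ is only meaningful/used for the heavy $i$, or the good-repetition bound is in fact $\|\at_{i,j}-a_i\|_p \le \eps\|a_i\|_p/9$ by a \emph{sharper} noise bound on $E_{i,j}$ (noise $\le (\eps/9)^p\|a_i\|_p^p$ directly, since for $r = 8\gamma^{-1}(12/\eps)^p$ and $\|a_i\|_p^p \ge \gamma M$ we get $4M/r \le (\eps/9)^p \gamma M \cdot$const $\le$ ...). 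Checking: $4M/r = 4M\gamma(\eps/12)^p/8 = (\gamma M/2)(\eps/12)^p \le (\|a_i\|_p^p/2)(\eps/12)^p$, and $(\eps/12)^p/2 \le (\eps/9)^p$ iff $(9/12)^p \le 2$, true. So on a good repetition with this $r$, the noise is genuinely $\le (\eps/9)^p\|a_i\|_p^p$, giving $\|\at_{i,j}-a_i\|_p \le (\eps/9)\|a_i\|_p$, and since $\ge s/2$ repetitions are good (Bernoulli concentration, \cref{lem: bernolem}), $|S_i|\ge s/2$. I would write the proof by: (i) defining $E_{i,j}$, bounding $\Pr[E_{i,j}]\ge 0.7$; (ii) union-bounding \cref{lem: bernolem} over all $n$ indices to get $\ge s/2$ good repetitions per $i$ w.p. $\ge 1-\delta$; (iii) deriving the norm sandwich on good repetitions; (iv) doing the three constant-chasing arithmetic steps above for the $v_i$ vs $M'$ claims; (v) the orientation claim from the sharper noise bound. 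I expect step (iv) — getting the constants in $(12/\eps)^p$ vs $(3/\eps)^p$ to close the gap left by the loose percentile comparison $M' \le 4M/r$ (no reverse bound) — to be the only genuinely delicate point; everything else is triangle inequality plus Chernoff.
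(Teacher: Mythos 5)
Your high-level plan (good-repetition event, Bernoulli concentration union-bounded over $i$, triangle inequality on good repetitions, constant-chasing for the three claims) matches the paper's structure, and your treatment of the final $|S_i|\geq s/2$ claim via $M'\le M_0\le(\eps/12)^p v_i\le(\eps/9)^p\|a_i\|_p^p$ is essentially what the paper does once the earlier claims are in hand. But there is a genuine gap in the middle, and it is exactly where you flag ``the crux.''

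You bound the collision noise by $\tau=4M/r$ via Markov, and then correctly observe that the only available comparison is $M'\le\tau$ --- the wrong direction. You then try to close the gap by arithmetic (``absorbing $(4M/r)^{1/p}$ as a $(3/\eps)$-fraction of the RHS'') and, for the second claim, even write $(4M/r)^{1/p}\le 4^{1/p}(M')^{1/p}$; this last inequality is simply backwards (from $M'\le 4M/r$ one gets $(M')^{1/p}\le(4M/r)^{1/p}$, not the reverse), so the sandwich $v_i=(1\pm\eps)\|a_i\|_p^p$ does not follow. With only $M'\le 4M/r$, the noise $\tau$ can be arbitrarily larger than $M'$, so $v_i\ge(12/\eps)^pM'$ places no lower bound on $\|a_i\|_p^p$ and $\|a_i\|_p^p\ge(3/\eps)^pM'$ does not control $\tau^{1/p}/\|a_i\|_p$. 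Your speculative resolution (b) --- borrowing the specific $r$ from \cref{thm_main:findhh} to get $4M/r\lesssim(\eps/9)^p\|a_i\|_p^p$ --- is circular: the lemma is a standalone statement about all $i\in[n]$ under its own hypotheses $r\ge 50$, and it is invoked to prove that theorem, not the other way around.

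The missing idea, which the paper uses, is to bound the noise by $M'$ \emph{directly} rather than by $M/r$. Define $M''$ as the $0.575$-percentile of $G(B\setminus\{i\})^p$ (the bucket contribution with row $i$ excised). Since $\{G(B\setminus\{i\})^p\le M'\}\supseteq\{G(B)^p\le M'\}\setminus\{i\in B\}$ and $P(i\in B)=1/r\le 0.02$ (this is where $r\ge 50$ is used), one gets $P(G(B\setminus\{i\})^p\le M')\ge 0.6-0.02=0.58>0.575$, hence $M''\le M'$. Combined with \cref{lem: bernolem}, at least half the repetitions satisfy $\|b_{i,j}\|_p^p\le M''\le M'$. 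Now every comparison is between $\|a_i\|_p^p$ and $M'$ itself --- no $4M/r$ appears --- and the $(12/\eps)^p$ vs.\ $(3/\eps)^p$ gap is used only to absorb the triangle-inequality slack, not to bridge an uncontrolled percentile-to-Markov ratio. You should replace the $4M/r$ noise bound with this conditional-percentile bound; the rest of your outline then goes through.
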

\begin{proof}
    By \cref{lem: Mbound}, it holds that $M' \leq M_0$ with probability $1-\delta$.
    
    We show the first claim by contraposition: rows $a_i$ with small norms, i.e., $\norm{a_i}_p^p < (3/\varepsilon)^p M'$ will not be part of the output $L$.     
    Fix $i \in [n]$ and for each repetition $j \in [s]$ let $B(i, j)$ be the bucket that contains $i$. We set $b_{i, j} = \sum_{l \in B(i, j) \setminus \{i\}}\sigma_{l, j}a_l$ to be the content of the bucket after sketching all data, but with the contribution of $a_i$ removed.
    We set
    $$M''=\inf \{ w \in \mathbb{R}_{\geq 0} \mid P(G(B \setminus\{i\})^p \leq w)\geq 0.575 \}.$$
    Note that for any bucket $B$ it holds that $P(i \in B)=1/r \leq 0.02 $. Thus, we have that
    \[
        P(G(B \setminus\{i\})^p \leq M')\geq P(G(B)^p \leq M')-P(i \notin B) \geq 0.58 > 0.575.
    \] 
    and consequently $M'' \leq M' $.
    
     By definition of the $.575$-percentile $M''$ and applying \cref{lem: bernolem}, we get that
    $$ \norm{b_{i, j}}_p^p \leq M'' \leq M'$$
    holds for at least half of the indices of $j \in [s]$ up to failure probability at most $\delta/n$ which will be assumed in the remainder of the proof.
         
    For all $i$ and $j$ that satisfy $\norm{b_{i, j}}_p^p \leq M'$, we have that
    \begin{align*}
        G(B(i, j))^p &=\norm{\sigma_{i, j}a_i+ b_{i, j}}_p^p
        \leq (\norm{a_i }_p + M'^{1/p})^p\\
        &\leq (2\max\{\norm{a_i }_p, M'^{1/p} \})^p \leq \max\{4\norm{a_i }_p^p , 4M' \}.
    \end{align*}
    Then it also holds that $v_i=\median_{j\in [s]} \norm{\at_{i, j}}_p^p \leq \max\{4\norm{a_i }_p^p , 4M' \}$.
    Thus, we can conclude that if index $i$ satisfies $\norm{a_i}_p^p <  (3/\varepsilon)^p M' \leq (3/\varepsilon)^p M_0$ then it holds that 
        \[
            v_i < \max\{(12/\varepsilon)^p M_0   , 4M' \} \leq  (12/\varepsilon)^p M_0
        \]
   and consequently $i \notin L $.
    
   Next, we show that rows with larger norm $\norm{a_i}_p^p \geq  (3/\varepsilon)^p M' $ are well approximated assuming that $ \norm{b_{i, j}}_p^p \leq M'$. Let $\gamma:=\frac{M'}{\norm{a_i }_p^p} \leq (\varepsilon/3)^p $. 
   Then by the triangle inequality it holds that
    \begin{align*}
       G(B(i, j))^p=\norm{\sigma_{i, j}a_i+ b_{i, j}}_p^p
       \leq (1+ \gamma^{1/p})^p \norm{a_i}_p^p
       \leq (1+3\gamma^{1/p})\norm{a_i}_p^p
       \leq (1+\varepsilon)\norm{a_i}_p^p
   \end{align*}
   and similarly we have
   \begin{align*}
        G(B(i, j))^p=\|\sigma_{i, j}a_i+ b_{i, j}\|_p^p
       \geq (1-3\gamma^{1/p})\norm{a_i}_p^p
       \geq (1-\varepsilon)\norm{a_i}_p^p.
   \end{align*}
    Since $ \|b_{i, j}\|_p^p \leq M'$ holds for at least half of the indices $j \in [s]$ we can conclude that 
    \[
        v_i=\median_{j \in [s]} \norm{\at_{i, j}}_p^p \in \left[(1-\varepsilon)\norm{a_i}_p^p, (1+\varepsilon)\norm{a_i}_p^p\right]. 
    \]

    Finally, we show that for $i$ with $i \in L$ it holds that $ \norm{\at_{i,j}-a_i}_p \leq (\varepsilon/9)\norm{a_i}_p$ and that $|S_i|\geq s/2$.
    Using that $\varepsilon \leq 1/3$ we have for $i \in [n]$ with $v_i \geq (12/\varepsilon)^p M_0$ that
    \[
        \norm{\at_{i,j}-a_i}_p^p = \norm{b_{i, j}}_p^p \leq M' \leq M_0 \leq (\varepsilon/12)^p\,v_i
        \leq (\varepsilon/12)^p(1+\varepsilon)\norm{a_i}_p^p\leq (\varepsilon/9)^p\norm{a_i}_p^p
    \]
    which also yields
    \[
        |\{ j \in [s]  \mid \norm{\at_{i, j}-a_i}_p\leq \varepsilon\|a_i\|_p/9 \} | \geq s/2.
    \]

    By the union bound, these properties hold for all $i$ simultaneously with probability at least $1-O(\delta)$. Rescaling $\delta$ by a constant concludes the proof.
\end{proof}

We are now ready to prove that \cref{alg:findhh} works as intended for the right choice of $r$ and $s$:

\begin{theorem}[copy of \cref{thm_main:findhh}]\label{thm: findhh}
    Let $\eps,\delta\in(0,1/20],\gamma\in(0,1)$. Let $L$ be the list of tuples in the output of \cref{alg:findhh}. Further let $S_R(r/20)$ be the subset of rows excluding the $r/20$ largest $\ell_p$ norms and let $M=\sum_{i \in S_R} \norm{a_i}_p^p$.
    If $r = 8\gamma^{-1} \cdot (12/\varepsilon)^p$ and $s \geq 3\ln(6n /\delta)/0.025^3$ then with probability at least $1-\delta$, the following properties hold: for any element $(i, \at_i) \in L $ it holds that $\norm{\at_i-a_i}_p\leq (\varepsilon/3)\norm{a_i}_p$ and $\norm{\at_i}_p^p=(1 \pm \varepsilon)\norm{a_i}_p^p$.
    Further, for any $i \in [n]$ with $\norm{a_i}_p^p \geq \gamma M$ it holds that $i \in L$.
\end{theorem}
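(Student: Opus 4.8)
The plan is to obtain \cref{thm: findhh} by composing \cref{lem: Mbound} with \cref{lem: L-Lemma}, after reconciling their parameters and failure probabilities. I would first apply \cref{lem: Mbound} with its internal failure parameter set to a constant multiple of $\delta$; this is permitted because $s \geq 3\ln(6n/\delta)/0.025^3$ dominates the required $3\ln(12/\delta)/0.025^3$, and it yields the event $\mathcal E_0 := \{M' \leq M_0 \leq 4M/r\}$ with probability at least $1-\delta/3$. Conditioned on $\mathcal E_0$, I would invoke \cref{lem: L-Lemma}, first checking its hypotheses: $0<\eps\leq 1/20\leq 1/3$; $r = 8\gamma^{-1}(12/\eps)^p > 8\cdot 12 > 50$ since $\gamma<1$, $\eps\le 1$ and $p\ge 1$; and $s$ is large enough, the slack between $\ln(6n/\delta)$ and $\ln(2n/\delta)$ covering the constant rescaling in that lemma. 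This delivers, with conditional probability at least $1-\delta/3$, all four conclusions of \cref{lem: L-Lemma} simultaneously for every $i\in[n]$; a union bound then gives overall success probability at least $1-\delta$.

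For the inclusion of heavy rows, I would chain elementary inequalities driven by the choice of $r$. On $\mathcal E_0$ we have $M' \leq 4M/r = M\gamma/\bigl(2(12/\eps)^p\bigr)$, hence $(12/\eps)^p M'/(1-\eps) \leq M\gamma/\bigl(2(1-\eps)\bigr) \leq M\gamma$ because $\eps\le 1/20$ forces $1-\eps>1/2$. Thus any $i$ with $\|a_i\|_p^p \geq \gamma M$ also satisfies $\|a_i\|_p^p \geq (12/\eps)^p M'/(1-\eps)$, and the corresponding conclusion of \cref{lem: L-Lemma} places $i\in L$.

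For the output-quality claims, fix $(i,\at_i)\in L$. The acceptance test of \cref{alg:findhh} means $v_i \geq (12/\eps)^p M_0 \geq (12/\eps)^p M'$, so \cref{lem: L-Lemma} gives $\|a_i\|_p^p \geq (3/\eps)^p M'$, then $v_i = (1\pm\eps)\|a_i\|_p^p$, and (since $i\in L$) $|S_i| \geq s/2$, where $S_i = \{j : \|\at_{i,j}-a_i\|_p \leq \eps\|a_i\|_p/9\}$; in fact the proof of \cref{lem: L-Lemma} yields $|S_i|\ge 0.56 s$ via \cref{lem: bernolem}. It remains to analyse the returned representative $\at_i = \at_{i,j^\star}$, where $j^\star$ minimises $\median_{j'\in[s]}\|\at_{i,j}-\at_{i,j'}\|_p^p$ over $j\in[s]$. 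For any $j_0\in S_i$ the triangle inequality gives $\|\at_{i,j_0}-\at_{i,j'}\|_p \leq 2\eps\|a_i\|_p/9$ for every $j'\in S_i$, and since these account for at least half the indices, $\median_{j'}\|\at_{i,j_0}-\at_{i,j'}\|_p^p \leq (2\eps\|a_i\|_p/9)^p$; by minimality the same bound holds at $j^\star$. Hence at least $s/2$ indices $j'$ satisfy $\|\at_{i,j^\star}-\at_{i,j'}\|_p \leq 2\eps\|a_i\|_p/9$, and because this set and $S_i$ have sizes summing to more than $s$ they intersect; choosing $j'$ in the intersection and applying the triangle inequality once more yields $\|\at_i-a_i\|_p \leq 2\eps\|a_i\|_p/9 + \eps\|a_i\|_p/9 = (\eps/3)\|a_i\|_p$. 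Finally $\|\at_i\|_p \in [(1-\eps/3)\|a_i\|_p,(1+\eps/3)\|a_i\|_p]$, and raising to the power $p\in[1,2]$ with $(1+\eps/3)^p\le 1+\eps$ and $(1-\eps/3)^p\ge 1-\eps$ (both valid for $\eps\le 1$) gives $\|\at_i\|_p^p = (1\pm\eps)\|a_i\|_p^p$, completing the proof.

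I expect the only real friction to be in the percentile and median arithmetic rather than in any of the inequality chains: one must verify that the ``at least half'' counts produced by \cref{lem: bernolem} are quantitatively strong enough (the $0.56 s$ slack above) to force the two size-$\ge s/2$ sets to intersect, and one must track precisely which portion of the $\delta$ budget is charged to the failure of \cref{lem: Mbound} versus that of \cref{lem: L-Lemma} so that the logarithmic factors aggregate to the stated $s \geq 3\ln(6n/\delta)/0.025^3$.
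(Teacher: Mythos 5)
Your proposal is correct and follows essentially the same route as the paper's proof: invoke \cref{lem: Mbound} and \cref{lem: L-Lemma} with a union bound, use the upper bound $M_0 \leq 4M/r$ together with the choice $r = 8\gamma^{-1}(12/\eps)^p$ to guarantee heavy rows land in $L$, and use the lower bound $M_0 \geq M'$ plus the median-of-medians argument over $S_i$ to recover the $(\eps/3)$-perturbation and $(1\pm\eps)$ norm guarantees. The two inessential differences are that you route the heavy-row inclusion through the ``In particular'' consequence of \cref{lem: L-Lemma} (where the paper re-derives it explicitly via $v_i \geq (1-\eps)\norm{a_i}_p^p \geq \tfrac12 \cdot \tfrac{\gamma r M_0}{4} = (12/\eps)^p M_0$), and that you explicitly flag the boundary case in the ``two sets of size $\geq s/2$ must intersect'' step and note the $\geq 0.56s$ slack coming from \cref{lem: bernolem} inside the proof of \cref{lem: L-Lemma} — a point the paper glosses over with ``since $|S_i|\geq s/2$ there must be at least one element''. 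Your version is if anything slightly more careful on that corner.
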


\begin{proof}[Proof of \cref{thm_main:findhh}/\ref{thm: findhh}]
    The statements of \cref{lem: Mbound} and \cref{lem: L-Lemma} hold with failure probability at most $\delta= 2(\delta/3)+ (\delta/3)$ using the union bound.
    Then we have that $M' \leq M_0 \leq 4M/r$ and for any $i \in L$ it holds that $v_i \geq (12/\varepsilon)^p M_0 \geq (12/\varepsilon)^p M' $. \cref{lem: L-Lemma} yields that $v_i=(1\pm \varepsilon)\norm{a_i}_p^p$. 
    For the set $S_i=\{j \in [s] \mid \norm{a_i-\at_{i, j}}_p\leq \varepsilon\norm{a_i}_p/9\}$ we have that $|S_i|\geq s/2$.
    
    For any elements $j,j' \in S_i$ we have $$\norm{\at_{i, j}-\at_{i, j'}}_p \leq \norm{\at_{i, j}-a_i}_p + \norm{a_i-\at_{i, j'}}_p \leq 2\eps\norm{a_i}_p/9$$ by the triangle inequality. It follows that $\median_{j' \in [s]}\{ \norm{\at_{i, j}-\at_{i, j'}}_p \}\leq 2\eps\norm{a_i}_p^p/9$ since $|S_i| \geq s/2$.

    Let $\at_i = \at_{i,j}$ for $j \in [s]$ minimizing $\median_{j' \in [s]}\{ \norm{\at_{i, j}-\at_{i, j'}}_p^p \}$.
    Again since $|S_i| \geq s/2$ there must be at least one element in $j' \in S_i$ with $\norm{\at_{i, j}-\at_{i, j'}}_p \leq 2\varepsilon\norm{a_i}_p/9 $. Using the triangle inequality again we get that
    \[
        \norm{\at_{i}-a_i}_p = \norm{\at_{i, j}-a_i}_p \leq \norm{\at_{i, j}-\at_{i, j'}}_p + \norm{\at_{i, j'}-a_i}_p
        \leq (2\varepsilon/9+\varepsilon/9)\norm{a_i}_p\leq \varepsilon\norm{a_i}_p/3.
    \]
    We note that since $\norm{\tilde{a}_{ij}-a_i}_p\leq {\eps}\norm{a_i}_p/3$ holds, we have by the triangle inequality that
\[
    \norm{\tilde{a}_{ij}}_p^p\leq (\norm{a_i}_p+\norm{\tilde{a}_{ij}-a_i}_p )^p \leq (1+\varepsilon)\norm{a_i}_p^p
\]
and
\[
    \norm{\tilde{a}_{ij}}_p^p\geq (\norm{{a_i}}_p-\norm{\tilde{a}_{ij}-a_i}_p )^p \geq (1-\varepsilon)\norm{a_i}_p^p.
\]
    
    Finally, since $ M' \leq M_0 \leq 4 M/r$, or equivalently $ M'r/4 \leq M_0 r/4 \leq M$, we also have for any $i$ with $\norm{a_i}_p^p \geq \gamma M$ that 
    \[
        \norm{a_i}_p^p \geq {\gamma M}
        \geq \gamma r M_0 /4
    \]
     and thus by \cref{lem: L-Lemma}
    \[
        v_i \geq (1-\varepsilon) \norm{a_i}_p^p\geq \frac{1}{2} \cdot 
        \frac{\gamma r M_0}{4}
        \geq (12/\varepsilon)^p M_0.
    \]
    which implies that $i \in L$. 
\end{proof}

\section{Analysis of \texorpdfstring{\cref{alg:turnstilesampling}}{}}

\paragraph{High level idea} Consider the matrix $A \in \mathbb{R}^{n \times 1}$ consisting of $n$ copies of the row $1$.
If we multiply each row with $t_i^{-1}$ where $t_i \in (0, 1] $ are drawn uniformly at random then what roughly happens is that the new matrix $A'$ with rows $a'_i= a_i/t_i$ consists of the rows $n, n/2, n/3 , \dots , n/(n-1), 1$.
We then have that $\norm{A'}_1= \Theta(n \log(n))$ and the $k$ largest elements of $A'$ are bounded from below by $n/k$.
Or in other words $M=(n \log(n))$ and we want to find all rows with $\ell_1$ norm greater or equal to $n/k$.
If we now apply \cref{alg:findhh} to $A'$ with $r= O( k \log(n) /\varepsilon)$ then all elements with $a_i'\geq {n/k}=\Theta(M/(k\log(n)))$ will be in $L$ with high probability.
The challenge will be to control the randomness of the variables $t_i$ and to generalize the idea to arbitrary instances and different $p$'s.

Instead of analyzing \cref{alg:turnstilesampling} as presented, we analyze a slightly modified version, where \cref{alg:findhh} is applied twice in parallel. The main purpose of the modification is to keep the analysis clean and simple. The presented \cref{alg:turnstilesampling} is likely to have the same properties up to small constant factors but the analysis would require to work with conditional probabilities which only leads to additional technicalities that distract from understanding the main ideas of our algorithm. 

\paragraph{Modification of \cref{alg:findhh}}\label{mod: alg2}
To simplify the analysis, we run \cref{alg:findhh} twice with two independent copies of the scaling random variables $t_i, i\in [n]$. The first copy is used to compute $\alpha$ and the second generates the sample using the value of $\alpha$ from the first copy. This makes the estimate $\alpha$ independent of the sample and avoids purely technical difficulties in the analysis. However, it is likely not necessary and is therefore not presented in the pseudo code. In the first iteration, we use an increased value of $k'=(3/2)k$ and we stop after defining $\alpha$ (line 9). In the second iteration, we skip lines 8-9 and use $\alpha$ from the previous iteration.

We define $S \subseteq L$ to be the set of indices with $\norm{\at_{i}}_p^p \geq \alpha$ returned at the end. We assume that $t_i \in (0, 1] $ are drawn i.i.d. uniformly at random and $A' = TA \in \mathbb{R}^{n \times d}$ is the matrix with rows $a_i'=t_i^{-1/p}a_i$. 

Our main theorem is that given $k \in [n]$ with an appropriate choice of $r, s$ \cref{alg:turnstilesampling} returns a subsample $S \subseteq [n] \times \mathbb{R}^d \times \mathbb{R}_{\geq 1}$ such that $|S|\in [k, 2k] $, index $i$ is sampled with probability at least $\min \{1, \frac{k \norm{a}_p^p}{\norm{A}_p^p}\}$ and for $(i, \at_i, w_i) \in S $ we have that $\norm{\at_i-a_i}_p=(\varepsilon/3)\norm{a_i}_p$ and $w_i=(1\pm \varepsilon)P(i \in S)^{-1}$.
Further we can use the weights to approximate $\norm{A}_p^p $ up to a factor of $(1\pm\varepsilon)$.

\begin{theorem}[copy of \cref{thm_main:alg2}]\label{thm: alg2}
    If we apply the modified version of \cref{alg:turnstilesampling} (see \cref{mod: alg2}) with $0 < \varepsilon,\delta \leq 1/20$, $k \geq 160\ln(12/\delta)$, $r \geq 32 k\ln(n) \cdot (72/\varepsilon)^p$, and $s \geq 3\ln(36n/\delta)/0.025^3$, then with probability at least $1- \delta$ it holds that
    \begin{itemize}
        \item[1)] $|S| \in [k, 2k]$,
        \item[2)] index $ i \in S$ is sampled with probability\\[5pt]\hspace*{33pt}$p_i:= P(i \in S) \geq \min \left\{1, \frac{k \norm{a}_p^p}{\norm{A}_p^p}\right\},$
        \item[3)] if $i \in S$ then $\norm{\at_i-a_i}_p \leq (\varepsilon/3) \norm{a_i}_p $,
        \item[4)] if $i \in S$ then $w_i =(1 \pm \varepsilon)\frac{1}{p_i}$,
        \item[5)] $\sum_{i \in S} w_i \norm{\at_i}_p^p= (1 \pm \varepsilon)\norm{A}_p^p$.
    \end{itemize}
\end{theorem}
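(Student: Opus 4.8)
The plan is to transfer the guarantees of \cref{thm: findhh} from the rescaled matrix $A'=TA$ to the original matrix $A$, controlling the extra randomness introduced by the scalars $t_i\sim U(0,1)$. I would organize the argument around four quantities: the (random) threshold $\alpha$, the tail mass $M''=\sum_{i\in S_R(r/20,A')}\norm{a_i'}_p^p$ of the rescaled matrix, the inclusion event $i\in S$, and the weight $w_i$. The key structural trick is the truncation device sketched in the overview: replace $A$ by $A(k)$, obtained by scaling down the very largest rows so that every row satisfies $\norm{a_i(k)}_p^p\geq\norm{A(k)}_p^p/k$; since $\norm{a_i'}_p\geq\norm{a_i}_p$ always, any row that is ``heavy'' for $A$ (i.e.\ $\norm{a_i}_p^p\geq\norm{A}_p^p/k$) is still heavy for $A(k)$, and truncation only makes the tail-mass bounds easier while never hurting the heavy rows we must capture. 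Because we run \cref{alg:findhh} twice with independent copies of the $t_i$'s (the modification in \cref{mod: alg2}), $\alpha$ computed from the first copy is independent of the second sketch, so conditioning on a good value of $\alpha$ we may apply \cref{thm: findhh} to the second copy cleanly.

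First I would pin down $\alpha$. Using the Paley--Zygmund / harmonic-series heuristic made rigorous: after truncation each $t_i$ satisfies $t_i>\max\{1/n,\norm{a_i}_p^p/u\}$ for a suitable $u=\Theta(\norm{A(k)}_p^p/k)$, which bounds $\E[\norm{a_i'}_p^p]\leq\norm{a_i}_p^p\ln n$ and $\Var[\norm{a_i'}_p^p]\leq 2u\norm{a_i}_p^p$; summing and applying Bernstein gives $M''=O(M\log n)$ with high probability, and it also gives that the number of rows of $A'$ exceeding $\norm{A(k)}_p^p/k$ is, up to constants, $k$ (using \cref{lem: altbernolem} for the count of the Bernoulli indicators $\mathbf 1[t_i\leq k\norm{a_i}_p^p/\norm{A}_p^p]$, whose expected sum is between $k$ and, say, $2k$ after truncation). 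This yields $\alpha=\Theta(\norm{A(k)}_p^p/k)=\Theta(\norm{A}_p^p/k)$ simultaneously with $|L|$ being large enough that $S_k$ is well-defined, and it gives item 1), $|S|\in[k,2k]$, by combining the lower bound ``all $k$ genuinely heavy rows land in $L$'' (via \cref{thm: findhh} applied with $\gamma M''$ as threshold and our bound $M''=O(M\log n)$, which is why $r\gtrsim k\log n\,(1/\eps)^p$ suffices) with the upper bound on the count of rows whose scaled norm exceeds $\alpha$.

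Next, items 2)--4). For 2), the marginal inclusion probability of $i$ is $\Pr[\norm{\at_i'}_p^p\geq\alpha]$; since \cref{thm: findhh} guarantees $\norm{\at_i'}_p^p=(1\pm\eps)\norm{a_i'}_p^p$ for every row that makes it into $L$, and since rows above the threshold are exactly those in $L$, this probability is $(1\pm\eps)\Pr[\norm{a_i'}_p^p\geq\alpha/(1\pm\eps)]=(1\pm\eps)\Pr[t_i\leq(1\pm\eps)\norm{a_i}_p^p/\alpha]=(1\pm\eps)\min\{1,\Theta(k\norm{a_i}_p^p/\norm{A}_p^p)\}$, where the last step uses the value of $\alpha$ from the previous paragraph and the uniform CDF $\Pr[t_i\leq x]=\min\{1,x\}$; absorbing constants into the $\Theta$ gives the claimed lower bound. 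Here it is essential that $\alpha$ (from copy one) is independent of the event on copy two, so that the only randomness in $\Pr[\cdot]$ is $t_i$ and the CountSketch hashes. Item 3) is immediate: it is exactly the per-row guarantee $\norm{\at_i'-a_i'}_p\leq(\eps/3)\norm{a_i'}_p$ from \cref{thm: findhh} rescaled by $t_i^{1/p}$, which is scale-invariant. For 4), the algorithm sets $w_i=1/\min\{1,\norm{\at_i}_p^p/\alpha\}$; using $\norm{\at_i}_p^p=t_i\norm{\at_i'}_p^p=(1\pm\eps)t_i\norm{a_i'}_p^p=(1\pm\eps)\norm{a_i}_p^p$ and the expression for $p_i$ from item 2), one checks $w_i=(1\pm O(\eps))/p_i$; after rescaling $\eps$ by a constant this is $(1\pm\eps)/p_i$.

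Finally, item 5) follows from 3) and 4): $\sum_{i\in S}w_i\norm{\at_i}_p^p$ is, up to $(1\pm\eps)$, $\sum_{i\in S}p_i^{-1}\norm{a_i}_p^p$, whose expectation is $\sum_{i\in[n]}\norm{a_i}_p^p=\norm{A}_p^p$; a concentration step — Bernstein again, using that each summand $p_i^{-1}\norm{a_i}_p^p\leq\alpha\cdot\Theta(1)=O(\norm{A}_p^p/k)$ is bounded and there are $\Theta(k)$ of them, so the variance is $O(\norm{A}_p^{2p}/k)$ — gives $(1\pm\eps)\norm{A}_p^p$ for $k\gtrsim\eps^{-2}\log(1/\delta)$, consistent with the hypothesis $k\geq 160\ln(12/\delta)$ after the usual rescaling of $\eps$. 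A final union bound over the $O(1)$ failure events (the two invocations of \cref{thm: findhh}, the Bernstein bounds for $M''$, for the count, and for the weighted sum) at level $\delta/\mathrm{const}$ closes the argument.

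\textbf{Main obstacle.} I expect the delicate part to be the joint control of $\alpha$ and $M''$: the random variables $\norm{a_i'}_p^p=\norm{a_i}_p^p/t_i$ are heavy-tailed (infinite mean before truncation), so the whole argument hinges on choosing the truncation level $u$ correctly — large enough that no genuinely heavy row is altered, small enough that the variance bound $\Var\leq 2u\norm{a_i}_p^p$ makes Bernstein effective — and then showing that the truncation commutes with everything else (the heavy rows of $A$ stay heavy, $\alpha$ for $A(k)$ equals $\alpha$ for $A$ up to constants, and feeding $\gamma M''$ with $M''=O(M\log n)$ into \cref{thm: findhh} still satisfies its hypothesis $r=8\gamma^{-1}(12/\eps)^p$ under the stated $r\geq 32k\ln n\,(72/\eps)^p$). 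Getting the constants to line up across the doubled sketch, the $(3/2)k$ inflation in the first copy, and the nested $(1\pm\eps)$ factors is the bookkeeping-heavy core of the proof.
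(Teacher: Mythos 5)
Your proposal follows essentially the same route as the paper's proof: the truncation device $A(k)$ with the threshold quantity $N(k)=\norm{A(k)}_p^p/k$, the Bernstein bound on the heavy-tailed rescaled tail mass giving $M''=O(M\log n)$, the Bernoulli count for $|S|\in[k,2k]$, the independence of $\alpha$ from the second copy, the uniform-CDF computation of the marginal inclusion probability, and a final Bernstein argument for the weighted sum with summands bounded by $O(\norm{A}_p^p/k)$. The only presentational difference is that the paper isolates the passage from $N(k)$ to its sketched analogue $\tilde N(k)$ in a separate lemma and sharpens $P(i\in S)$ to $(1\pm\eps/2)\norm{a_i}_p^p/\alpha$ via a sign-symmetry argument on the bucket noise, whereas you absorb both into generic $(1\pm O(\eps))$ factors, which suffices after rescaling.
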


To support readability, the proof of \cref{thm_main:alg2}/\ref{thm: alg2} is divided into multiple Lemmas.

Our first Lemma considers the unique number $N(k) \in \mathbb{R}_{\geq 0}$ such that the expected number of elements $i \in [n]$ with $ \norm{a_i'}_p^p \geq N(k) $ is $k$.
The properties that we show in this Lemma will allow to show that the number of elements is $|S|\in[k, 2k]$.
Further it will be used later to show that the largest $2k$ rows of $A'$ have a norm large enough to be in $L$ with failure probability at most $\delta$.
Before we state the lemma, we need to give some more definitions:

Recall that $S_L(k, A) \subseteq [n]$ is the set of indices of the elements with the $k$ largest norms (of $A$) and $S_R(k, A)=[n] \setminus S_L$.
We set $M(A, k):= \sum_{i \in S_R(k, A)}\norm{a_i}_p^p$.

We will show that all indices where $\norm{a_i}_p^p \geq \norm{A}_p^p/k $ will be sampled with probability at least $1-\delta$.
The exact value of $\norm{a_i}_p^p$ does not matter but if it gets large, it makes the analysis more complicated.
Since we want to provide a good understanding of our analysis, instead of assuming that $ \norm{a_i}_p^p \geq \norm{A}_p^p/k$ we define $A(k) \in \mathbb{R}^{n \times d}$ to be the truncated matrix that we get by scaling down the largest rows of $A$ so that all rows $a_i(k)$ of $A(k)$ satisfy $\norm{a_i(k)}_p^p \geq \norm{A(k)}_p^p/k $.

\begin{definition}\label{def:truncate}
    Let $u_k \in \mathbb{R} $ be the solution\footnote{We note that $u_k$ can be computed by scaling down the largest row(s). If there are multiple largest rows, we scale all of them down. $u_k$ exists if and only if the number of non-zero rows is larger or equal to $k$.} of the equation
    \[
        \frac{u_k}{\sum_{i=1}^n \min\{ u_k, \norm{a_i}_p^p \}}=\frac{1}{k}.
    \]
    Then we define $A(k)$ to be the matrix with
    \begin{align*}
        a_i(k)=\begin{cases}
            \frac{u_k^{1/p}}{\norm{a_i}_p}\cdot a_i & \norm{a_i}_p^p > u_k \\
            a_i & \norm{a_i}_p^p\leq u_k.
        \end{cases}
    \end{align*}
\end{definition}
In particular note that all elements $\norm{a_i}_p^p > u_k$ are truncated to $\norm{a_i(k)}_p^p = u_k$.

We already note the following properties of $A(k)$:
it holds that $S_R(A(k), k)=S_R(A, k)$ and $\norm{A(k)}_p^p \leq 2 \cdot \sum_{i \in S_R(A, k/2)}  \norm{a_i}_p^p$.
The first one follows immediately since there can be at most $k$ large rows that contribute $\norm{a_i}_p^p \geq \norm{A}_p^p/k$ and all others remain unchanged. The second claim will be proven in the following lemma.

\begin{lemma}\label{lem: newN(k)lemma}
    For $k \in [n]$ we set $N(k) \in \mathbb{R}_{\geq 0}$ to be the unique number such that the expected number of elements $i \in [n]$ with $ \norm{a_i'}_p^p \geq N(k) $ is $k$.
    Then it holds that
    \[
        \min \left\{\frac{\norm{A}_p^p}{k}, 2 M(A, k/2) \right\} \geq \frac{\norm{A(k)}_p^p}{k} = N(k) \geq  M(A, k)/k.
    \]
\end{lemma}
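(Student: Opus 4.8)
The plan is to make the implicit definition of $N(k)$ explicit, observe that it coincides with the equation defining $u_k$, and then bound the resulting sum of $\min$'s in three elementary ways.

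First, I would compute the expected count. Since $\norm{a_i'}_p^p=\norm{a_i}_p^p/t_i$ with $t_i\sim U(0,1)$, we have $\Pr[\norm{a_i'}_p^p\ge N]=\Pr[t_i\le \norm{a_i}_p^p/N]=\min\{1,\norm{a_i}_p^p/N\}$, so by linearity of expectation the expected number of rows with $\norm{a_i'}_p^p\ge N$ equals $\sum_{i=1}^n\min\{1,\norm{a_i}_p^p/N\}$. Setting this equal to $k$ and multiplying through by $N(k)$ — legitimate since $N(k)>0$ whenever $A$ has at least $k$ nonzero rows, cf.\ the footnote in \cref{def:truncate} — gives $\sum_{i=1}^n\min\{N(k),\norm{a_i}_p^p\}=k\,N(k)$, which is exactly the equation characterizing $u_k$. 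Hence $N(k)=u_k$, and since $\norm{a_i(k)}_p^p=\min\{u_k,\norm{a_i}_p^p\}$ by \cref{def:truncate}, we get $\norm{A(k)}_p^p=\sum_i\min\{u_k,\norm{a_i}_p^p\}=k\,u_k$, i.e.\ $\norm{A(k)}_p^p/k=N(k)$, the middle equality.

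Next I would establish the two upper bounds starting from the identity $k\,N(k)=\sum_{i=1}^n\min\{N(k),\norm{a_i}_p^p\}$. Bounding each term by $\norm{a_i}_p^p$ gives $k\,N(k)\le\norm{A}_p^p$. For the other bound, split the sum over $S_L(k/2,A)$ and $S_R(k/2,A)$: the (at most) $k/2$ terms indexed by $S_L(k/2,A)$ each contribute at most $N(k)$, while $\sum_{i\in S_R(k/2,A)}\min\{N(k),\norm{a_i}_p^p\}\le\sum_{i\in S_R(k/2,A)}\norm{a_i}_p^p=M(A,k/2)$; rearranging $k\,N(k)\le (k/2)N(k)+M(A,k/2)$ yields $\norm{A(k)}_p^p=k\,N(k)\le 2M(A,k/2)$ (which is also the bound promised in the discussion just before the lemma), and in particular $N(k)=\norm{A(k)}_p^p/k\le 2M(A,k/2)$. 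For the lower bound, the key observation is that at most $k$ rows satisfy $\norm{a_i}_p^p\ge N(k)$: each such row contributes exactly $1$ to $\sum_i\min\{1,\norm{a_i}_p^p/N(k)\}=k$ and the remaining terms are nonnegative. Since $S_L(k,A)$ consists of the $k$ rows of largest $\ell_p$ norm, this forces $\norm{a_i}_p^p\le N(k)$ for every $i\in S_R(k,A)$ — otherwise $S_L(k,A)$ together with such an $i$ would exhibit $k+1$ rows of norm $\ge N(k)$. Hence $\min\{N(k),\norm{a_i}_p^p\}=\norm{a_i}_p^p$ on $S_R(k,A)$, so $k\,N(k)\ge\sum_{i\in S_R(k,A)}\norm{a_i}_p^p=M(A,k)$, i.e.\ $N(k)\ge M(A,k)/k$; chaining the displayed inequalities gives the claim.

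I do not anticipate a real obstacle: the whole proof rests on the single identity $k\,N(k)=\sum_i\min\{N(k),\norm{a_i}_p^p\}=\norm{A(k)}_p^p$ together with three trivial estimates of a sum of $\min$'s. The only points needing a little care are (i) justifying the rearrangement of the defining equation of $N(k)$, i.e.\ its positivity and uniqueness, which rely on $A$ having at least $k$ nonzero rows; (ii) the tie-breaking convention in $S_L,S_R$ — the "$k{+}1$ heavy rows" contradiction uses that every element of $S_R(k,A)$ has norm no larger than every element of $S_L(k,A)$, which holds by the definition of "the $k$ largest"; and (iii) the non-integrality of $k/2$, which one resolves by replacing $k/2$ with $\floor{k/2}$ and which affects only the constant in the second bound by a negligible amount.
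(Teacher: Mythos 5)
Your proposal is correct and follows the same overall structure as the paper's proof (establish the middle equality, then derive the three bounds). The differences are matters of exposition rather than substance, but they are worth noting: the paper derives $N(k)=\norm{A(k)}_p^p/k$ via a probabilistic argument about Bernoulli indicators on the truncated matrix $A(k)$, appealing informally to "increasing the norms back to their original size," whereas you obtain the same identity more transparently by rearranging the defining equation of $N(k)$ into $\sum_i \min\{N(k),\norm{a_i}_p^p\}=kN(k)$ and recognizing it as the defining equation of $u_k$, so that $N(k)=u_k$ and $\norm{A(k)}_p^p=ku_k$ drop out immediately. Likewise, for the lower bound the paper invokes $S_R(A(k),k)=S_R(A,k)$ and $M(A(k),k)\le\norm{A(k)}_p^p$, which requires a short side argument about which rows the truncation touches and how ties are broken; your counting argument (at most $k$ rows can have $\norm{a_i}_p^p\ge N(k)$, hence every row of $S_R(k,A)$ has $\norm{a_i}_p^p\le N(k)$) arrives at the same conclusion while sidestepping that bookkeeping. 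The two upper bounds are derived by the same term-by-term estimates in both proofs. In short, your route buys a slightly more self-contained and algebraic derivation of $N(k)=\norm{A(k)}_p^p/k$ and of the lower bound, at no cost in generality.
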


\begin{proof}
    We first prove that $N(k) =  \frac{\norm{A(k)}_p^p}{k}$. For $i \in [n]$ define the Bernoulli random variable $X_i=1$ if $t_i\leq k \norm{a_i(k)}_p^p/\norm{A(k)}_p^p$ and $X_i=0$ otherwise. Note that $X_i=1$ iff $\norm{a_i'(k)}_p^p = \norm{a_i(k)}_p^p/t_i \geq \frac{\norm{A(k)}_p^p}{k}$. Thus, $X_i=1$ holds with probability $p_i= \min \{1,  k \norm{a_i(k)}_p^p/\norm{A(k)}_p^p \}=  k \norm{a_i(k)}_p^p/\norm{A(k)}_p^p$ by definition of $A(k)$.
    Let $X=\sum_{i=1}^n X_i$.
    Observe that 
    \[
        \E(X)=\sum_{i=1}^n p_i = \sum_{i=1}^n k \norm{a_i(k)}_p^p/\norm{A(k)}_p^p = k.
    \]
    To see this, note that the truncated largest rows satisfy $\norm{a_i(k)}_p^p/\norm{A(k)}_p^p = 1/k$ by \cref{def:truncate}. Therefore their probability equals $p_i=1$. Now, if we increase their norms back to their original size, then the probabilities remain truncated at $1$, and thus do not change. Therefore $\E(X)=k$ holds also for the original matrix $A$. By definition of $N(k)$ we get that $N(k) = \frac{\norm{A(k)}_p^p}{k}.$
    
    Since $S_R(A(k), k)=S_R(A, k)$ it holds that
    \[
        \frac{\norm{A}_p^p}{k} \geq \frac{\norm{A(k)}_p^p}{k} \geq \frac{M(A(k), k)}{k} =\frac{M(A, k)}{k}.
    \]
    Further since $ \frac{\norm{a_i(k)}_p^p}{\norm{A(k)}_p^p}\leq \frac{1}{k}$ we have that
    \begin{align*}
        \sum_{i \in S_L(A(k), k/2)}\norm{a_i(k)}_p^p
        = \sum_{i \in S_L(A(k), k/2)}\norm{A(k)}_p^p \cdot \frac{\norm{a_i(k)}_p^p}{\norm{A(k)}_p^p}
        \leq \sum_{i \in S_L(A(k), k/2)}\norm{A(k)}_p^p \cdot \frac{1}{k}
        = \frac{\norm{A(k)}_p^p}{2}
    \end{align*}
    and consequently $M(A(k), k/2) = \sum_{i \in S_R(A(k), k/2)}\norm{a_i}_p^p  \geq \frac{\norm{A(k)}_p^p}{2}$. We conclude that
    \begin{align*}
        \frac{\norm{A(k)}_p^p}{k} \leq \norm{A(k)}_p^p \leq 2 \cdot M(A(k), k/2)\leq 2 \cdot M(A, k/2).
    \end{align*}
\end{proof}

Our next Lemma shows that if $k$ is large enough then the number of rows with $ \norm{a_i'}_p^p \geq N(k)$ is roughly $k$.

\begin{lemma}\label{lem: numberofN(k)}
    Assume that $k \geq 160\ln(2/\delta)$.
    Then it holds that $||\{ i\in L \mid \norm{a_i'}_p^p \geq N(k)\}|-k|\leq k/8$ with failure probability at most $\delta$.
\end{lemma}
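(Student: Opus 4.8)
The plan is to write the target count as a sum of indicator random variables and apply the Bernoulli concentration tools already developed. Define, for $i \in [n]$, the Bernoulli variable $X_i = 1$ if $\norm{a_i'}_p^p \geq N(k)$, i.e. if $t_i \leq \norm{a_i}_p^p / N(k)$, and $X_i = 0$ otherwise, so that $X := \sum_{i=1}^n X_i = |\{ i \in [n] \mid \norm{a_i'}_p^p \geq N(k)\}|$. By the definition of $N(k)$ we have $\E(X) = k$ exactly. Since $k \geq 160\ln(2/\delta) \geq 20\ln(2/\delta)$ and $\E(X) = k \leq 9k$, \cref{lem: altbernolem} (with the parameter ``$k$'' of that lemma set to $k/8$, which is admissible because $k/8 \geq 20\ln(2/\delta)$ as $k \geq 160\ln(2/\delta)$) gives
\[
    \Pr\bigl(|X - k| \geq k/8\bigr) \leq \delta.
\]
So with probability at least $1-\delta$, the number of indices $i \in [n]$ with $\norm{a_i'}_p^p \geq N(k)$ lies in $[7k/8, 9k/8]$.

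The remaining point is that the lemma statement counts indices in $L$, not in $[n]$; I need that every such heavy index of $A'$ is actually reported by \cref{alg:findhh}, i.e. $\{ i \in [n] \mid \norm{a_i'}_p^p \geq N(k)\} \subseteq L$ (the reverse containment $\{ i \in L \mid \norm{a_i'}_p^p \geq N(k)\} \subseteq \{ i \in [n] \mid \norm{a_i'}_p^p \geq N(k)\}$ is trivial). For this I would invoke \cref{thm: findhh} applied to the stream $A'$: it suffices to check that $N(k) \geq \gamma M(A', r/20)$ for the value $\gamma$ corresponding to the chosen $r$, namely $\gamma = 8 (12/\varepsilon)^p / r$. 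By \cref{lem: newN(k)lemma} we have $N(k) \geq M(A,k)/k$, and more to the point we can control $M(A', r/20) = M'' $ (the truncated-tail mass of the rescaled matrix), whose expectation is $O(M(A,k)\log n)$ by the harmonic-series argument sketched in the overview. Combining $r \geq 32 k \ln(n) (72/\varepsilon)^p$ with these bounds makes $\gamma M(A', r/20)$ a constant factor below $N(k)$ with high probability, so \cref{thm: findhh} puts every index with $\norm{a_i'}_p^p \geq N(k)$ into $L$. Taking a union bound over this event and the concentration event above, and rescaling $\delta$ by a constant, yields the claim.

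The main obstacle is the tail-mass bound $M(A', r/20) = O(M(A,k)\log n)$ needed to invoke \cref{thm: findhh}: the individual summands $\norm{a_i'}_p^p = \norm{a_i}_p^p / t_i$ have unbounded expectation, so one cannot bound this sum naively. The fix, as indicated in the technical overview, is to first pass to the truncated matrix $A(k)$ (\cref{def:truncate}), after which the relevant $t_i$ are bounded below by $\max\{1/n, \norm{a_i}_p^p/u_k\}$; this makes $\E(\norm{a_i'(k)}_p^p) = O(\norm{a_i(k)}_p^p \log n)$ and the variance $O(u_k \norm{a_i(k)}_p^p)$, after which a Bernstein- or Chebyshev-type bound on $M(A', r/20)$ goes through. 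I would expect this bookkeeping — and verifying that truncation does not disturb the exact identity $\E(X) = k$, which is precisely the content of \cref{lem: newN(k)lemma} — to be where the real work lies; the concentration step itself is a one-line application of \cref{lem: altbernolem}. One subtlety to flag: \cref{lem: altbernolem} requires $\E(X) \leq 9k$, which holds here with equality-type slack since $\E(X) = k$, but if in the modified algorithm the first copy uses $k' = (3/2)k$ one should apply the lemma with $\E(X) = k'$ and deviation $k'/8$, still comfortably within the hypotheses.
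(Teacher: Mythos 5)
Your concentration step is exactly the paper's proof: define the indicators $X_i=\mathbbm{1}\{t_i\leq\norm{a_i}_p^p/N(k)\}$, observe $\E(X)=k$ from the very definition of $N(k)$, and invoke \cref{lem: altbernolem} with its parameter set to $k/8$ (legitimate because $k/8\geq 20\ln(2/\delta)$ and $\E(X)=k\leq 9\cdot(k/8)$). That is the whole of the paper's argument, and you have it right.

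The second half of your proposal, however, overreaches. You correctly spot that the lemma as typeset writes $L$ while the concentration argument counts indices over all of $[n]$, and you then try to close this gap by invoking \cref{thm: findhh}, \cref{lem: newN(k)lemma}, and the harmonic-series tail bound $M(A',\cdot)=O(M(A,\cdot)\log n)$. None of this can be carried out from the lemma's stated hypotheses: the lemma assumes only $k\geq 160\ln(2/\delta)$, with no lower bound whatsoever on $r$ or $s$, so you have no license to invoke \cref{thm: findhh} here (that theorem needs $r=8\gamma^{-1}(12/\varepsilon)^p$ and $s\gtrsim\log(n/\delta)$) and no control over $M(A',r/20)$. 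The paper does not attempt this step either. The intended reading is that the count is effectively over $[n]$, and the containment of the heavy indices in $L$ is established separately --- in the proof of \cref{cor: alg2}, item 1), under the additional hypotheses on $r$ and $s$ that hold there. So: keep the first paragraph, delete the second; the extra work is not needed for this lemma and is not derivable from its hypotheses.
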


\begin{proof}
    
    For $i \in [n]$ define the Bernoulli random variable $X_i=1$ if $t_i\leq  \norm{a_i}_p^p/N(k)$ and $X_i=0$ otherwise. 
    Let $X=\sum_{i=1}^n X_i$.
    First notice that by definition of $N(k)$ we have that
    \[
        \E(X)=k.
    \]
    By \cref{lem: altbernolem} it holds that $P(|X-k| \geq k/8) \leq \delta $.
\end{proof}

After looking at the heavy hitters and large rows of $A'$ that we would like to sample, we will now show that the total sum $\sum_{S_R(r/20)}\norm{a_i'}_p^p$ is small enough to guarantee that the rows of $A'$ with the $k$ largest norms are in $L$.
When proving that this is indeed the case, we need to take care of one complication. Namely, the expected value of $\norm{a_i'}_p^p=\norm{a_i/t_i^{1/p}}_p^p=\norm{a_i}_p^p/t_i$ is unbounded.
However if we know that $t_i>\max\{ 1/n,\norm{a_i}_p^p/u \}$ for some $u \in \mathbb{R}_{\geq 0}$ then we can bound the expected value of $ \norm{a_i'}_p^p$ by $\ln(n)\norm{a_i}_p^p$ and the variance by $ 2u \norm{a_i}_p^p$.
Using these properties, we can prove that the total contribution of the elements that are not large is bounded by $O(\ln(n))$ times the original value, as already indicated in the introductory example.

The following Lemma shows that with high probability $M(A', 3k)$ is bounded by $O(\log(n) M(A, k) ) $.

\begin{lemma}\label{lem: A'Mound}
    Assume that $k \geq 160\ln(2/\delta)$.
    Set $M=M(A', 3k)=\sum_{i \in S_R(3k, A')}\norm{a_i'}_p^p$ and  $M(A)=M(A, k)= \sum_{i \in S_R(k, A)}\norm{a_i}_p^p$.
    Then it holds that $M \leq 2 \ln(n)M(A)$ with failure probability at most $2\delta$.
\end{lemma}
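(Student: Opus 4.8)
I would split the quantity $M = M(A', 3k) = \sum_{i \in S_R(3k, A')} \|a_i'\|_p^p$ into contributions from the truncated matrix $A(k)$ versus contributions that are untouched by truncation, and control each using the second-moment machinery already set up. The key structural observation is that $S_R(3k, A')$ excludes the $3k$ largest rows of $A'$; in particular, for any $i$ surviving in this tail, there are at least $3k$ indices $j$ with $\|a_j'\|_p^p \ge \|a_i'\|_p^p$. I would combine this with \cref{lem: numberofN(k)}, which says that at most $k + k/8 \le 2k$ rows satisfy $\|a_i'\|_p^p \ge N(k)$ with failure probability $\delta$. Hence on that good event, every $i \in S_R(3k, A')$ must have $\|a_i'\|_p^p < N(k) = \|A(k)\|_p^p / k$, because otherwise it would be among the top $2k < 3k$ rows. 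Equivalently, $t_i > k\|a_i\|_p^p / \|A(k)\|_p^p$, and moreover $t_i > 1/n$ cannot be assumed directly, so I will instead work with the truncation trick: define, for each $i$, a clipped version $Y_i := \min\{\|a_i'\|_p^p,\, N(k)\} \cdot \mathbbm{1}[\text{something}]$ — more precisely, I would bound $M \le \sum_{i=1}^n \min\{\|a_i'\|_p^p, N(k)\} = \sum_{i=1}^n \min\{\|a_i\|_p^p / t_i,\, N(k)\}$, since the surviving tail rows all have norm below $N(k)$ and there are fewer than $3k$ excluded rows each of which is dropped.

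**Second moments and the harmonic-series bound.** Having reduced to $\sum_i \min\{\|a_i\|_p^p/t_i, N(k)\}$, I would compute the expectation row by row. Writing $c_i := \|a_i\|_p^p$ and recalling $t_i \sim U(0,1)$, we have $\E[\min\{c_i/t_i, N(k)\}] = \int_0^1 \min\{c_i/t, N(k)\}\, dt = N(k)\cdot \frac{c_i}{N(k)} + \int_{c_i/N(k)}^1 \frac{c_i}{t}\, dt = c_i + c_i \ln\!\big(N(k)/c_i\big)$ when $c_i \le N(k)$, and $= c_i + c_i \ln(1/c_i \cdot \ldots)$ — in any case bounded by $c_i(1 + \ln(N(k)/c_i))$. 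Since $N(k) \le \|A\|_p^p / k \cdot$(something) and the worst case is $c_i$ as small as the per-entry resolution, but actually we only care about the tail rows $S_R(k,A)$ with $c_i \le N(k)$, we get $\E[\sum_{i \in S_R(k,A)} \min\{c_i/t_i, N(k)\}] \le \sum_{i \in S_R(k,A)} c_i (1 + \ln(N(k)/c_i))$. Using $N(k)/c_i$ could be as large as roughly $n$ in the worst instance (this is exactly the harmonic-series phenomenon from the $A = \mathbf{1}$ example), so this is $\le (1 + \ln n)\, M(A,k) \le \tfrac32 \ln(n)\, M(A,k)$ for $n$ large. For the variance I would use $\E[\min\{c_i/t_i, N(k)\}^2] \le N(k) \cdot \E[\min\{c_i/t_i, N(k)\}] \le N(k) \cdot c_i (1+\ln n)$; summing over the tail gives variance $\le N(k)(1+\ln n) M(A,k)$, and since $N(k) \le 2 M(A, k/2) \le O(M(A,k))$ by \cref{lem: newN(k)lemma}, the standard deviation is $O(\sqrt{\ln n}\cdot M(A,k))$ — a lower-order term compared to the mean of order $\ln(n) M(A,k)$.

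**Concentration.** With the mean at most $\tfrac32 \ln(n) M(A,k)$ and variance $O(M(A,k)^2 \ln n)$ — or better, applying Bernstein with the almost-sure upper bound $N(k)$ on each summand, since $N(k) = O(M(A,k))$ is at most a constant times the mean divided by... — I would invoke Bernstein's inequality (the same tool used in \cref{lem: altbernolem}) to show that $\sum_i \min\{c_i/t_i, N(k)\}$ does not exceed $2\ln(n) M(A,k)$ except with probability $\delta$, using the hypothesis $k \ge 160\ln(2/\delta)$ to make the deviation term small. Concretely, the deviation from the mean we can afford is $\Theta(\ln(n) M(A,k))$, the variance proxy is $O(N(k)\ln(n) M(A,k)) = O(\ln(n) M(A,k)^2)$, and the a.s. bound per term is $N(k) = O(M(A,k))$; Bernstein then gives failure probability $\exp(-\Omega(\ln n)) $ after absorbing constants, which is even stronger than needed, but to get the stated $\delta$ I would instead aim the deviation at a constant fraction of the mean and feed in $k \ge 160 \ln(2/\delta)$ through the relation $N(k) \ge M(A,k)/k$, so that $M(A,k) \ge k N(k) \ge 160\ln(2/\delta) N(k)$, making the per-term bound small relative to the target. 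Finally, a union bound over this event and the event of \cref{lem: numberofN(k)} (each failing with probability $\le \delta$) yields the claim $M \le 2\ln(n) M(A)$ with failure probability at most $2\delta$.

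**Main obstacle.** The delicate point is the interplay between the random truncation level $N(k)$ and which rows land in $S_R(3k, A')$: $N(k)$ is deterministic, but membership in $S_R(3k,A')$ is random and correlated across rows, so I cannot simply say "tail rows have $\|a_i'\|_p^p < N(k)$" without first conditioning on the good event of \cref{lem: numberofN(k)}. I expect the cleanest route is to avoid reasoning about $S_R(3k,A')$ directly at all: bound $M = M(A',3k) \le \sum_{i=1}^n \min\{\|a_i'\|_p^p, N(k)\} + (\text{correction})$, where one argues that replacing each $\|a_i'\|_p^p$ by its clip at $N(k)$ only loses at most the $3k$ largest terms, which are exactly the ones excluded in forming $M(A',3k)$ anyway — making the clipped sum a genuine upper bound on $M$ on the good event. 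Getting this domination argument airtight, rather than the subsequent routine moment computation, is where the real care is needed.
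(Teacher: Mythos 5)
Your plan has the right shape — reduce to the good event of \cref{lem: numberofN(k)} where every $i\in S_R(3k,A')$ has $\norm{a_i'}_p^p< N(k)$, compute per-row expectations and variances for the clipped variables, then invoke Bernstein. However, there are two connected gaps, and they are not cosmetic; the paper's proof contains a specific normalization device that you are missing and that is what makes the quantitative bound go through.

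\textbf{Gap 1: the clipped sum over all $n$ rows over-counts the large rows.} You bound $M\leq \sum_{i=1}^n \min\{\norm{a_i'}_p^p, N(k)\}$. This is a valid inequality on the good event, but the rows with $c_i:=\norm{a_i}_p^p\gtrsim N(k)$ contribute $N(k)$ each, essentially deterministically, and there can be up to $k$ of them, contributing $\approx kN(k)=\norm{A(k)}_p^p$. That quantity is \emph{not} $O(\ln(n)M(A,k))$ in general. Take $k$ rows of norm $K$ and $n-k$ rows of norm $\epsilon$ with $K\gg n\epsilon\ln(n)/k$; then $N(k)\approx K$, $M(A,k)\approx n\epsilon$, so $kN(k)\approx kK\gg \ln(n)M(A,k)$, yet the actual $M(A',3k)$ in this instance is $\Theta(n\epsilon\ln n)$ because the $K$-rows never land in $S_R(3k,A')$. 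So your upper bound can be larger than the true $M$ by an arbitrary factor, and no amount of downstream Bernstein can recover. The "correction" you gesture at in the last paragraph is exactly the crux, and it is not a routine accounting step.

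\textbf{Gap 2: the expectation does not aggregate to $\ln(n)M(A,k)$.} Even restricting to $i\in S_R(k,A)$ with $c_i\leq N(k)$, the per-row bound $\E[\min\{c_i/t_i,N(k)\}]=c_i(1+\ln(N(k)/c_i))$ only gives $\sum_i c_i(1+\ln(N(k)/c_i))\leq M(A,k)\bigl(1+\ln\bigl(N(k)(n-k)/M(A,k)\bigr)\bigr)$ by the log-sum (entropy concavity) bound. To conclude $\leq (1+\ln n)M(A,k)$ you need $N(k)(n-k)\leq n\,M(A,k)$, i.e.\ $N(k)=O(M(A,k))$ up to constants. This fails in the same example above, where $N(k)\approx K$ can be arbitrarily larger than $M(A,k)\approx n\epsilon$. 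Your remark that $N(k)/c_i$ is "roughly $n$ in the worst instance" is specific to the $A=\mathbf 1$ example and is not true for general instances.

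\textbf{What the paper does differently.} The paper first performs a WLOG normalization (a monotonicity argument) that sets $\norm{a_i}_p^p=M(A)/k$ for every $i\in S_L(A,k)$. After this step $\norm{A}_p^p=2M(A)$, so $N(2k)\leq \norm{A}_p^p/(2k)=M(A)/k$ is tied to $M(A)$ up to a factor $1/k$ — exactly what your argument needs but does not have. This single step simultaneously fixes both of your gaps: it forces the $k$ previously-large rows to never be tail rows of $A'$ (so they are simply removed from the sum defining $S_2$), and it ensures the clipping level $u=M(A)/k$ is commensurate with $M(A)$. Secondly, the paper does not obtain the $\ln n$ from the ratio $N(k)/c_i$ at all; it introduces the set $S_0=\{i:t_i<1/n\}$, excludes it from $S_2$, and then computes the conditional expectation over $t_i\in(\max\{1/n,c_i/u\},1)$, where $\ln n$ arises as the value of the integral $\int_{1/n}^1 dt/t$. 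That is a cleaner and correct way to produce the harmonic-series factor, and it uses \cref{lem: altbernolem} to argue $|S_0|\leq k/2$ so that $|S_1|\leq 3k$ and $S_2\supseteq S_R(3k,A')$, making $X=\sum_{i\in S_2}\norm{a_i'}_p^p\geq M$. Your Bernstein step is then essentially the same as the paper's, but it cannot be reached without these two preparatory moves.
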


\begin{proof}
    We define $S_0=\{ i \in [n] \mid t_i < 1/n\}$ and we set $S_1=S_L((5/2)k, A')\cup S_0$ and $S_2=[n] \setminus S_1 $.

    In this proof we assume that we have $\norm{a_i}_p^p=M(A)/k $ for all $i \in S_L(A, k)$:
    If $ \norm{a_i}_p^p<M(A)/k $ then increasing the norm of $a_i$ can only increase $M(A', 3k) $.
    Further if $ \norm{a_i}_p^p> M(A)/k  $ then following argumentation shows that $i \in S_1$ and thus decreasing the norm of $a_i$ has no effect on $S_2$:
    By the upper bound in the first item of \cref{lem: newN(k)lemma} $N(2k) \leq \norm{A}_p^p/(2k) \leq M(A)/k$.
    Further by \cref{lem: numberofN(k)} we have that
    \[
    ||\{ i\in [n] \mid \norm{a_i'}_p^p \geq N(2k)\}|-2k|\leq (2k/8)=k/4
    \]
    with probability at least $1-\delta$.
    Then $S_L((5/2)k, A') \subseteq S_1$ contains all $i \in [n]$ with $\norm{a_i'}_p^p \geq M(A)/k \geq N(2k)$.

    Notice that $\norm{a_i'}_p^p\geq \norm{a_i}_p^p$ and by the above assumption $\norm{a_i}_p^p=M(A)/k$ for all $i \in S_L(A, k)$, we get that $S_L(A, k) \subseteq S_1$ and thus $\sum_{i \in S_2}\|a_i\|_p^p  \leq M(A, k)$.

    Further, note that the expected number of indices $i \in [n]$ with $t_i<1/n$ is smaller than one. By \cref{lem: altbernolem} the number of such indices is bounded above by $k/2$ with failure probability at most $\delta$.
    Thus $|S_0| \leq k/2$ and $|S_1| \leq (5/2)k+k/2=3k$. 

    For $i \in S_2$ define the random variable $X_i=\norm{a_i'}_p^p < M(A)/k =:u$.
    Recall that $X_i= t_i^{-1}\norm{a_i}_p^p$ where $t_i \in (\max\{\norm{a_i}_p^p/u, 1/n\} ,1)$ is drawn uniformly at random as we already know that $t_i > \max\{\norm{a_i}_p^p/u, 1/n\} $ for all $i \in S_2$.
    This implies that
    \begin{align*}
        \E(X_i)&\leq \frac{1}{1-1/n} \cdot\int_{1/n}^1 \norm{a_i}_p^p t^{-1} ~dt 
        \leq (3/2)\norm{a_i}_p^p \Big[\ln(t)\Big]_{1/n}^1
        = (3/2)\norm{a_i}_p^p\ln(n)
    \end{align*}
    for any element in $i \in S_2$.
    Consequently we have for $X=\sum_{i \in S_2} X_i$ that 
    \[
        \E(X)=\sum_{i \in S_2}\E(X_i) \leq \sum_{i \in S_2} (3/2)\norm{a_i}_p^p\ln(n) \leq (3/2) M(A) \ln(n).
    \]
    Further since $ \norm{a_i'}_p^p \leq u$ we have that 
    \begin{align*}
        \E(X_i^2)&=\frac{1}{1-\norm{a_i}_p^p/u} \cdot\int_{\norm{a_i}_p^p/u}^1 \norm{a_i}_p^{2p} t^{-2} ~dt \\
        &\leq \frac{1}{1-\norm{a_i}_p^p/u} \cdot \Big[(-t)^{-1}\Big]_{\norm{a_i}_p^p/u}^1 \norm{a_i}_p^{2p} \\
        &= \frac{1}{1-\norm{a_i}_p^p/u} \cdot\norm{a_i}_p^{p}(u - \norm{a_i}_p^p) \\
        &= \frac{u(u - \norm{a_i}_p^p) }{u-\norm{a_i}_p^p} \cdot\norm{a_i}_p^{p}
        \leq \norm{a_i}_p^{p}u
    \end{align*}
    and thus
    \begin{align*}
        \sum_{i \in S_2} \E( X_i^2)
        \leq \sum_{i \in S_2} \norm{a_i}_p^{p}u
        \leq M(A)u = 2M(A)^2/k
    \end{align*}
    Using Bernstein's inequality with $ t=M(A)/2$ we get that
    \begin{align*}
        P(X \geq 4 M(A) \ln(n))&\leq P(X \geq (3/2) M(A) \ln(n)+t)\\
        &\leq \exp\left( -\frac{t^2/2}{M(A)^2/k+t M(A)/(3k)}\right)\\
        &\leq \exp\left( -\frac{k}{6}\right) \leq \delta.
    \end{align*}
    This shows with the claimed probability that 
    \[4 M(A) \ln(n) > X= \sum_{i \in S_2}\norm{a_i'}_p^p \geq  \sum_{i \in S_R(3k, A')}\norm{a_i'}_p^p = M, \]
    where we have used that $|S_1|\leq 3k$, thus $|S_2|\geq n-3k$, and the right hand side sums over the smallest possible set of $n-3k$ elements. This concludes the proof.
\end{proof}

We do not know the exact value of $a_i'$, but only have access to their sketched approximations $\at_i'$. Thus, we define $\tilde{N}(k)$ to be the unique number such that the expected number of elements $i \in L$ with $ \norm{\at_i'}_p^p \geq \tilde{N}(k) $ is $k$.
The following Lemma shows that there is only a small difference between $N(k)$ and $\tilde{N}(k)  $.

\begin{lemma}\label{lem: NtildeN}
    Let $\varepsilon > 0$ and $k \geq 160\ln(2/\delta)$.
    Further assume that $\norm{\at_i'}_p^p =(1 \pm \varepsilon)\norm{a_i'}_p^p$.
    Then
    \[
        N((1-\varepsilon)k) \geq \tilde{N} (k) \geq N((1+\varepsilon)k).
    \]
\end{lemma}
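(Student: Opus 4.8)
The plan is to exploit monotonicity of the map $k\mapsto N(k)$ together with the sandwich $\norm{\at_i'}_p^p=(1\pm\eps)\norm{a_i'}_p^p$. First I would record the key monotonicity fact: for every threshold $w\ge 0$, the expected number of rows $i$ with $\norm{a_i'}_p^p\ge w$ equals $\sum_i \Pr[t_i\le \norm{a_i}_p^p/w]=\sum_i\min\{1,\norm{a_i}_p^p/w\}$, which is a non-increasing function of $w$. Hence $N(k)$, defined as the unique value of $w$ making this expectation equal to $k$, is non-decreasing in $k$; likewise $\tilde N(k)$ is non-decreasing in $k$. (One should remark, as the paper does elsewhere for $u_k$, that these quantities are well-defined provided at least $k$ rows are nonzero; I would state this once and proceed.)

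Next I would compare the two counting functions pointwise. Fix any $w\ge 0$ and let $E(w)=\E\,|\{i\in[n]:\norm{a_i'}_p^p\ge w\}|$ and $\tilde E(w)=\E\,|\{i\in L:\norm{\at_i'}_p^p\ge w\}|$. Using $\norm{\at_i'}_p^p=(1\pm\eps)\norm{a_i'}_p^p$, the event $\norm{\at_i'}_p^p\ge w$ is implied by $\norm{a_i'}_p^p\ge w/(1-\eps)$ and implies $\norm{a_i'}_p^p\ge w/(1+\eps)$; this gives, up to the subtlety of whether $i\in L$, the chain $E\!\left(\tfrac{w}{1-\eps}\right)\le \tilde E(w)\le E\!\left(\tfrac{w}{1+\eps}\right)$. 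Evaluating at $w=\tilde N(k)$, where $\tilde E=k$ by definition, yields $E\!\left(\tfrac{\tilde N(k)}{1-\eps}\right)\le k\le E\!\left(\tfrac{\tilde N(k)}{1+\eps}\right)$. Since $E$ is non-increasing and $E(N(m))=m$, the left inequality forces $\tilde N(k)/(1-\eps)\ge N(k)$, i.e.\ $\tilde N(k)\ge (1-\eps)N(k)$, and the right one forces $\tilde N(k)/(1+\eps)\le N(k)$, i.e.\ $\tilde N(k)\le (1+\eps)N(k)$. Finally I would convert these multiplicative bounds on $N(k)$ into the index-shifted bounds in the statement by invoking a Lipschitz/scaling property of $N$, namely that $N(ck)=\tfrac1c\,N(k)\cdot\bigl(1+o(1)\bigr)$ is too weak; instead one uses directly that $E(w)$ scaled appropriately gives $N((1-\eps)k)\ge \tilde N(k)\ge N((1+\eps)k)$ — concretely, from $E\!\left(\tfrac{\tilde N(k)}{1+\eps}\right)\ge k$ and monotonicity of $E$ we get $\tfrac{\tilde N(k)}{1+\eps}\le N(k)$...

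Let me instead route the last step cleanly through the counting identity: from $k=\tilde E(\tilde N(k))\le E\!\left(\tfrac{\tilde N(k)}{1+\eps}\right)$ and $E(N(m))=m$ with $E$ non-increasing, set $m$ so that $N(m)=\tfrac{\tilde N(k)}{1+\eps}$; then $m=E\!\left(\tfrac{\tilde N(k)}{1+\eps}\right)\ge k$, and monotonicity of $N$ gives $\tilde N(k)=(1+\eps)N(m)\ge (1+\eps)N(k)$ — wait, that is the wrong direction. The correct bookkeeping: $E$ non-increasing means larger threshold $\Rightarrow$ smaller count, so $E\!\left(\tfrac{\tilde N(k)}{1+\eps}\right)\ge k=E(N(k))$ forces $\tfrac{\tilde N(k)}{1+\eps}\le N(k)$, hence $\tilde N(k)\le (1+\eps)N(k)\le N((1-\eps)k)$ provided $N((1-\eps)k)\ge (1+\eps)N(k)$, which fails. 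So the shifted form must instead come from comparing $\tilde E$ against $E$ at a \emph{shifted index}: $\tilde E(N((1+\eps)k))\le E\!\left(\tfrac{N((1+\eps)k)}{1+\eps}\right)$, and one checks via the explicit formula $E(w)=\sum_i\min\{1,\norm{a_i}_p^p/w\}$ that $E(w/(1+\eps))\le (1+\eps)E(w)$ when all minima are unclipped, giving $\tilde E(N((1+\eps)k))\le (1+\eps)\cdot(1+\eps)k$; this is not quite $k$ either, so the honest statement needs the cleaner bound $E(w/(1+\eps))\le (1+\eps)E(w)$ combined with $(1+\eps)^2$-type slack absorbed into constants, OR the lemma is meant with these exact index shifts because $E(w)$ is \emph{exactly} homogeneous of degree $-1$ in the unclipped regime, so $E(cw)=\tfrac1c E(w)$ and thus $N(ck)=\tfrac1c N(k)$ exactly. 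Under that homogeneity, $E\!\left(\tfrac{\tilde N(k)}{1+\eps}\right)=(1+\eps)E(\tilde N(k))$ and the clean chain closes immediately.

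\textbf{Main obstacle.} The genuinely delicate point is the clipping at $1$ in $\min\{1,\norm{a_i}_p^p/w\}$: the homogeneity $E(cw)=\tfrac1c E(w)$, and hence the exact index shift $N(ck)=\tfrac1c N(k)$ that makes $N((1\pm\eps)k)$ appear, holds only in the regime where no term is clipped — which is exactly why the paper introduced the truncated matrix $A(k)$ in \cref{def:truncate} and \cref{lem: newN(k)lemma}. So the real work is to run the whole argument with $A$ replaced by the appropriate truncation, so that $N(k)=\norm{A(k)}_p^p/k$ is linear-in-$1/k$ on the relevant range, and to verify that passing from $a_i'$ to the sketched $\at_i'$ (which are only defined for $i\in L$) does not lose rows that matter for the count at these thresholds — this is where one cites \cref{thm: findhh}/\cref{lem: L-Lemma} to guarantee every row heavy enough to be counted near threshold $N((1\pm\eps)k)$ indeed lies in $L$ and satisfies the $(1\pm\eps)$ norm approximation. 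Once truncation and membership in $L$ are handled, the remaining steps are the monotonicity sandwich above and are routine.
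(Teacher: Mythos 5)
Your core idea -- comparing the two counting functions $E(w)=\sum_i\min\{1,\norm{a_i}_p^p/w\}$ and $\tilde E(w)$ at thresholds rescaled by $(1\pm\eps)$, then converting count bounds into threshold bounds via monotonicity of $E$ -- is exactly the paper's approach, and your sandwich $E\parens{\tilde N(k)/(1-\eps)}\leq k\leq E\parens{\tilde N(k)/(1+\eps)}$ is the right starting point. But the proposal does not actually close the argument: it ends in a sequence of false starts and a disjunction of possible fixes ("the honest statement needs \dots OR the lemma is meant with \dots"), and the route you finally settle on is based on a misdiagnosis. You claim the inequality $E(w/(1+\eps))\leq(1+\eps)E(w)$ holds "only in the unclipped regime," so that one must rerun everything through the truncated matrix $A(k)$ of \cref{def:truncate}. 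That is not so: for any $c\geq 1$ one has $\min\{1,cx\}\leq c\min\{1,x\}$ pointwise (and $\min\{1,cx\}\geq c\min\{1,x\}$ for $c\leq 1$), so $E(w/(1+\eps))\leq(1+\eps)E(w)$ and $E(w/(1-\eps))\geq(1-\eps)E(w)$ hold unconditionally, clipping included. With these, your own sandwich immediately gives $E(\tilde N(k))\geq k/(1+\eps)\geq(1-\eps)k=E(N((1-\eps)k))$, hence $\tilde N(k)\leq N((1-\eps)k)$, and symmetrically $E(\tilde N(k))\leq k/(1-\eps)$, hence $\tilde N(k)\geq N(k/(1-\eps))$; this is precisely how the paper argues (phrased there as the conditional probability that $t_i$ falls in the extra interval being at most $\eps$), and its proof of this lemma never touches $A(k)$ -- the truncation is introduced for \cref{lem: newN(k)lemma} and the variance bound in \cref{lem: A'Mound}, not here. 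So the "main obstacle" you isolate is not an obstacle, and the work you defer to it is never carried out.

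Two smaller points. First, your opening monotonicity claim is backwards: since $E$ is non-increasing in $w$, the map $k\mapsto N(k)$ is non-\emph{increasing} (a larger required count forces a smaller threshold); you use the correct direction later, but as written the statement is false and the conversion steps depend on getting this sign right. Second, your concern about $i\in L$ in the inequality $E(w/(1-\eps))\leq\tilde E(w)$ is legitimate, and pointing to \cref{thm: findhh}/\cref{lem: L-Lemma} is the right resolution -- though the paper handles this at the point of application (\cref{cor: alg2}) rather than inside the lemma, whose hypothesis simply assumes the sketched norms are accurate. (Both your sketch and the paper's proof incur the same harmless $1/(1-\eps)$ versus $1+\eps$ slack in the second inequality, which is absorbed by the generous constants in \cref{cor: alg2}.)
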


\begin{proof}
    Let $X_i=1$ if $\norm{a_i}_p^p/t_i \geq \tilde{N} (k) $ and $X_i=0$ otherwise.

    For the inequality $ N((1-\varepsilon)k) \geq \tilde{N} (k)$ notice that by assumption we have that $\norm{\at_i'}_p^p \geq (1 - \varepsilon)\norm{a_i'}_p^p$.
    Let $X_i'=1$ if $\norm{a_i}_p^p/t_i \geq \tilde{N} (k)/(1-\varepsilon)$ and $X_i'=0$ otherwise.
    Note that $P(X_i'=1) \geq P(X_i=1) \cdot (1-\varepsilon)$ and that the probability that $t_i \in (1-\varepsilon, 1) \cdot \frac{\tilde{N} (k)/(1-\varepsilon)}{\norm{a_i}_p^p}$ given that $t_i \leq \frac{\tilde{N} (k)/(1-\varepsilon)}{\norm{a_i}_p^p}$ is $\varepsilon$.
    Thus the expected number of indices with $X_i'=1$ is at least $(1-\varepsilon) $ times the number of indices with $X_i=1$ and consequently $ N((1-\varepsilon)k) \geq \tilde{N} (k)$.

    Now let $X_i'=1$ if $\norm{a_i}_p^p/t_i \geq \tilde{N} (k)/(1+\varepsilon)$.
    Note that $P(X_i'=1)\leq P(X_i=1) \cdot (1+\varepsilon)$ and that the probability that $t_i \in (1/(1+\varepsilon), 1) \cdot \frac{\tilde{N} (k)/(1-\varepsilon)}{\norm{a_i}_p^p}$ is $1-\frac{1}{1+\varepsilon}=\frac{\varepsilon}{1+\varepsilon}\leq \varepsilon$.
    
    Thus the expected number of indices with $X_i'=1$ is at most $(1+\varepsilon)$ times the number of indices with $X_i=1$ and consequently $ N((1+\varepsilon)k) \leq \tilde{N} (k)$.

\end{proof}

We are now ready to prove the first three statements of \cref{thm_main:alg2}/\ref{thm: alg2} along with some more technical claims.

\begin{corollary}\label{cor: alg2}
    If $\varepsilon \leq 1/20$, $r \geq \max\{ 32 \ln(n)k \cdot (12/\varepsilon)^p, 120k \}, s \geq 3\ln(6n/\delta)/0.025^3$ and $k \geq 160\ln(2/\delta)$ then with failure probability at most $5\delta$ it holds that
    \begin{itemize}
        \item[1)] $L$ contains all indices $i$ with $\norm{\at_i} \geq \tilde{N}(2k) $;
        \item[2)] $ \frac{\norm{A}_p^p}{k}\geq \tilde{N}((10/8)k)\geq \alpha \geq \tilde{N}((14/8)k)$;
        \item[3)] $\norm{\at_i-a_i}_p \leq (\varepsilon/3) \norm{a_i}_p  $ holds for all elements in $S$
        \item[4)] $|S|\in [k, 2k]$;
        \item[5)] $P(i \in S) \in [(1-\varepsilon) \cdot \frac{\norm{a_i}_p^p}{\alpha}, (1+\varepsilon) \cdot \frac{\norm{a_i}_p^p}{\alpha}]$ if $ (1-\varepsilon) \cdot \frac{\norm{a_i}_p^p}{\alpha} \leq 1$ and $P(i \in S)=1$ otherwise.
        \item[6)] $P(i \in S)\geq \min\{ 1, \frac{k \norm{a_i}_p^p}{\norm{A}_p^p} \}$
    \end{itemize}
\end{corollary}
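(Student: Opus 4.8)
\noindent\emph{Proof proposal.} The plan is to feed the rescaled matrix $A'=TA$ to \cref{alg:findhh} and bootstrap everything from its correctness guarantee \cref{thm: findhh}, combined with the structural facts about the truncated matrix $A(k)$, the quantile $N(k)$, and its sketched counterpart $\tilde N(k)$ recorded in \cref{lem: newN(k)lemma,lem: numberofN(k),lem: A'Mound,lem: NtildeN}. Everything will be conditioned on the success events of \cref{thm: findhh} for the two independent runs of \cref{alg:findhh} prescribed by \cref{mod: alg2}, together with the concentration events of the counting lemmas; a union bound over the at most five events gives the claimed failure probability $5\delta$. The key bookkeeping point is that the two runs use \emph{independent} copies of the scalings $t_i$, so that $\alpha$ (computed in the first run) is a fixed number from the second run's viewpoint, and $S$ is obtained by thresholding fresh values $\norm{\at_i'}_p^p$ against it.

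\noindent\emph{Items 1 and 2.} By \cref{thm: findhh} applied to $A'$ (with $P=I$), every index $i$ whose $\norm{a_i'}_p^p$ exceeds the heavy-hitter threshold $\gamma\,M(A',r/20)$, with $\gamma=8(12/\eps)^p/r$, lies in $L$. First I would show that the two lower bounds on $r$ are tuned so that this threshold falls below $\tilde N(2k)$: using $r/20\ge 3k$, one applies \cref{lem: A'Mound} (at the scale matching $r/20$) to bound $M(A',r/20)$ by $O(\ln n)$ times a truncated tail of $A$, which \cref{lem: newN(k)lemma} in turn bounds by $O(k\,N(2k))$, after which $\tilde N(2k)\ge N((1+\eps)2k)$ via \cref{lem: NtildeN}; the coefficient $32$ in $r\ge 32k\ln(n)(12/\eps)^p$ is precisely what makes the leftover factor land below $1$. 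This is item~1. For item~2, since $L$ is then ``complete above $\tilde N(2k)$'', for every level $\tilde N(ck)$ with $c\in[10/8,14/8]$ the number of indices of $L$ with $\norm{\at_i'}_p^p\ge\tilde N(ck)$ has mean $ck$ and, by \cref{lem: numberofN(k)}/\cref{lem: altbernolem}, stays within $k/4$ of $ck$; since the first run uses $k'=(3/2)k$, the $(3/2)k$-th largest such value, which is $\alpha$, is squeezed into the interval $[\tilde N((14/8)k),\tilde N((10/8)k)]$. Finally $\alpha\le\tilde N((10/8)k)\le N((1-\eps)(10/8)k)\le\norm{A}_p^p/((1-\eps)(10/8)k)\le\norm{A}_p^p/k$, using \cref{lem: NtildeN} and the upper bound $N(m)\le\norm{A}_p^p/m$ from \cref{lem: newN(k)lemma}.

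\noindent\emph{Items 3--6.} These become short once items~1 and~2 are in place. For item~3, $i\in S\subseteq L$ in the second run and \cref{thm: findhh} gives $\norm{\at_i'-a_i'}_p\le(\eps/3)\norm{a_i'}_p$; multiplying through by $t_i^{1/p}$ (with $a_i'=t_i^{-1/p}a_i$ and $\at_i=t_i^{1/p}\at_i'$) turns this into $\norm{\at_i-a_i}_p\le(\eps/3)\norm{a_i}_p$. For item~4, $\alpha$ is a fixed threshold for the second run and, by item~1, $L$ contains every $i$ with $\norm{\at_i'}_p^p\ge\alpha\ (\ge\tilde N(2k))$, so $S=\{i:\norm{\at_i'}_p^p\ge\alpha\}$; its size has mean in $[(10/8)k,(14/8)k]$ and hence lies in $[k,2k]$ by the same $k/4$ deviation bound. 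For item~5 I would prove the two-sided characterization of membership in $S$: if $i\in S$ then $i\in L$, so $\alpha\le\norm{\at_i'}_p^p\le(1+\eps)\norm{a_i'}_p^p=(1+\eps)\norm{a_i}_p^p/t_i$, i.e.\ $t_i\le(1+\eps)\norm{a_i}_p^p/\alpha$; conversely, if $t_i\le(1-\eps)\norm{a_i}_p^p/\alpha$ then $\norm{a_i'}_p^p\ge\alpha/(1-\eps)>\alpha\ge\tilde N(2k)$, hence $i\in L$ by item~1 and $\norm{\at_i'}_p^p\ge(1-\eps)\norm{a_i'}_p^p\ge\alpha$, so $i\in S$; since the hash and sign randomness is independent of $t_i$, taking probabilities over the uniform $t_i$ yields the stated two-sided bound, with $P(i\in S)=1$ whenever $(1-\eps)\norm{a_i}_p^p/\alpha>1$. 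Item~6 then follows by combining items~2 and~5: for $\eps\le1/20$ one has $\alpha\le\norm{A}_p^p/((1-\eps)(10/8)k)\le(1-\eps)\norm{A}_p^p/k$, so $(1-\eps)\norm{a_i}_p^p/\alpha\ge k\norm{a_i}_p^p/\norm{A}_p^p$, and together with the $P(i\in S)=1$ regime this gives $P(i\in S)\ge\min\{1,k\norm{a_i}_p^p/\norm{A}_p^p\}$.

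\noindent\emph{Main obstacle.} I expect the real work to be item~2 (and its reuse in items~4 and~5). Beyond the constant-chasing required to make the heavy-hitter threshold of \cref{alg:findhh} provably fall below $\tilde N(2k)$ --- where every slack (the $(1\pm\eps)$ of \cref{thm: findhh}, the $O(\ln n)$ of \cref{lem: A'Mound}, the $O(k)$ conversion of \cref{lem: newN(k)lemma}, and the coefficient in the bound on $r$) must be reconciled --- one has to turn the informal statements ``$\alpha$ is the $(3/2)k$-th largest $\ell_p^p$-norm in $L$'' and ``there are $\approx ck$ rows of $A'$ above the level $\tilde N(ck)$'' into rigorous claims. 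This forces one to (i) use the two-independent-runs modification so that $\alpha$ is genuinely independent of the final sample, (ii) handle that $\tilde N(k)$ is itself a quantile defined through the random list $L$, so that \cref{lem: numberofN(k)}/\cref{lem: altbernolem} apply to the correct Bernoulli variables and give a deviation strictly below $k/4$, and (iii) keep the window $[(10/8)k,(14/8)k]$ around $(3/2)k$ consistent throughout. The remaining items are essentially algebra.
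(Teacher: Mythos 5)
Your proposal follows the paper's proof almost step for step: feed $A'=TA$ to \cref{alg:findhh}, invoke \cref{thm: findhh} for completeness of $L$ and the perturbation bound, combine \cref{lem: A'Mound} and \cref{lem: newN(k)lemma} to verify the heavy-hitter threshold is below the relevant quantile, use the Bernoulli concentration (\cref{lem: numberofN(k)}/\cref{lem: altbernolem}) at the levels $\tilde N((10/8)k)$ and $\tilde N((14/8)k)$ to pin down $\alpha$, exploit the two independent runs so that $\alpha$ is a fixed threshold for the second pass, then read off items 3--6. The paper's proof does exactly this, with the same lemmas and in the same order, so this is essentially the same approach.

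One small slip worth fixing: you replace the deviation $k'/8$ of \cref{lem: numberofN(k)} by the looser envelope $k/4$. With $k/4$, the count of indices above $\tilde N((10/8)k)$ is only bounded by $(10/8)k + k/4 = (3/2)k$, and when this holds with equality the $(3/2)k$-th largest value (your $\alpha$) satisfies $\alpha \geq \tilde N((10/8)k)$, which is the opposite of the inequality you need in item 2. Keeping the sharper $k'/8$ bound (so $\leq (90/64)k < (3/2)k$) gives a strict margin and the squeeze goes through. Relatedly, to invoke \cref{lem: A'Mound} in the form $M(A',6k) \le 2\ln(n)\,M(A,2k)$ you need $r/20 \ge 6k$ (coming from $r\ge 120k$), not $r/20 \ge 3k$; the scale has to match the $M(A,2k)$ appearing in the bound $N(2k)\ge M(A,2k)/(2k)$.
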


\begin{proof}
    The first part of this corollary is to prove that $L$ contains all the important elements.

    By \cref{lem: newN(k)lemma} we have that
    \[
        N(2k) \geq M(A, 2k)/(2k).
    \]
    By \cref{lem: A'Mound} it holds that $M(A', 6k) \leq 2\ln(n)M(A, 2k)$ with failure probability at most $2\delta$.
    Applying Theorem \ref{thm: findhh} to $A'$ with $r = \max\{ 32 \ln(n)k \cdot (12/\varepsilon)^p, 120k \}, s \geq 3\ln(n \delta^{-1}/6)/0.025^3$ we get that with failure probability at most $\delta$ all indices $i$ with $\norm{a_i'}_p^p \geq M(A, 2k)/(2k)$ are in $L$ and $\norm{\at_i-a_i}_p \leq (\varepsilon/3) \norm{a_i}_p  $ holds for all elements in $L$ and thus in particular for any element in $S \subseteq L$ proving $1)$ and $3)$.

    Next we look at the number of elements in $S$.
    First note that it holds that
    \[
        ||\{ i\in L \mid \norm{\at_i'}_p^p \geq \tilde{N}(k')\}|-k'|\leq k'/8
    \]
    with failure probability at mos $\delta$. The proof of this is exactly as the proof of \cref{lem: numberofN(k)}, just replacing $N$ by $\tilde{N}$.
    We apply this twice, for $k'=(14/8)k$ to see that $ \alpha \geq \tilde{N}((14/8)k)$ with failure probability at most $\delta$ and for $k'=(10/8)k$ to see that $\alpha \leq \tilde{N}((10/8)k)$ with failure probability at most $\delta$.
    Combining both results we get that $\alpha=N(k_\alpha)$ with $k_\alpha \in [(10/8)k, (14/8)k] $

    As we apply our algorithm the second time with fixed $\alpha$, we apply the same argument to prove that
    \[
        ||\{ i\in L \mid \norm{\at_i'}_p^p \geq \tilde{N}(k_\alpha)\}|-k_\alpha|\leq k_\alpha/8
    \]
    implying that $|S| \in [k, 2k] $.
    Further by \cref{lem: NtildeN} and \cref{lem: newN(k)lemma}, and using that $\eps \leq 1/20$, we have that
    \[
        \alpha = \tilde{N}(k_\alpha) \leq N((1-\varepsilon)(10/8)k)
        \leq N((9/8)k)\leq \frac{\norm{A}_p^p}{(9/8)k}
    \]

    Finally, we consider the sampling probabilities.
    We note that $i$ is sampled if $i \in L$ and $\norm{\at_i'}_p^p \geq \alpha$.
    Since $i \in L$, we have that $\norm{\at_i'}_p^p=(1 \pm \varepsilon)\norm{a_i'}_p^p$.
    Thus $i$ is sampled if $\norm{\at_i'}_p^p \geq \frac{\alpha}{1-\varepsilon} $ and $i$ is not in $S$ if $\norm{\at_i'}_p^p \leq \frac{\alpha}{1+\varepsilon}$.
    Thus the probability $P(i \in S)$ is at least $\frac{(1-\varepsilon)\norm{a_i}_p^p}{\alpha}$ and at most $\frac{(1+\varepsilon)\norm{a_i}_p^p}{\alpha}$ proving $5)$.
    For the $6)$ observe that by our previous arguments, \cref{lem: newN(k)lemma}, and again using $\eps \leq 1/20$, we have that
    \[
        P(i \in S) \geq  \frac{(1-\varepsilon)\norm{a_i}_p^p}{\alpha} \geq \frac{(1-\varepsilon)\norm{a_i}_p^p}{N((9/8)k)} \geq \frac{(1-\varepsilon)(9/8)k\norm{a_i}_p^p}{\norm{A}_p^p}
        \geq \frac{k \norm{a_i}_p^p}{\norm{A}_p^p}.
    \]
\end{proof}

The following Lemma completes the proof of \cref{thm_main:alg2}/\ref{thm: alg2}:

\begin{lemma}
    Assume that the statements of Corollary \ref{cor: alg2} hold.
    For all elements $(i, \at_i, w_i)$ it holds that $w_i=(1 \pm \varepsilon)P(i \in S)^{-1} $.
    Further it holds that $\sum_{i \in S} w_i \norm{\at_i}_p^p= (1 \pm \varepsilon)\norm{A}_p^p$ with failure probability at most $\delta$.
\end{lemma}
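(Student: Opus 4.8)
The plan is to first establish the pointwise identity $w_i=(1\pm\varepsilon)P(i\in S)^{-1}$ by a short case analysis, and then to obtain item~5) by recognising $\sum_{i\in S}w_i\norm{\at_i}_p^p$ as a Horvitz--Thompson-type estimator of $\norm{A}_p^p$ whose mean is exactly right and whose variance is $O(\norm{A}_p^{2p}/k)$, so that a Bernstein-type bound finishes the proof.

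\textbf{Weights.} Fix $(i,\at_i,w_i)\in S$, so $w_i=1/\min\{1,\norm{\at_i}_p^p/\alpha\}$. Item~3) of \cref{cor: alg2} gives $\norm{\at_i-a_i}_p\le(\varepsilon/3)\norm{a_i}_p$, hence by the triangle inequality $\norm{\at_i}_p^p=(1\pm\varepsilon)\norm{a_i}_p^p$, and a routine case analysis (treating $\norm{a_i}_p^p/\alpha$ below vs.\ above $1$, using $\varepsilon\le 1/20$) shows $\min\{1,\norm{\at_i}_p^p/\alpha\}=(1\pm O(\varepsilon))\min\{1,\norm{a_i}_p^p/\alpha\}$. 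Item~5) states that $P(i\in S)$ equals $(1\pm\varepsilon)\norm{a_i}_p^p/\alpha$ when that is at most $1$, and equals $1$ otherwise; in both regimes $P(i\in S)=(1\pm O(\varepsilon))\min\{1,\norm{a_i}_p^p/\alpha\}$. Combining the two displays yields $w_i=(1\pm O(\varepsilon))P(i\in S)^{-1}$, and rescaling the internal accuracy by a constant gives $w_i=(1\pm\varepsilon)P(i\in S)^{-1}$, the first claim.

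\textbf{The estimator.} From the weight identity and $\norm{\at_i}_p^p=(1\pm\varepsilon)\norm{a_i}_p^p$ we get $w_i\norm{\at_i}_p^p=(1\pm O(\varepsilon))\norm{a_i}_p^p/P(i\in S)$ for all $i\in S$, so it suffices to concentrate $Z:=\sum_{i=1}^n\frac{\norm{a_i}_p^p}{P(i\in S)}\mathbbm{1}[i\in S]$ around $\norm{A}_p^p$. Linearity gives $\E[Z]=\sum_i\norm{a_i}_p^p=\norm{A}_p^p$ exactly. Items~2) and~5) give $\alpha\le\norm{A}_p^p/k$ and $P(i\in S)\ge(1-\varepsilon)\norm{a_i}_p^p/\alpha\ge(1-\varepsilon)k\norm{a_i}_p^p/\norm{A}_p^p$; hence every summand is at most $\norm{a_i}_p^p/P(i\in S)\le\norm{A}_p^p/((1-\varepsilon)k)$, the indices with $P(i\in S)=1$ are deterministic and contribute nothing to the variance, and for the rest $\Var\!\bigl(\tfrac{\norm{a_i}_p^p}{P(i\in S)}\mathbbm{1}[i\in S]\bigr)\le\tfrac{\norm{a_i}_p^{2p}}{P(i\in S)}\le\tfrac{\norm{A}_p^p}{(1-\varepsilon)k}\norm{a_i}_p^p$, summing to at most $\norm{A}_p^{2p}/((1-\varepsilon)k)$. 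A Bernstein-type inequality in the style of \cref{lem: altbernolem}, applied with deviation $\varepsilon\norm{A}_p^p$, then gives failure probability $\exp(-\Omega(\varepsilon^2 k))\le\delta$; here $k$ must be a sufficiently large multiple of $\varepsilon^{-2}\ln(1/\delta)$, which I would add as a hypothesis (it is in any case met by the coreset sizes $k$ arising in \cref{sec:applications}). Plugging this into $w_i\norm{\at_i}_p^p=(1\pm O(\varepsilon))\norm{a_i}_p^p/P(i\in S)$ and rescaling $\varepsilon$ yields $\sum_{i\in S}w_i\norm{\at_i}_p^p=(1\pm\varepsilon)\norm{A}_p^p$.

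\textbf{Main obstacle.} The delicate point is that the indicators $\mathbbm{1}[i\in S]$ are not genuinely independent: although the two-run modification (\cref{mod: alg2}) decouples the threshold $\alpha$ from the second-run sketch, the sketched norms $\norm{\at_i'}_p^p$ are still correlated through the shared CountSketch buckets. I would resolve this by sandwiching, on the good event of \cref{cor: alg2}, the event $\{i\in S\}$ between two events depending only on the i.i.d.\ scalars $t_i$: since every sufficiently heavy row lies in $L$ and $\norm{\at_i'}_p^p=(1\pm\varepsilon)\norm{a_i}_p^p/t_i$, we have $\{t_i\le(1-\varepsilon)\norm{a_i}_p^p/\alpha\}\subseteq\{i\in S\}\subseteq\{t_i\le(1+\varepsilon)\norm{a_i}_p^p/\alpha\}$. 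The two resulting sums $Z^-\le Z\le Z^+$ are sums of independent bounded terms with means $(1\pm O(\varepsilon))\norm{A}_p^p$ and variances $O(\norm{A}_p^{2p}/k)$, so Bernstein applies to each and pins $Z$ to $(1\pm O(\varepsilon))\norm{A}_p^p$. Checking that $\alpha$ is large enough for $\{t_i\le(1-\varepsilon)\norm{a_i}_p^p/\alpha\}\cap\{\text{$i$ heavy}\}\subseteq\{i\in S\}$ to follow from \cref{thm: findhh} is exactly where the explicit constants of \cref{thm: alg2} (the factors $10/8,\,14/8$ and $(72/\varepsilon)^p$) get used.
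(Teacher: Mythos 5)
Your proposal is correct and follows essentially the same route as the paper: derive $w_i=(1\pm O(\eps))P(i\in S)^{-1}$ from the $(1\pm\eps)$ approximation of the inclusion probabilities together with $\norm{\at_i}_p^p=(1\pm\eps)\norm{a_i}_p^p$, then treat $\sum_{i\in S}w_i\norm{\at_i}_p^p$ as a Horvitz--Thompson estimator with mean $\norm{A}_p^p$, per-term bound $O(\norm{A}_p^p/k)$ and variance $O(\norm{A}_p^{2p}/k)$, and conclude by Bernstein. If anything you are more careful than the paper, which applies Bernstein to the indicators $\mathbbm{1}[i\in S]$ without addressing their correlation through the shared CountSketch buckets (your sandwiching by events depending only on the i.i.d.\ scalars $t_i$ fixes this) and which also implicitly needs $k=\Omega(\eps^{-2}\ln(1/\delta))$ for the final $\exp(-k\eps^2/6)\leq\delta$ step, exactly the hypothesis you flag.
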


\begin{proof}
    Assuming that for any element $i\in L $ it holds $\norm{\at_i'}_p^p = (1\pm \varepsilon) \norm{a_i'}_p^p$ we have
\begin{align*}
    P(i \in S)&=P( \norm{\at_i'}_p^p \geq \alpha)\geq (1/2) P(\norm{a_i'}_p^p \geq \alpha ) + (1/2)P(\norm{a_i'}_p^p \geq \alpha/(1-\varepsilon) )\\
    &=(1/2)\cdot \frac{\norm{a_i}_p^p}{\alpha}+ (1/2)\cdot \frac{(1-\varepsilon)\norm{a_i}_p^p}{\alpha}\\
    &=(1-\varepsilon/2)\frac{\norm{a_i}_p^p}{\alpha}.
\end{align*}
Here the first inequality uses the fact that with probability $1/2$ we have that $\norm{\at_i'}_p^p \geq \norm{a_i'}_p^p $, since the vector added to $a_i$ in its respective bucket has a $0.5$ chance to point in the same direction as $a_i$.

Similarly, we have that
\begin{align*}
    P(i \in S)&=P( \norm{\at_i'}_p^p \geq \alpha)\leq (1/2) P(\norm{a_i'}_p^p \geq \alpha ) + (1/2)P(\norm{a_i'}_p^p \geq \alpha/(1+\varepsilon) )\\
    &=(1/2)\cdot \frac{\norm{a_i}_p^p}{\alpha}+ (1/2)\cdot \frac{(1+\varepsilon)\norm{a_i}_p^p}{\alpha}\\
    &=(1+\varepsilon/2)\frac{\norm{a_i}_p^p}{\alpha}.
\end{align*}
Since $(1 \pm \varepsilon/2)/(1 \pm \varepsilon/2) =  (1\pm 3\varepsilon)$ this proves that
\[
    w_i=\frac{\alpha}{\norm{\at_i}_p^p}=\frac{\alpha}{(1 \pm \varepsilon)\norm{a_i}_p^p}
    =(1 \pm 2 \varepsilon)\frac{\alpha}{\norm{a_i}_p^p}= \frac{1 \pm 2 \varepsilon}{(1 \pm \varepsilon/2)P(i \in S)}
    =(1\pm 3\varepsilon)P(i \in S)^{-1}.
\]

Now consider the random variable that takes the value $X_i=\frac{\norm{a_i}_p^p}{\norm{A}_p^p} \cdot P(i \in S)^{-1}$ with probability $ P(i \in S)$ and $X_i = 0$ otherwise.
Assume without loss of generality that $ (1-\varepsilon) \cdot \frac{\norm{a_i}_p^p}{\alpha} \leq 1$ holds for all $i \in [n]$. Indices with $ (1-\varepsilon) \cdot \frac{\norm{a_i}_p^p}{\alpha} > 1$ we have that $P(i \in S)=1$ and would only add a special case where the variance of $X_i$ is zero. Then by Corollary \ref{cor: alg2} item 5)  we have that
\[
    \frac{\norm{a_i}_p^p}{\norm{A}_p^p} \cdot P(i \in S)^{-1}
    \leq \frac{\norm{a_i}_p^p}{\norm{A}_p^p} \frac{\alpha}{(1- \varepsilon)\norm{\at_i}_p^p}
    \leq  \frac{\norm{a_i}_p^p}{\norm{A}_p^p} \frac{\alpha}{(1- 3\varepsilon)\norm{a_i}_p^p} 
    =  \frac{\alpha}{(1- 3\varepsilon)\norm{A}_p^p} \leq \frac{2\alpha}{\norm{A}_p^p}
    \leq 2/k.
\]
Further we have that $\E(\sum_{i=1}^n P(i \in S)X_i )=1$ and
\[ \sum_{i=1}^n P(i \in S)X_i^2 \leq \frac{2}{k} \cdot \sum_{i=1}^n P(i \in S)X_i =\frac{2}{k} \]
Using Bernstein's inequality we get that
\begin{align*}
    P(|\sum_{i=1}^n P(i \in S)X_i ~ -1 |\geq \varepsilon)
    \leq  \exp\left( -\frac{\varepsilon^2/2}{2/k + 2/(3k)  } \right)
    \leq \exp\left( -\frac{k \varepsilon^2}{6  } \right) \leq \delta.
\end{align*}
Since we do not know $P(i \in S )^{-1}$ but rather $w_i = (1 \pm 3\varepsilon)P(i \in S )^{-1}$ we get that
\begin{align*}
    &\sum_{i \in S} w_i \norm{\at_i}_p^p
    =\sum_{i \in S} (1\pm 3\varepsilon)P(i \in S)^{-1} (1\pm \varepsilon) \norm{a_i}_p^p\\
    &=\sum_{i \in [n]} (1\pm 3\varepsilon)(1\pm \varepsilon) X_i \norm{A}_p^p
    = (1 \pm \varepsilon)\norm{A}_p^p (1\pm 3\varepsilon)(1\pm \varepsilon)
    = (1 \pm 6\varepsilon)\norm{A}_p^p
\end{align*}
with failure probability at most $\delta$.
\end{proof}

\cref{thm_main:alg2}/\ref{thm: alg2} follows by substituting $\varepsilon$ by $\varepsilon/6$ and $\delta$ by $\delta/6$.

\section{Weighted sampling from multiple distributions}\label{sec: multsampldist}

Assume that we want to sample an index $i$ with probability $p_i+p_i'$ but we only have access to a sampling algorithm that samples with probability $p_i$ and another sampling algorithm that samples with probability $p_i'$.
The question is whether this is sufficient to sample with probability roughly $p_i + p_i'$ for some constants
$c_1 p_i +c_2 p_i'$.

\begin{lemma}\label{lem:pqsampling}
    Let $S_1 \subseteq [n]$ (resp $S_2$) be a sample where index $i\in[n]$ is sampled with probability $p_i$ (resp $p_i'$).
    Then $S=S_1 \cup S_2$ is a sample where $i$ is sampled with probability $(p_i + p_i') \geq P(i \in S) \geq (1/2) (p_i + p_i')$.
    Further, if both $p_i$ and $p_i'$ are known up to a factor of $(1\pm\varepsilon)$, i.e., we have $\tilde{p}_i=(1\pm\varepsilon) p_i $ and $\tilde{p}_i'=(1\pm\varepsilon) p_i' $, then we can compute the probability $ P(i \in S)$ up to a factor of $(1\pm\varepsilon)$.
\end{lemma}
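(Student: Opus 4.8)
The plan is to read $S_1$ and $S_2$ as the outputs of two independent runs of the two samplers, so that $\{i\in S_1\}$ and $\{i\in S_2\}$ are independent events; the two inequalities on $P(i\in S)$ will in fact hold under any coupling, and independence is only needed for the last part, where $P(i\in S)$ is actually computed.

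First I would dispatch the two bounds on $P(i\in S)$, where $S=S_1\cup S_2$. The upper bound $P(i\in S)\le p_i+p_i'$ is just the union bound. For the lower bound, inclusion--exclusion gives
\[
    P(i\in S)=P(i\in S_1)+P(i\in S_2)-P(i\in S_1\cap S_2)=p_i+p_i'-P(i\in S_1\cap S_2),
\]
and since $P(i\in S_1\cap S_2)\le\min\{p_i,p_i'\}\le\tfrac12(p_i+p_i')$ we get $P(i\in S)\ge p_i+p_i'-\min\{p_i,p_i'\}=\max\{p_i,p_i'\}\ge\tfrac12(p_i+p_i')$. Under independence one may replace $P(i\in S_1\cap S_2)$ by $p_ip_i'$, but only the coupling-free bound $P(i\in S_1\cap S_2)\le\min\{p_i,p_i'\}$ is needed here.

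For the approximate computation I would use independence to write $P(i\in S)=p_i+p_i'-p_ip_i'=f(p_i,p_i')$ with $f(x,y):=x+y-xy$, and note that $f$ is coordinatewise nondecreasing on $[0,1]^2$, since $f(x',y)-f(x,y)=(x'-x)(1-y)\ge 0$ for $x\le x'$ and $y\le 1$. Given $\tilde p_i=(1\pm\varepsilon)p_i$ and $\tilde p_i'=(1\pm\varepsilon)p_i'$, I would first clip the estimates to $[0,1]$; this cannot hurt, as the quantities being estimated lie in $[0,1]$, and the clipped values $\hat p_i=\min\{\tilde p_i,1\}$ still satisfy $\hat p_i\in[(1-\varepsilon)p_i,\min\{(1+\varepsilon)p_i,1\}]\subseteq[0,1]$, similarly for $\hat p_i'$. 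Monotonicity of $f$ on $[0,1]^2$ then sandwiches $f(\hat p_i,\hat p_i')$ between $f((1-\varepsilon)p_i,(1-\varepsilon)p_i')$ and $f(\min\{(1+\varepsilon)p_i,1\},\min\{(1+\varepsilon)p_i',1\})$. The lower endpoint equals $(1-\varepsilon)\big[(p_i+p_i')-(1-\varepsilon)p_ip_i'\big]\ge(1-\varepsilon)f(p_i,p_i')$; for the upper endpoint a short case split gives the bound $(1+\varepsilon)f(p_i,p_i')$: if both arguments stay below $1$, the factoring $f((1+\varepsilon)p_i,(1+\varepsilon)p_i')=(1+\varepsilon)\big[(p_i+p_i')-(1+\varepsilon)p_ip_i'\big]\le(1+\varepsilon)f(p_i,p_i')$ works; if at least one argument is capped at $1$, then $f$ equals $1$ there, while $(1+\varepsilon)p_i>1$ forces $(1+\varepsilon)f(p_i,p_i')\ge(1+\varepsilon)\max\{p_i,p_i'\}\ge 1$. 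Hence $f(\hat p_i,\hat p_i')=(1\pm\varepsilon)P(i\in S)$, which is the claimed estimate.

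The only place that needs a moment's thought is this last computation: because $p_ip_i'$ enters $f$ with a minus sign, a relative error in $p_i,p_i'$ propagates to the product term with the ``wrong'' orientation, so one cannot simply multiply out error factors naively. The resolution is the factoring $f((1\pm\varepsilon)x,(1\pm\varepsilon)y)=(1\pm\varepsilon)\big[(x+y)-(1\pm\varepsilon)xy\big]$ together with $0\le xy\le\tfrac12(x+y)\le f(x,y)$ and the clip-and-monotonicity step above; everything else is the union bound and inclusion--exclusion.
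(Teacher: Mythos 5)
Your proof is correct and follows essentially the same route as the paper's: under independence, inclusion--exclusion gives $P(i\in S)=p_i+p_i'-p_ip_i'$, the elementary bound $p_ip_i'\le \frac{1}{2}(p_i+p_i')$ gives the sandwich, and the approximation claim is an error-propagation analysis of $f(x,y)=x+y-xy$. Two refinements in your write-up are worth flagging. First, you observe (correctly) that the sandwich $\frac{1}{2}(p_i+p_i')\le P(i\in S)\le p_i+p_i'$ does not actually require independence: the upper bound is the union bound, and the lower bound follows from $P(i\in S_1\cap S_2)\le\min\{p_i,p_i'\}$, which holds under any coupling. The paper derives both bounds from the independent product formula, so it only establishes them under that hypothesis.

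Second, and more substantively, your clipping step repairs a genuine gap in the paper's proof of the approximation claim. The paper asserts that $g(c_1,c_2)=c_1p_i+c_2p_i'-c_1c_2p_ip_i'$ is maximized over $c_1,c_2\in[1-\varepsilon,1+\varepsilon]$ at the diagonal corner $c_1=c_2=1+\varepsilon$. But $\partial g/\partial c_1=p_i(1-c_2p_i')$ changes sign when $c_2p_i'>1$, so when $(1+\varepsilon)p_i'>1$ the maximizing corner can be off-diagonal, e.g.\ $(1+\varepsilon,1-\varepsilon)$. (The paper's final inequality $g(c_1,c_2)\le(1+\varepsilon)(p_i+p_i'-p_ip_i')$ does in fact hold at all four corners of the box --- this can be checked directly --- but the chain of inequalities as written is not justified in that regime.) Your clip-to-$[0,1]$ step, together with the coordinatewise monotonicity of $f$ on $[0,1]^2$ and the short case split on whether a cap is hit, avoids this subtlety entirely and gives a cleaner, fully rigorous argument. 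It also yields the natural estimator $f(\min\{\tilde p_i,1\},\min\{\tilde p_i',1\})$, which is what one should output in any case since the quantity being estimated is a probability.
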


\begin{proof}
    First note that the probability that $i \notin S$ is given by
    \[
        P(i \notin S)=(1-p_i)(1-p_i')=1-p_i-p_i'+p_ip_i'
    \]
    and consequently
    \[
        P(i \in S)=p_i+p_i'-p_ip_i'.
    \]
    Since $0\leq p_ip_i' = \frac{p_ip_i'}{2}+ \frac{p_ip_i'}{2}\leq \frac{p_i}{2}+ \frac{p_i'}{2}$ this implies that
    \[
        p_i + p_i' \geq P(i \in S)\geq \frac{1}{2}\cdot  (p_i + p_i').
    \]
    Further let $\tilde{p}_i=c_1  {p}_i$ and $\tilde{p}_i'=c_2  {p}_i'$.
    Using elementary calculus and using the fact that $\tilde{p}_i'\geq 0 $ and $\tilde{p}_i \geq 0$ one can verify that the probabilities are maximized, respectively minimized at the approximation boundaries, i.e., when $c_1,c_2 = (1\pm\eps)$.
    
    We thus get that
    \begin{align*}
        c_1 p_i + c_2 p_i'- c_1 c_2 p_ip_i' \leq (1 + \varepsilon)(p_i + p_i')- (1 + \varepsilon)^2p_ip_i' \leq (1 + \varepsilon)(p_i + p_i'- p_ip_i')=(1 + \varepsilon)P(i \in S).
    \end{align*}
    and similarly
    \begin{align*}
        c_1 p_i + c_2 p_i'- c_1 c_2 p_ip_i' \geq (1 - \varepsilon)(p_i + p_i')- (1 - \varepsilon)^2p_ip_i' \geq (1 - \varepsilon)(p_i + p_i'- p_ip_i')=(1 - \varepsilon)P(i \in S).
    \end{align*}
\end{proof}

We get the following corollary:

\begin{corollary}[copy of \cref{cor:combinedsamplingmain}]\label{cor: combinedsampling}
    Combining a sample $S_1$ from \cref{alg:turnstilesampling} with parameter $k$ and a uniform sample $S_2$ with sampling probability $k/n$ we get a sample $S_1 \cup S_2$ of size $\Theta(k)$ and the sampling probability of $i$ is $\Omega\left(k \left(\frac{\norm{a_i}_p^p}{\norm{A}_p^p} +1/n\right)\right)$, for any sample $\at_i$ we have that $\norm{\at_i-a_i}_p\leq (\varepsilon/3) \norm{a_i}_p $.
    Further, the sampling probability and thus appropriate weights can be approximated up to a factor of $(1\pm\eps)$.
\end{corollary}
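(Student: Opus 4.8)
The plan is to obtain this corollary as an essentially immediate consequence of \cref{lem:pqsampling} combined with the guarantees of \cref{thm_main:alg2}, instantiating the ``distribution $p$'' as the $\ell_p^p$-sampler and the ``distribution $q$'' as the uniform sample. Concretely, let $S_1$ be the output of \cref{alg:turnstilesampling} run with parameter $k$. By \cref{thm_main:alg2}, on an event of probability at least $1-\delta$ we have $p_i := \Pr[i\in S_1] \geq \min\{1, k\norm{a_i}_p^p/\norm{A}_p^p\}$, the returned weight satisfies $w_i=(1\pm\eps)/p_i$ (so $1/w_i$ is a $(1\pm\eps)$-estimate of $p_i$), $|S_1|\in[k,2k]$, and every returned row obeys $\norm{\at_i-a_i}_p\leq(\eps/3)\norm{a_i}_p$. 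Let $S_2$ be an independent uniform sample in which each $i$ is included with probability $p_i' := k/n$; as noted in the technical overview, this can be folded into the sketch by keeping the rows with $t_i\le k/n$ exactly rather than hashing them, but for the correctness proof it suffices to treat $S_2$ as an independent uniform sample, which returns the true rows $a_i$ with no perturbation, and whose inclusion probability $p_i'=k/n\le 1$ is known \emph{exactly}.

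I would then apply \cref{lem:pqsampling} to $S_1$ and $S_2$. It yields, for $S=S_1\cup S_2$, that $\tfrac12(p_i+p_i')\le \Pr[i\in S]\le p_i+p_i'$, hence (up to the implicit clipping at $1$) $\Pr[i\in S]=\Theta(p_i+p_i')=\Omega\!\left(k\big(\norm{a_i}_p^p/\norm{A}_p^p+1/n\big)\right)$, which is the claimed marginal. The perturbation bound is immediate: any $\at_i$ in the combined sample originates either from $S_1$, where $\norm{\at_i-a_i}_p\le(\eps/3)\norm{a_i}_p$ by \cref{thm_main:alg2}, or from $S_2$, where $\at_i=a_i$; if $i\in S_1\cap S_2$ we simply keep one of the two copies, either of which satisfies the bound. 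For the weights, \cref{lem:pqsampling} further guarantees that since $p_i$ is known up to $(1\pm\eps)$ (via $1/w_i$) and $p_i'$ is known exactly, $\Pr[i\in S]=p_i+p_i'-p_ip_i'$ can be computed up to a $(1\pm\eps)$ factor; taking its reciprocal gives a valid $(1\pm\eps)$-approximate importance weight.

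It remains to control $|S|$. The lower bound $|S|\ge|S_1|\ge k$ holds on the same event as \cref{thm_main:alg2}. For the upper bound, $|S|\le|S_1|+|S_2|\le 2k+|S_2|$, and $|S_2|=\sum_{i\in[n]}\mathbf{1}[t_i\le k/n]$ is a sum of independent Bernoulli random variables with mean $k$, so by \cref{lem: altbernolem} (or a Chernoff bound) $|S_2|=O(k)$ with failure probability at most $\delta$; thus $|S|=\Theta(k)$. A union bound over the $O(1)$ failure events of \cref{thm_main:alg2}, the size concentration of $|S_2|$, and \cref{lem:pqsampling} (which introduces no additional randomness), followed by rescaling $\delta$ and $\eps$ by constants, completes the argument.

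The only genuinely delicate point is not in the probabilistic reasoning but in the turnstile implementation of $S_2$ and its composition with $S_1$: one must argue that $\{i:t_i\le k/n\}$ is small enough to be stored exactly (which follows from the same Bernoulli concentration, since its size is $O(k)$ with high probability), that the corresponding rows can be recovered exactly from the turnstile stream, and that sharing the scalars $t_i$ between the hashed part and the stored part does not disturb the independence assumptions underlying \cref{thm_main:alg2}. Everything else reduces to a direct substitution into \cref{lem:pqsampling}.
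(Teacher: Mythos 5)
Your proposal is correct and takes essentially the same route the paper intends: the paper presents \cref{cor: combinedsampling} as a direct consequence of \cref{lem:pqsampling} applied to the $\ell_p$ sample from \cref{thm_main:alg2} and an exact-probability uniform sample, and leaves the routine verification (size concentration of $|S_2|$, the perturbation bound by case analysis, and the $(1\pm\eps)$ weight approximation) implicit. You have simply spelled out those routine steps, including the helpful observation that $S_2$ returns exact rows and has exactly known inclusion probability $k/n$, so no new ideas are involved.
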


For the sake of completeness note that if we want to sample with probability $\Omega\left(k \left(\frac{\norm{a_i}_p^p}{\norm{A}_p^p} +1/n\right)\right) $ then for this particular sampling probability there is another even simpler approach, which is to not sketch indices with $t_i \geq k/n$ in \cref{alg:turnstilesampling}, but instead include the original rows $a_i$ into a separate uniform sample. In this case, their weights $w_i$ need to be adapted to $w_i=p_i^{-1}=(\max\{\frac{k}{n}, \frac{\norm{a_i}_p^p}{\alpha}\})^{-1}$.

\section{Application to \texorpdfstring{$\ell_p$}{lp} leverage score sampling for regression loss functions}\label{app:application}

We now show how \cref{alg:turnstilesampling} can be used to get an $\eps$-coreset by simulating known results based on $\ell_p$ leverage score sampling. We first need a few more definitions.

\begin{definition}[$\ell_p$ leverage scores]\label{def:leveragescores}
     For fixed $p \in [1, 2]$ we set $u_i^{(p)}= \sup_{z \neq 0} \frac{|a_iz |^p}{\|Az \|^p_p}$ to be the $i$-th leverage score of $A$.
\end{definition}

\begin{definition}[\citealt{DasguptaDHKM09}, copy of \cref{def:good_basismain}]
\label{def:good_basis}
Let $A$ be an $n\times d$ matrix, let $p \in [1,\infty)$, and let $q \in (1, \infty]$ be its dual norm, satisfying $\frac{1}{p}+\frac{1}{q}=1$.
Then an $n \times d$ matrix $V$ is an \emph{$(\alpha,\beta,p)$-well-conditioned basis} for the column space of $A$ if\\
(1) $\Vert V \Vert_p:=\left( \sum_{i \leq n, j \leq d}|V_{ij}|^p\right)^{1/p}\leq \alpha$, and\\
(2) for all $z\in\mathbb{R}^d$, $\Vert z \Vert_q \leq \beta \Vert V z\Vert_p$.

We say that $V$ is an \emph{$\ell_p$-well-conditioned basis} for the column
space of $A$ if $\alpha$ and $\beta$ are $d^{O(1)}$,
independent of $n$.
\end{definition}

\begin{proposition}[copy of \cref{pro:Rmatrixmain}]\label{pro: Rmatrix}
    There exists a turnstile sketching algorithm that for a given $p\in[1,2]$ computes an invertible matrix $R$ such that $AR^{-1}$ is $(\alpha,\beta,p)$-well-conditioned with $\alpha=O(d^{2/p-1/2}(\log d)^{1/p-1/2}),\beta=O((d(\log d)(\log\log d))^{1/p})$, and $(\alpha\beta)^p = O(d^{3-p/2}(\log d)^{2-p/2}(\log\log d))$ for $p\in[1,2)$. For $p=2$ it holds that $\alpha=O(\sqrt{2d}),\beta=O(\sqrt{2})$, and $(\alpha\beta)^p=O(d)$. Moreover, the $\ell_p$ leverage scores $u_i^{(p)}$ satisfy $u_i^{(p)}\leq \beta^p\norm{a_i R^{-1}}_p^p$, and $\sum_i u_i^{(p)}\leq (\alpha\beta)^p=d^{O(1)}$.
\end{proposition}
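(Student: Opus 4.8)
The plan is to follow the now-standard ``sketch-and-$QR$'' route to a well-conditioned basis (cf.\ \citealp{SohlerW11,DrineasMMW12,WoodruffZ13}), realized entirely by linear maps so that everything before the post-processing stage is compatible with turnstile updates. In parallel to \cref{alg:turnstilesampling} I would maintain the oblivious $\ell_p$ subspace embedding $\Pi_2$ of \citet{WoodruffY23lpSE}: for $p\in[1,2)$ it has $m=O(d\log d)$ rows and, with high probability, satisfies $\norm{Az}_p\le\norm{\Pi_2 Az}_p\le\kappa\norm{Az}_p$ for all $z\in\mathbb{R}^d$ with distortion $\kappa=O((d(\log d)(\log\log d))^{1/p})$; for $p=2$ I would instead use a constant-distortion $\ell_2$ subspace embedding with $m=O(d)$ rows (e.g.\ a Gaussian or CountSketch sketch). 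Since $\Pi_2$ is a fixed linear map it is updated additively over the stream. After the stream ends one has $\Pi_2 A\in\mathbb{R}^{m\times d}$; assuming w.l.o.g.\ that $A$ has full column rank $d$ (otherwise restrict to a maximal linearly independent set of columns, which changes neither the column space nor the leverage scores), $\Pi_2 A$ also has rank $d$ because $\kappa<\infty$, so a thin $QR$ decomposition $\Pi_2 A=QR$ exists with $Q\in\mathbb{R}^{m\times d}$ having $\ell_2$-orthonormal columns and $R\in\mathbb{R}^{d\times d}$ invertible. Set $P:=R^{-1}$ and $U:=AR^{-1}$, and observe $\Pi_2 U=(\Pi_2 A)R^{-1}=QRR^{-1}=Q$.

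The conditioning bounds then follow by combining the embedding guarantee with elementary inter-norm inequalities on $\mathbb{R}^m$. Substituting $z\mapsto R^{-1}z$ into the embedding inequality gives $\tfrac1\kappa\norm{Qz}_p\le\norm{Uz}_p\le\norm{Qz}_p$ for all $z$, and since $Q$ has orthonormal columns, $\norm{Qz}_2=\norm{z}_2$ and (as $p\le2$) $\norm{z}_2\le\norm{Qz}_p\le m^{1/p-1/2}\norm{z}_2$. For condition (1) of \cref{def:good_basis}, applying this to the standard basis vectors yields $\norm{Ue_j}_p\le m^{1/p-1/2}$, hence $\norm{U}_p^p=\sum_{j=1}^d\norm{Ue_j}_p^p\le d\,m^{1-p/2}$, so one may take $\alpha=(d\,m^{1-p/2})^{1/p}=O(d^{2/p-1/2}(\log d)^{1/p-1/2})$ for $p\in[1,2)$, and $\alpha=O(\sqrt{2d})$ for $p=2$ where the factor $m^{1/p-1/2}=1$ and the constant reflects the embedding's normalization. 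For condition (2), with $q$ the dual exponent ($q\ge2$), $\norm{z}_q\le\norm{z}_2=\norm{Qz}_2\le\norm{Qz}_p\le\kappa\norm{Uz}_p$, so $\beta=\kappa=O((d(\log d)(\log\log d))^{1/p})$, and $\beta=O(\sqrt2)$ for $p=2$. Multiplying, $(\alpha\beta)^p=d\,m^{1-p/2}\kappa^p=O(d^{3-p/2}(\log d)^{2-p/2}(\log\log d))$ for $p\in[1,2)$ and $O(d)$ for $p=2$.

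It remains to bound the $\ell_p$ leverage scores in terms of $U$. Fix $i\in[n]$. Since $R$ is invertible, writing any $z\ne0$ as $z=R^{-1}y$ and letting $u_i=a_iR^{-1}$ be the $i$-th row of $U$, we have $\tfrac{|a_iz|^p}{\norm{Az}_p^p}=\tfrac{|u_iy|^p}{\norm{Uy}_p^p}$. By Hölder's inequality and condition (2), $|u_iy|\le\norm{u_i}_p\norm{y}_q\le\beta\norm{u_i}_p\norm{Uy}_p$, so the quotient is at most $\beta^p\norm{u_i}_p^p=\beta^p\norm{a_iR^{-1}}_p^p$; taking the supremum over $y\ne0$ gives $u_i^{(p)}\le\beta^p\norm{a_iR^{-1}}_p^p$. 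Summing over $i$ yields $\sum_i u_i^{(p)}\le\beta^p\sum_i\norm{a_iR^{-1}}_p^p=\beta^p\norm{U}_p^p\le(\alpha\beta)^p=d^{O(1)}$.

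I expect the only delicate points to be bookkeeping rather than conceptual: matching the stated powers of $d$, $\log d$, and $\log\log d$ requires plugging in the precise row count $m$ and distortion $\kappa$ of the \citet{WoodruffY23lpSE} embedding and tracking the inter-norm factor $m^{1/p-1/2}$; the (small) failure probability of the oblivious embedding must be folded into the overall $\delta$; and one has to verify that the post-processing steps (the $QR$ decomposition on the $m\times d$ matrix $\Pi_2 A$ and the right-multiplication $B_j\mapsto B_jR^{-1}$ in \cref{alg:turnstilesampling}) are well defined after the rank reduction, plus treat $p=2$ separately where the inter-norm factor disappears. Turnstile-compatibility itself is immediate because $\Pi_2$ is linear and all non-linear computation happens only after the stream has been read.
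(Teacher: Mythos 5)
Your proof is correct and follows essentially the same sketch-then-$QR$ route as the paper: maintain an oblivious $\ell_p$ subspace embedding $\Pi_2$ in parallel over the stream, factor $\Pi_2 A=QR$ after the stream ends, set $U=AR^{-1}$, and read off $\alpha,\beta$ from the embedding distortion, the $\ell_2$-orthonormality of $Q$, and inter-norm inequalities on $\mathbb{R}^m$. The only cosmetic differences are that you bound $\|U\|_p^p$ directly column-by-column to reach $d\,m^{1-p/2}$ (the paper takes a Cauchy--Schwarz detour to the same quantity), and you prove the leverage-score bound $u_i^{(p)}\le\beta^p\|a_iR^{-1}\|_p^p$ and its sum bound inline via H\"older where the paper cites Lemma~2.12 of \citet{MunteanuOP22}.
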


\begin{proof}[Proof of Proposition \ref{pro:Rmatrixmain}/\ref{pro: Rmatrix}]
    Let $\Pi\in \mathbb{R}^{r\times n}$ be an $\ell_p$ subspace embedding satisfying \begin{align} \label{eqn:subspaceembedding}
    \forall x\in \mathbb{R}^d\colon \|A x\|_p / \eta \leq \|\Pi A x\|_p \leq \gamma \|A x\|_p
    \end{align}

    We show that if $\Pi A = QR$ is the $QR$ decomposition, then $U=AR^{-1}$ is a $(\eta d r^{1/2},\gamma,p)$-well-conditioned basis for the column space of $A$. Note that $q\geq 2 \geq p \geq 1$. Then
    \begin{align*}
        \|z\|_q \leq \|z\|_2 = \|Qz\|_2 = \|\Pi A R^{-1} z\|_2 \leq \|\Pi A R^{-1} z\|_p \leq \gamma \|A R^{-1} z\|_p = \gamma \|U z\|_p
    \end{align*}
    and noting that $Q\in \mathbb{R}^{r\times d}$ has orthonormal columns, we also have that
    \begin{align*}
        \|U\|_p^p &= \sum_{i=1}^d \|AR^{-1}_i\|_p^p \leq \eta^p \sum_{i=1}^d \|\Pi AR^{-1}_i\|_p^p = \eta^p \sum_{i=1}^d \|Q_i\|_p^p \\
        &\leq \eta^p {d}^{1/2} \left( \sum_{i=1}^d \|Q_i\|_p^{2p} \right)^{1/2} \leq \eta^p {d}^{1/2} \left( \sum_{i=1}^d ({r}^{1/p-1/2})^{2p} \|Q_i\|_2^{2p} \right)^{1/2} \\
        &\leq \eta^p {d}^{1/2} ({r}^{1/p-1/2})^{p} \left( \sum_{i=1}^d \|Q_i\|_2^{2p} \right)^{1/2} = \eta^p d ({r}^{1/p-1/2})^{p} \\
    \end{align*}
    Taking the $p$-th root on both sides yields $\|U\|_p \leq \eta d^{1/p} {r}^{1/p-1/2}.$
    
    Next, we choose for $\Pi$ the oblivious subspace embeddings given in \citealp[Corollary 1.12 of][]{WoodruffY23lpSE}, that allow for the following parameterization: if $1 \leq p < 2$ then \cref{eqn:subspaceembedding} holds with $\eta = O(1), \gamma = O((d(\log d)(\log\log d))^{1/p}),$ and $r=O(d\log d)$. It is thus $(\alpha,\beta,p)$-well-conditioned with $\alpha = \eta d^{1/p} r^{1/p-1/2} = O(d^{2/p-1/2}(\log d)^{1/p-1/2})$, and $\beta = \gamma = O((d(\log d)(\log\log d))^{1/p})$. Thus, $(\alpha\beta)^p = O(d^{3-p/2}(\log d)^{2-p/2}(\log\log d))$.
    
    In the special case $p=2$, it is known \citep{ClarksonW17} that the CountSketch directly yields an $(1\pm\eps)$-error oblivious subspace embedding with sparsity $s=1$, thus it can be applied in $O(\texttt{nnz}(A))$ time, and was shown in \citealp[Lemma 2.14 of][]{MunteanuOP22} that it yields a $(\alpha,\beta,2)$-well-conditioned basis with $\alpha=\sqrt{2d}, \beta=\sqrt{2}$ using the $QR$ decomposition as above. Thus, $(\alpha\beta)^p = 4d$ in this case.

    Finally, \citealp[Lemma 2.12 of][]{MunteanuOP22} yields that $u_i^{(p)}\leq \beta^p \|U_i\|_p^p = \beta^p \|a_iR^{-1}\|_p^p$, and $\sum_i u_i^{(p)}\leq (\alpha\beta)^p$.
\end{proof}

We remark that there exist sparse alternatives for $\ell_p$ subspace embeddings given in \citealp[Theorems  4.2, 5.2 of][]{WangW22} that admit a sparsity of $s=O(\log d)$. These apply to the data in $O(\texttt{nnz}(A)\log d)$ time (much faster than dense matrix multiplication) where $\texttt{nnz}(A)$ denotes the number of non-zero entries of $A$. However this comes at the cost of slightly larger $(\alpha\beta)^p = O(d^{2+p/2}(\log d)^{1+p/2})$.

For asymmetric loss functions (all of \cref{pro:leveragescoresampling} except $g(t)=|t|^p$), we require an additional parameter $\mu$ that has been introduced for logistic regression by \citet{MunteanuSSW18} and generalized to arbitrary $p$ \citep{MunteanuOP22}.

\begin{definition}[$\mu$-complexity, \citealt{MunteanuOP22}]\label{def:mu_complex}
	Let $A \in \mathbb{R}^{n \times d }$ be any matrix. For a fixed $p\geq 1$ we define
	\[ \mu_p(A)=\sup_{z  \in \mathbb{R}^d\setminus\{0\}} \frac{\sum_{a_i z  >0}|a_i z |^p}{\sum_{a_i z  <0}|a_i z |^p}. \]
	We say that $A$ is $\mu$-complex if $\mu_p(A)\leq \mu < \infty$.
\end{definition}

We summarize a (non-exclusive) list of leverage score sampling results for various loss functions in the following proposition:

\begin{proposition}\label{pro:leveragescoresampling}
    Let $A\in\mathbb{R}^{n\times d}$ be $\mu$-complex.
    If we sample $S\subset [n]$ of a certain size $k:=|S|=\operatorname{poly}(\mu d/\eps)$ proportional to sampling probabilities $p_i \geq c(\norm{a_i R^{-1}}_p^p + 1/n)$ where $R$ is the matrix from Proposition \ref{pro: Rmatrix} and weights $w_i = (kp_i)^{-1}$ then with constant probability the weighted subsample is an $\eps$-coreset, i.e., it holds that
    \[
        \forall z \in \mathbb{R}^d\colon \sum_{i \in S}w_i g(a_i z ) = (1 \pm \varepsilon) \sum_{i \in [n]}g(a_i z)
    \]
    where $g(\cdot)$ denotes one of the following loss functions:
    \begin{itemize}
        \item $g(t)=|t|^p$ (here $k=\operatorname{poly}(d/\eps)$ is independent of $\mu$),
        \item $g(t)=\max\{0, t\}^p$,
        \item $g(t)=-\ln(\Phi_p(-t))$, where $\Phi_p\colon \mathbb{R} \rightarrow [0,1]$ denotes the CDF of the $p$-generalized normal distribution,
        \item $g(t)=\ln(1+e^t)$.
    \end{itemize}
\end{proposition}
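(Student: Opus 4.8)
The plan is to assemble Proposition~\ref{pro:leveragescoresampling} from known $\ell_p$ leverage-score and sensitivity-sampling results, treating the symmetric loss $g(t)=\abs{t}^p$ and the three asymmetric losses separately. For $g(t)=\abs{t}^p$ I would take $U=AR^{-1}$ to be the $(\alpha,\beta,p)$-well-conditioned basis of Proposition~\ref{pro: Rmatrix}, so that $u_i^{(p)}\leq\beta^p\norm{a_iR^{-1}}_p^p$ and $\sum_i u_i^{(p)}\leq(\alpha\beta)^p$; hence the prescribed $p_i\geq c(\norm{a_iR^{-1}}_p^p/\norm{U}_p^p+1/n)$ dominates $(u_i^{(p)}+1/n)/(\alpha\beta)^p$ up to constants, and the classical $\ell_p$ row-sampling lemma of \citet{DasguptaDHKM09} (in the refined form used by \citealp{MunteanuOP22}), applied with $k=\operatorname{poly}(d/\eps)$ rows and weights $w_i=(kp_i)^{-1}$, gives an $\ell_p$ subspace embedding, i.e.\ exactly $\sum_{i\in S}w_i\abs{a_iz}^p=(1\pm\eps)\norm{Az}_p^p$; the $+1/n$ term is harmless in this case.

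For the asymmetric losses I would invoke the sensitivity-sampling framework of \citet{LangbergS10,FeldmanSS20}: set $\varsigma_i=\sup_{z\neq 0} g(a_iz)/\sum_j g(a_jz)$ and $\mathfrak{S}=\sum_i\varsigma_i$; then sampling $k=O(\eps^{-2}\mathfrak{S}(\Delta\log\mathfrak{S}+\log(1/\delta)))$ indices from any distribution dominating $\varsigma_i/\mathfrak{S}$, reweighted by inverse probabilities, yields an $\eps$-coreset with probability $1-\delta$, where the pseudo-dimension $\Delta$ of the induced range space is $O(d)$ up to logarithmic factors for all three loss classes. It then remains to prove the sensitivity bound $\varsigma_i\leq C_g(\mu^{e_g}u_i^{(p)}+1/n)$ with $e_g=1$ for $g(t)=\max\{0,t\}^p$ and $g(t)=\ln(1+e^t)$, and $e_g=2$ for $g(t)=-\ln(\Phi_p(-t))$; summing and using $\sum_i u_i^{(p)}\leq(\alpha\beta)^p$ gives $\mathfrak{S}=\operatorname{poly}(\mu d)$, and since $p_i\gtrsim c\,u_i^{(p)}/\beta^p+c/n$ dominates $\varsigma_i/\mathfrak{S}$ up to constants, the framework applies with $k=\operatorname{poly}(\mu d/\eps)$. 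This is the route of \citet{MunteanuSSW18} for logistic regression and \citet{MunteanuOP22} for the $\ell_p$ generalizations; the probit loss is handled analogously.

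The sensitivity bound is the main obstacle, and it splits into an upper bound on the numerator and a lower bound on the denominator. The numerator bound $g(a_iz)\lesssim\abs{a_iz}^p+g(0)\leq u_i^{(p)}\norm{Az}_p^p+g(0)$ is elementary from the growth of each $g$ at $\pm\infty$: $\ln(1+e^t)\leq\ln 2+\max\{0,t\}$, $\max\{0,t\}^p$ is already of this form, and $-\ln\Phi_p(-t)=\Theta(t^p)$ as $t\to+\infty$ while $-\ln\Phi_p(-t)\to 0$ as $t\to-\infty$. The denominator bound is where $\mu$ enters: because $\mu_p(A)$ is symmetric under $z\mapsto -z$, one obtains $\sum_{a_jz>0}\abs{a_jz}^p\geq\frac{1}{1+\mu}\norm{Az}_p^p$ for every $z$, and together with the base contribution of order $n\cdot g(0)$ from coordinates with $a_jz\approx 0$ this gives $\sum_j g(a_jz)\gtrsim\mu^{-e_g}(\norm{Az}_p^p+n)$; combining the two estimates yields $\varsigma_i\lesssim\mu^{e_g}u_i^{(p)}+1/n$. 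The probit case is the most delicate, since tracking the $p$-generalized Gaussian tail $\Phi_p(-t)\sim c_p\,t^{-(p-1)}e^{-t^p/p}$ and the behaviour near the origin needs more care, and this is where the extra $\mu^2$ (and the factor $p$ appearing in the $r$-bound of \cref{lem:applicationsmain}) originate. The remaining ingredients---the pseudo-dimension estimates, the uniform $1/n$ component, and the precise constants of the sensitivity-sampling bound---are standard and can be cited verbatim from \citet{FeldmanSS20,MunteanuOP22}.
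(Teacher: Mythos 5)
Your proof is correct and relies on the same external results that the paper's proof cites, but the two differ substantially in level of detail: the paper's proof of \cref{pro:leveragescoresampling} is a three-sentence pointer to the literature (\citealp{DrineasMM2006a,DasguptaDHKM09,SohlerW11,DrineasMMW12,WoodruffZ13} for $g(t)=|t|^p$, \citealp{MunteanuOP22} for $\max\{0,t\}^p$ and the probit loss, and \citealp{MunteanuSSW18,MaiMR21,MunteanuOP22} for logistic regression), whereas you unpack the internal machinery of those references. Your route -- the $\ell_p$ row-sampling/subspace-embedding argument for the symmetric loss and the \citet{LangbergS10,FeldmanSS20} sensitivity-sampling framework with the bound $\varsigma_i \lesssim \mu^{e_g}u_i^{(p)}+1/n$ for the asymmetric ones, with $e_g=1$ for the $\ell_p$-ReLU and logistic losses and $e_g=2$ for the $p$-generalized probit -- is exactly what the cited papers do, and your split of the sensitivity bound into a numerator growth estimate and a $\mu$-dependent denominator lower bound is the standard mechanism there. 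The exponents you assign are consistent with the $\mu$, $\mu$, and $\mu^2$ factors that resurface in the $r$-bounds of \cref{lem:applicationsmain}. So this is the same approach, stated more explicitly; the paper intentionally delegates these details to the references since the proposition is a summary of prior work rather than a new contribution, and the genuinely new content (perturbation robustness under $\|\at_i-a_i\|_p \le O(\eps)\|a_i\|_p$) appears in \cref{lem:applicationsmain}, not here.
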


\begin{proof}
    For the first item, $g(t)=|t|^p$, which is known as the loss function for linear $\ell_p$ regression, the result is known for $p=2$ \citep{DrineasMM2006a}, and has been generalized to general $p\in [1,2]$ \citep{DasguptaDHKM09}, and improved using sketching techniques \citep{SohlerW11,DrineasMMW12,WoodruffZ13}.
    
    For the second item, we refer to \citep{MunteanuOP22} who solved the problem for $g(t)=\max\{0, t\}^p$ as a means to approximate the third item, i.e., the $p$-generalized probit regression problem.
    
    The fourth item $g(t)=\ln(1+e^t)$ is known as logistic regression \citep{MunteanuSSW18,MaiMR21}, that can be handled by means of $\ell_1$ leverage score sampling \citep{MunteanuOP22}.
\end{proof}

Using these results we show that we can construct an $\eps$-coreset in the turnstile stream setting using our algorithm with only $\operatorname{poly}(\mu d/\eps)\log n$ overhead. The main challenge here is to show that the perturbation incurred from the fact that $\at_i$ is not exactly $a_i$, does not cause a large error for the loss function.

\begin{theorem}[copy of \cref{lem:applicationsmain}]\label{lem:applications}
    Let $A\in\mathbb{R}^{n\times d}$ be $\mu$-complex (see \cref{def:mu_complex}). 
    Given a leverage score sampling algorithm that constructs an $\eps$-coreset of size $k$, as for the loss functions below (summarized in \cref{pro:leveragescoresampling}), there exists a sampling algorithm that works in the turnstile stream setting that with constant probability outputs a weighted $2\varepsilon$-coreset $(A', w)\in \mathbb{R}^{k' \times d} \times \mathbb{R}_{\geq 1}$ of size $k'=Theta(k)$, such that
    \[
        \forall z  \in \mathbb{R}^d\colon \left| \sum_{i \in [k']}w_i g(a_i' z ) - \sum_{i=1}^n g(a_i z ) \right|\leq 2\varepsilon \sum_{i=1}^n g(a_i z ).
    \]
    The size of the sketching data structure used to generate the sample is $r \cdot s$, where $s=3\ln(36 n/\delta)$ and
    \[ 
        r=
        \begin{cases}
            O\left(k \ln(n) (\alpha^p \beta^p/\varepsilon)^p\right) & \text{if $g(t)=|t|^p $},\\
            O\left(k \ln(n) (\mu \alpha^p \beta^p/\varepsilon)^p\right) &\text{if $g(t)=\max\{0, t\}^p,$}\\
            O\left(k \ln(n) (\mu \alpha \beta/\varepsilon)\right) & \text{if $g(t)=\ln(1+e^t)$,}\\
            O\left(k \ln(n) (p \mu^2 \alpha^p \beta^p/\varepsilon)^p\right) &\text{if $g(t)=-\ln(\Phi_p(-t))$,}
        \end{cases}    
    \]
    where $\Phi_p\colon \mathbb{R} \rightarrow [0,1]$ denotes the CDF of the $p$-generalized normal distribution.
    In particular if the matrix $P:=R^{-1}$ of \cref{pro:Rmatrixmain} is used in \cref{alg:turnstilesampling}, then the overhead is at most $O(\ln(n) (\mu^2 \alpha^p \beta^p/\varepsilon)^p)=\operatorname{poly}(\mu d/\eps)\log(n)$.
\end{theorem}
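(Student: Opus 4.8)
The plan is to run \cref{alg:turnstilesampling} with the conditioning matrix $P=R^{-1}$ from \cref{pro: Rmatrix}, so that by associativity of matrix multiplication the heavy hitters it extracts are those of the rescaled well-conditioned basis $TU$, $U=AR^{-1}$, and the returned sample is an approximate $\ell_p$ row-norm sample of $U$. Combining this with an independent uniform sample of rate $\Theta(k/n)$ via \cref{cor: combinedsampling}, the resulting inclusion probabilities are $p_i=\Omega\!\big(k(\norm{u_i}_p^p/\norm{U}_p^p+1/n)\big)$; since $\norm{U}_p^p\le\alpha^p$ this is at least $c\,(\norm{a_iR^{-1}}_p^p+1/n)$ with $c=\Theta(k/(\alpha^p+1))$, which is exactly the overestimation condition required by the off-line leverage-score coreset results collected in \cref{pro:leveragescoresampling}. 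Because those results already take a sample size $k=\operatorname{poly}(\mu d/\eps)$ and our scheme only oversamples by a $\operatorname{poly}(\alpha,\beta)$ factor (and, for asymmetric losses, $\operatorname{poly}(\mu)$) while \cref{thm: alg2}(1) pins $|S|$ to $\Theta(k)$, the output has size $k'=\Theta(k)$, with weights $w_i=(1\pm\eps)p_i^{-1}$ by \cref{thm: alg2}(4). Failure probabilities are handled by a union bound with $\delta$ a constant.

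Next I would invoke the off-line coreset guarantee of \cref{pro:leveragescoresampling} for the weighted subsample drawn with these probabilities, obtaining $\sum_{i\in S}w_i\,g(a_iz)=(1\pm O(\eps))\sum_{i=1}^n g(a_iz)$ for all $z$, where the loss of accuracy incurred by using $w_i=(1\pm\eps)p_i^{-1}$ instead of $p_i^{-1}$ is absorbed using $g\ge 0$. For the subsequent perturbation analysis I would record the bijective change of variables $y=Rz$, under which $g(a_iz)=g(u_iy)$, so that the reported coreset row $a_i'=\tilde u_iR$ satisfies $a_i'z=\tilde u_iy$, and \cref{thm: findhh} applied to $TU$ guarantees $\norm{\tilde u_i-u_i}_p\le(\eps/3)\norm{u_i}_p$ and $\norm{\tilde u_i}_p^p=(1\pm\eps)\norm{u_i}_p^p$. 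It then remains to control $\sum_{i\in S}w_i\,\big(g(\tilde u_iy)-g(u_iy)\big)$ uniformly in $y$.

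This last bound is the technical heart and must be done per loss. The common tools are H\"older's inequality $|(\tilde u_i-u_i)y|\le(\eps/3)\norm{u_i}_p\norm{y}_q$, the well-conditioning inequality $\norm{y}_q\le\beta\norm{Uy}_p$ from \cref{def:good_basis}, and $\sum_{i\in S}w_i\norm{u_i}_p^p=O(\alpha^p)$ obtained from \cref{thm: alg2}(5) together with $\norm{\tilde u_i}_p^p=(1\pm\eps)\norm{u_i}_p^p$. For $g(t)=|t|^p$ this is a weighted-$\ell_p$ triangle inequality, $\norm{W^{1/p}(\tilde U_S-U_S)y}_p\le\eps\,\alpha\norm{y}_q\le\eps\,\alpha\beta\norm{Uy}_p$, and since $\norm{Uy}_p^p=\sum_ig(u_iy)$, passing to the $p$-th power costs a further factor $p$. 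For $g(t)=\ln(1+e^t)$ one uses $|g'|\le 1$ and, because $g$ is asymmetric, charges the resulting $\sum_i|u_iy|$ against $\sum_ig(u_iy)$ via $\mu$-complexity, $\sum_i|u_iy|\le(1+\mu)\sum_{u_iy>0}|u_iy|\le(1+\mu)\sum_ig(u_iy)$ (here $p=1$, matching the $\ell_1$ conditioning used for logistic regression). The cases $g(t)=\max\{0,t\}^p$ and $g(t)=-\ln(\Phi_p(-t))$ are handled the same way, now with the polynomial-type local Lipschitz estimates and the one- resp.\ two-sided $\mu$-comparisons of \citet{MunteanuOP22}, which is what produces the extra $\mu$, resp.\ $\mu^2$, and $(\alpha\beta)^p$ factors. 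Setting the sampler's internal accuracy to $\eps'=\eps/\operatorname{poly}(\mu,\alpha,\beta)$ — precisely the $\operatorname{poly}$ factor displayed in each of the four cases — makes the total error at most $2\eps\sum_ig(a_iz)$, and substituting $\eps'$ into the requirement $r\ge 32k\ln n\,(72/\eps')^p$ of \cref{thm: alg2} gives the stated bounds; the concluding ``in particular'' follows by inserting the explicit $\alpha,\beta$ of \cref{pro: Rmatrix} and taking the worst (probit) case.

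The main obstacle is exactly this third step: converting the additive, basis-relative perturbation $\norm{\tilde u_i-u_i}_p\le(\eps/3)\norm{u_i}_p$ into a multiplicative $(1\pm O(\eps))$ error on the loss. For the symmetric $\ell_p$ loss this hinges on the well-conditioning inequality to pass from $\norm{y}_q$ to $\norm{Uy}_p$; for the asymmetric losses it additionally requires the $\mu$-complexity bound of \cref{def:mu_complex} to pass from $\sum_i|u_iy|^p$ to $\sum_ig(u_iy)$, and the per-loss Lipschitz/Taylor estimates must be verified uniformly over all query directions $y$ — in particular near points where $u_iy$ changes sign (the kink of ReLU, the flat tail of logistic, the sub-Gaussian tail of probit). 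Obtaining these four estimates with clean constants, and checking that the resulting $\eps'$ remains only a $\operatorname{poly}(d,\mu)/\eps$ factor, is where the real work lies.
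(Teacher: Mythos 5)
Your plan matches the paper's proof closely: run \cref{alg:turnstilesampling} with $P=R^{-1}$ from \cref{pro: Rmatrix} in parallel with the conditioning sketch, combine with a uniform component via \cref{cor: combinedsampling}, hand the resulting inclusion probabilities to \cref{pro:leveragescoresampling}, and then charge the extra error from replacing $a_i$ by $\at_iR$ against $\sum_i g(a_iz)$ on a per-loss basis using H\"older, the well-conditioning bound $\norm{Rz}_q\le\beta\norm{Az}_p$, and $\mu$-complexity for the asymmetric losses, choosing $\eps'=\eps/\operatorname{poly}(\mu,\alpha,\beta)$ accordingly. This is exactly the paper's route. The only soft spot is the phrase that ``our scheme only oversamples by a $\operatorname{poly}(\alpha,\beta)$ factor'': in the paper that $(\alpha\beta)^p$ overhead is already absorbed into the sample size $k$ required by \cref{pro:leveragescoresampling} (whose inclusion probabilities are stated against $\norm{a_iR^{-1}}_p^p$, not against $u_i^{(p)}$), so the sampler itself returns $\Theta(k)$ rows with no further oversampling; the $(\alpha\beta)^p$ shows up only in the sketch width $r$ through $\eps'$, not in $k'$. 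Beyond that phrasing, the decomposition, the change of variables $y=Rz$, the per-term mean-value/Bernoulli estimate for $|t|^p$ (where the factor $p\le 2$ is swallowed into the constant), the $|g'|\le 1$ argument for logistic with the $(1+\mu)$ comparison, and the split into a Lipschitz part and a power part for probit all coincide with what the paper does.
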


\begin{proof}[Proof of Theorem \ref{lem:applicationsmain}/\ref{lem:applications}]
    We use the algorithm from Proposition \ref{pro: Rmatrix} and \cref{alg:turnstilesampling} in parallel.
    From the algorithm of Proposition \ref{pro: Rmatrix} we get a matrix $R$ such that $u_i^{(p)}\leq c_R\norm{a_i R^{-1}}_p^p$.
    Using \cref{alg:turnstilesampling} with the modification described in Section \ref{sec: multsampldist} and parameters $r \geq \max\{ 32 k\ln(n) \cdot (72/\varepsilon')^p, 120k \}$, $ s \geq 3\ln(36n/\delta)/0.025^3$, and $\varepsilon'=\varepsilon/(\alpha \beta)^p$, we get a sample $S$ of size $2k \geq |S| \geq k$ by \cref{thm_main:alg2}/\ref{thm: alg2} resp. Corollary \ref{cor: combinedsampling}.
    Thus $S$ consists of $\Theta(k)$ (weighted) samples $ (i, \at_i, \tilde w_i)$, where $\norm{\at_iR-a_i}_p \leq (\varepsilon'/3) \norm{a_i}_p$ and $\tilde w_i =(1 \pm \varepsilon') w_i = (1 \pm \varepsilon') P(i \in S)^{-1}$ with $P(i \in S) \geq c(u_i^{(p)}+ 1/n)$.

    Using \cref{pro:leveragescoresampling}, with constant probability it holds that
    \[
        \sum_{i \in S}w_i g(a_i z ) = (1 \pm \varepsilon')^2 \sum_{i \in [n]}g(a_i z ).
    \]
    Here, the additional factor of $(1 \pm \varepsilon')$ comes from the approximation of the weights in the output of our algorithm, up to which we can assume in the following we have the exact weights of \cref{pro:leveragescoresampling}.
    The remaining part of the proof is to show that the error incurred by replacing $a_i$ with the output rows $\at_i P^{-1} = \at_i R$ is small.

    $\bullet$ First we consider $g(t)=|t|^p$.
    Recall that $AR^{-1}$ is an $(\alpha,\beta,p)$-well-conditioned basis.
    We aim to use a variant of Bernoulli's inequality in the following form, which follows using the mean value theorem:
    $(|a| + |b|)^p -|a|^p \leq p|b|(|a| + |b|)^{p-1}$.
    We also use that $\norm{\at_i - a_iR^{-1}}_p \leq (\varepsilon'/3) \norm{a_i R^{-1}}_p$. For $\eps'=\eps/(\alpha\beta)^p$ this yields
    \begin{align*}
        \left| \sum_{i \in S}w_i g(\at_i R z) - \sum_{i \in S}w_i g(a_i z) \right|
        &=\left| \sum_{i \in S}w_i |\langle \at_i R, z \rangle|^p - |\langle a_i, z \rangle|^p \right|\\
        & \leq   \sum_{i \in S}w_i \left| | \langle \at_i R, z \rangle|^p - |\langle a_i, z \rangle|^p \right|\\
        & =   \sum_{i \in S}w_i \left| | \langle \at_i , R z \rangle|^p - |\langle a_i R^{-1}, R z \rangle|^p \right|\\
        &\leq \sum_{i \in S}w_i \left| | \langle a_i R^{-1} +\at_i - a_i R^{-1} , R z \rangle|^p - |\langle a_i R^{-1}, R z \rangle|^p \right|\\
        &\leq \sum_{i \in S}w_i \left| \left( | \langle a_iR , R^{-1} z \rangle |  + | \langle \at_i-a_iR^{-1} , R z  \rangle| \right)^p - |\langle a_i R^{-1}, Rz \rangle|^p \right|\\
        &\leq \sum_{i \in S}w_i p \left| \langle \at_i-a_iR^{-1} , R z  \rangle\right| \left( | \langle a_iR^{-1} , R z \rangle |  + | \langle \at_i-a_iR^{-1} , R z  \rangle| \right)^{p-1} \\ 
        &\leq \sum_{i \in S}w_i p  \norm{\at_i-a_iR^{-1}}_p\norm{R z}_q \left( \norm{a_iR}_p \norm{R^{-1} z}_q  + \norm{\at_i-a_iR^{-1}}_p\norm{R z}_q \right)^{p-1} \\
        &\leq \sum_{i \in S}w_i p (\eps'/3) \norm{a_iR^{-1}}_p\norm{R z}_q \left( (1+\eps'/3)\norm{a_iR}_p \norm{R^{-1} z}_q  \right)^{p-1} \\
        &\leq \sum_{i \in S}w_i p (2\eps'/3) \norm{a_iR^{-1}}_p^p\norm{R z}_q^p\\
        &\leq \sum_{i \in S}w_i \norm{a_iR^{-1}}_p^p (4\eps'/3) \beta^p \norm{AR^{-1} R z}_p^p\\
        &\leq (1+\eps'/3) \norm{AR^{-1}}_p^p (4\eps'/3) \beta^p \norm{AR^{-1} R z}_p^p\\
        &\leq (1+\eps'/3) (4\eps'/3) (\alpha\beta)^p \norm{Az}_p^p\\
        &\leq 2\eps \norm{Az}_p^p = \sum\limits_{i\in[n]} |a_i z|^p.        
    \end{align*}

    $\bullet$ Now let $ g(t)=\max\{0, t\}^p$. Using $\eps'=\eps/((\mu+1)(\alpha\beta)^p)$, we have very similarly to the case $|t|^p$ above (the $\ldots$ indicate that these steps are verbatim). Consider the cases where $\max\{\at_i R z,a_i z\}\leq 0$, or $\min\{\at_i R z,a_i z\}\geq 0$. In both cases we have that
    \begin{align}\label{eq:powerReLU1}
        \left| \sum_{i \in S}w_i g(\at_i R z) - \sum_{i \in S}w_i g(a_i z) \right|
        &\leq \sum_{i \in S}w_i \left| \max\{0, \at_i R z\}^p - \max\{0, a_i z\}^p \right|\nonumber\\
        & \leq   \sum_{i \in S}w_i \left||\langle \at_i R^{-1}, z \rangle|^p - |\langle a_i, z \rangle|^p \right|\nonumber\\
        & \leq \ldots \text{(verbatim to the previous calculation)}\nonumber\\
        &\leq 2 \varepsilon \norm{A z}_p^p /(\mu+1)\nonumber\\
        &=2\eps \sum\limits_{a_i z > 0} |a_i z|^p = 2\eps \sum\limits_{i\in[n]} \max\{0,a_i z\}^p \leq 2\eps \sum\limits_{i\in [n]} g(a_i z).
    \end{align}
    Consider the remaining case where $\max\{\at_i R z,a_i z\}\geq 0\geq \min\{\at_i R z,a_i z\}$. By Hölder's inequality we have that
    \begin{align}\label{eq:hoelder}
        \left| \langle \at_i, R z \rangle - \langle a_i , z \rangle \right| 
        &\leq\left| \langle \at_i, R z \rangle - \langle a_i R^{-1}, R z \rangle \right| \nonumber\\
        &=\left| \langle \at_i -  a_i R^{-1}, R z \rangle \right| \nonumber \\
        &\leq \norm{\at_i -  a_i R^{-1}}_p \norm{R z}_q \nonumber \\
        &\leq (\eps'/3)\norm{a_i R^{-1}}_p \beta \norm{A z}_p.
    \end{align}
    
    Consequently, we get the same overall bound in this case
    \begin{align}\label{eq:powerReLU2}
        \left| \sum_{i \in S}w_i g(\at_i R z) - \sum_{i \in S}w_i g(a_i z) \right| & \leq \sum_{i \in S}w_i \left| \max\{0, \at_i R z\}^p - \max\{0, a_i z\}^p \right| \nonumber\\
        &\leq \sum_{i \in S}w_i \max\{\at_i R z , a_i z\}^p \nonumber\\
        &\leq \sum_{i \in S}w_i \left| \langle \at_i, R z \rangle - \langle a_i , z \rangle \right|^p \nonumber\\
        &\leq \sum_{i \in S}w_i (\eps'/3)\norm{a_i R^{-1}}_p^p \beta \norm{A z}_p^p\nonumber\\
        &\leq (1+\eps'/3)(\eps'/3) (\alpha\beta)^p \norm{Az}_p^p\nonumber\\
        &\leq 2 \varepsilon \norm{A z}_p^p /(\mu+1)\nonumber\\
        &=2\eps \sum\limits_{a_i z > 0} |a_i z|^p = 2\eps \sum\limits_{i\in[n]} \max\{0,a_iz\}^p\leq 2\eps \sum\limits_{i\in [n]} g(a_i z).
    \end{align}

    $\bullet$ Now let $ g(t)=\ln(1+\exp(t))=\ln(\exp(t)(1+\exp(-t)))=t+g(-t)$. Note that $g(t)\geq \max\{0,t\}$. For the derivative, we have that $0 \leq g'(t) = \frac{\exp(t)}{1+\exp(t)} \leq 1$ for all $t \in \mathbb{R}$.
    Let $p=1$, and $\eps'=\eps/((\mu+1)(\alpha\beta))$. 
    Using \cref{eq:hoelder} again with $p=1$, we get the following overall bound 
    \begin{align}\label{eq:logreg}
        \left| \sum_{i \in S}w_i g(\at_i R z) - \sum_{i \in S}w_i g(a_i z) \right|
        &\leq  \sum_{i \in S}w_i \left|\int_{\at_i R z}^{a_i z}g'(t) \,dt \right| \leq \sum_{i \in S}w_i \left|\int_{\at_i R z}^{a_i z} 1 \,dt \right|\nonumber\\
        &= \sum_{i \in S}w_i \left| \langle \at_i, R z \rangle - \langle a_i , z \rangle \right| \nonumber\\
        &\leq \sum_{i \in S}w_i (\eps'/3)\norm{a_i R^{-1}}_1 \beta \norm{A z}_1\nonumber\\
        &\leq (1+\eps'/3)(\eps'/3) (\alpha\beta) \norm{Az}_1\nonumber\\
        &\leq 2 \varepsilon \norm{A z}_1 /(\mu+1)\nonumber\\
        &=2\eps \sum\limits_{a_i z > 0} |a_i z| = 2\eps \sum\limits_{i\in[n]} \max\{0,a_iz\}\leq 2\eps \sum\limits_{i\in [n]} g(a_i z).
    \end{align}

    $\bullet$ Finally, consider $g(t)=-\ln(\Phi_p(-t))$. For this loss function, we run \cref{alg:turnstilesampling} twice in parallel, once with the given parameter $p$ and once with $p=1$. We combine the samples using \cref{lem:pqsampling}, and add a uniform component using \cref{cor: combinedsampling}.
    
    By \citep[Lemma 2.8]{MunteanuOP22}, we have that $f(Az)=\sum\nolimits_{i\in[n]} g(a_iz) \geq \frac{n}{\mu}$. Further by \citep[Lemma 2.6]{MunteanuOP22} it holds that $g(t)$ is monotonically non-decreasing and convex, and further for any $t\geq 1$ it holds that $t^{p-1} \leq g'(t)\leq t^{p-1}+ \frac{p-1}{t}$. The lower bounds of the cited lemma also imply that $g(t)\geq \max\{0,t\}^p/p$. 
    Note that for $t\leq 1$ convexity yields $0 \leq g'(t) \leq g'(1) \leq 2$, and for $t\geq 1$, we get $0< t^{p-1} \leq g'(t)\leq t^{p-1} + 2$.

    Then, we get for $\eps'=\eps/(6p\mu(\mu+1)(\alpha\beta)^p)$ that
    \begin{align}\label{eq:combinedintegral}
        \left| \sum_{i \in S}w_i g(\at_i R z) - \sum_{i \in S}w_i g(a_i z) \right|
        &\leq  \sum_{i \in S}w_i \left|\int_{\at_i R z}^{a_i z}g'(t) \,dt \right| \nonumber\\
        & \leq \sum_{i \in S}w_i \left( \left|\int_{\at_i R z}^{a_i z} 2 \,dt \right| + \left|\int_{\max\{1,\min\{\at_i R z, a_i z\}\}}^{\max\{1,\max\{\at_i R z, a_i z\}\}} t^{p-1} \,dt \right| \right)
    \end{align}
    Note, that the first integral is the same up to a factor of $2$ as the one we used to handle logistic regression, and $\eps'$ is smaller by a factor of $6p\mu$ now. We thus get verbatim to \cref{eq:logreg} that 
    \[
         \sum_{i \in S}w_i \left|\int_{\at_i R z}^{a_i z} 2 \,dt \right| \leq 4\eps/(6p\mu) \sum\limits_{i\in[n]} \max\{0,a_iz\}.
    \]
    
    Next, note that the second integral satisfies $|\int_b^a t^{p-1}~dt| \leq |a^p-b^p|$, and we see that it can be handled verbatim to the case distinction for the $\ell_p$ ReLU function, i.e., as in \cref{eq:powerReLU1,eq:powerReLU2}. Recall that $\eps'$ is smaller by a factor of $6p\mu$. Thus
    \begin{align*}
        \sum_{i \in S}w_i \left|\int_{\max\{1,\min\{\at_i R z, a_i z\}\}}^{\max\{1,\max\{\at_i R z, a_i z\}\}} t^{p-1} \,dt \right| &\leq \sum_{i \in S}w_i \left| \max\{1,\max\{\at_i R z, a_i z\}\}^p - \max\{1,\min\{\at_i R z, a_i z\}\}^p \right|\\
        &
        \leq 2\eps/(6p\mu) \sum\limits_{i\in[n]} {\max\{0,a_iz\}^p}
    \end{align*}
    To conclude, we note that for all $t\in \mathbb{R}\setminus (0,1)$ we have that $\max\{0,t\}\leq \max\{0,t\}^p$, and for $t\in (0,1)$ it holds that $\max\{0,t\}\leq 1$. Thus $\max\{0,t\}\leq \max\{0,t\}^p + 1$. Consequently, we can resume our calculation of \cref{eq:combinedintegral}
    \begin{align*}
        \eqref{eq:combinedintegral} &\leq 4\eps/(6p\mu) \sum\limits_{i\in [n]} \max\{0,a_i z\} + 2\eps/(6p\mu) \sum\limits_{i\in [n]} \max\{0,a_i z\}^p \\
        & \leq \eps/\mu \sum\limits_{i\in [n]} \frac{\max\{0,a_i z\}^p}{p} + \eps n/\mu\\
        & \leq \eps \sum\limits_{i\in [n]} g(a_i z) + \eps \sum\limits_{i\in [n]} g(a_i z) = 2\eps \sum\limits_{i\in [n]} g(a_i z).
    \end{align*}
\end{proof}

\section{Additional details on experiments and data}\label{sec:app:experiments}
\subsection{Computing environment}
All experiments were run on a workstation with AMD Ryzen Threadripper PRO 5975WX, 32 cores at 3.6GHz, 512GB DDR4-3200.

\subsection{Details on datasets}\label{sec:app:experiments_data}
The datasets were automatically downloaded and preprocessed by the Python code. We give a short description of the data for completeness of presentation. These descriptions were copied from \citet{MunteanuOP22,MunteanuOW23}:
the {Covertype}
  ~data consists of $581,012$ cartographic observations of different forests with $54$ features. The task is to predict the type of trees at each location ($49\%$ positive).
  The {Webspam}
  ~data consists of $350,000$ unigrams with $127$ features from web pages, which have to be classified as spam or normal pages ($61\%$ positive).
 The {Kddcup}
 ~data consists of $494,021$ network connections with $41$ features and the task is to detect network intrusions ($20\%$ positive).

\subsection{Experimental focus}
We demonstrate the performance of our novel turnstile $\ell_p$ sampler. 
Recall, that our algorithm is a hybrid between an oblivious sketch and a leverage score sampling algorithm. It thus makes most sense to compare to pure oblivious sketching as well as to pure off-line leverage score sampling. We refer to \citep{MaiMR21,MunteanuOP22} for comparisons between $\ell_p$ leverage scores and Lewis weights, which are not the focus of this paper.

We implement our new algorithm into the experimental framework of the near-linear oblivious sketch of \citet{MunteanuOW23}, and add the code of \citet{MunteanuOP22} for $\ell_1$ leverage score sampling. Our new and combined code is available at \url{https://github.com/Tim907/turnstile-sampling}.

Our a priori hypothesis from the theoretical knowledge on the three regimes is that the performance should be somewhere in the middle between the performances of the competitors. Ideally, we would want our algorithm to perform as closely as possible to off-line leverage score sampling.

\subsection{Details on space requirements and running times}
The required space is $r\cdot s \cdot d$ to store the $r \cdot s$ many $d$-dimensional vectors, where the values of $r$ and $s$ are as stated in all theorems. In bit complexity, we need to add another $\log(n)$  factor under the standard assumption that all values considered in the data stream are polynomially bounded in $n$ and $d$, and $n>d$. Oblivious sketching uses exactly $k$ rows of $d$-dimensional vectors. Leverage score sampling uses $\Theta(n)$ space, since we compute all $n$ leverage scores in main memory, before sampling. In our implementation, the values of $r$ and $s$ were initially evaluated and fixed to $r=\left\lceil k\cdot \max\{30, \log(n) \}\right\rceil$, and $s= 2\cdot \left\lceil \max\{5,\log(n)/2\}\right\rceil$

For turnstile sketching, the running time is $O(nnz(A)\log n)$ where $nnz(A)$ denotes the number of non-zero entries in the representation of $A$. Oblivious sketching requires $O(nnz(A)\log d)$. Offline leverage scores require $O(nd^2)$. However, our turnstile sampler requires an additional \emph{extraction} which dominates the running time requiring $O(nds +ks^2+kd^2)$. The main goal is to get turnstile updates, $(1+\epsilon)$ error, and poly$(d, \epsilon, \log n)$ space, which the comparison methods cannot provide. However, this comes at the cost of increased running time. Clearly, the oblivious sketch cannot be outperformed but it has limitations in terms of accuracy. In our experimants, the sketching and extraction time of the turnstile sampler is larger than the other methods by a factor of 8-15. However the total running time including optimization is usually increased by only a factor 3-6.

\subsection{Experiments for logistic regression}
\begin{figure*}[ht!]
\begin{center}
\textbf{\textsc{Logistic Regression}}
\vskip 0.2in
\begin{tabular}{cccc}
{~}&
{\small\hspace{.5cm}\textsc{CoverType}}&{\small\hspace{.5cm}\textsc{WebSpam}}&
{\small\hspace{.5cm}\textsc{KddCup}}\\
\rotatebox{90}{\hspace{6pt}\textsc{Approx. Ratio}}&
\includegraphics[width=0.2951\linewidth]{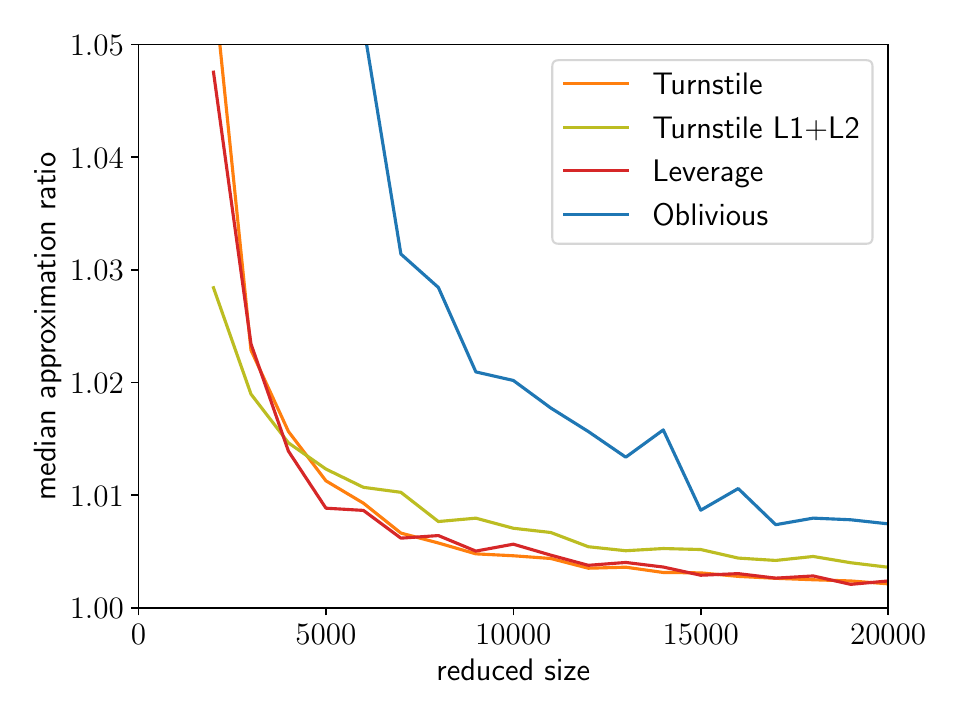}&
\includegraphics[width=0.2951\linewidth]{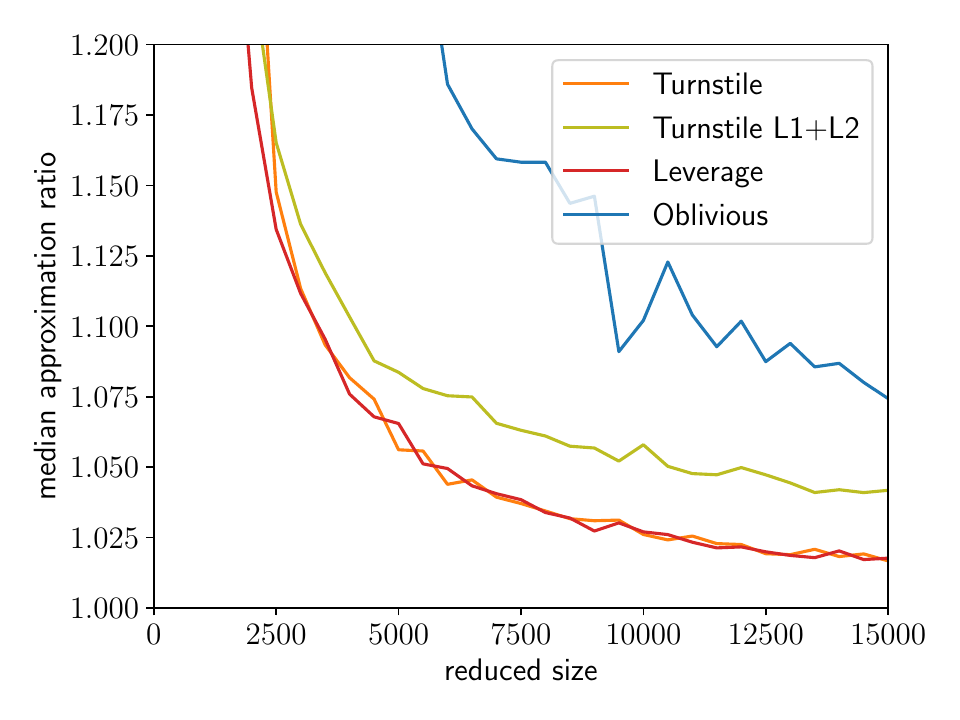}&
\includegraphics[width=0.2951\linewidth]{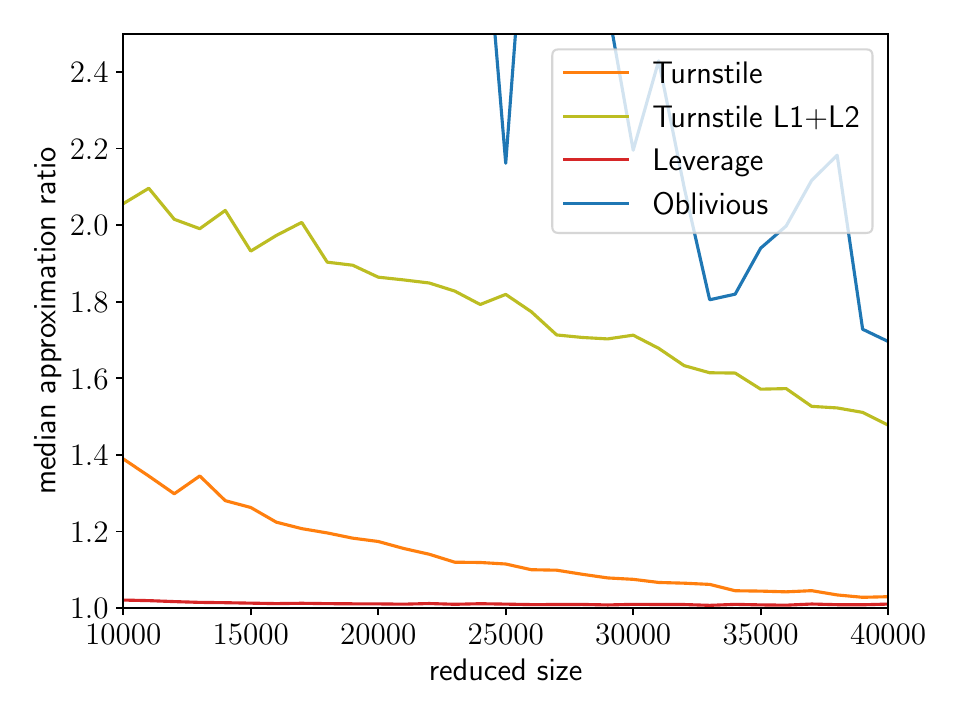}\\
\rotatebox{90}{\hspace{20pt}\textsc{Sampling Time}}&
\includegraphics[width=0.2951\linewidth]{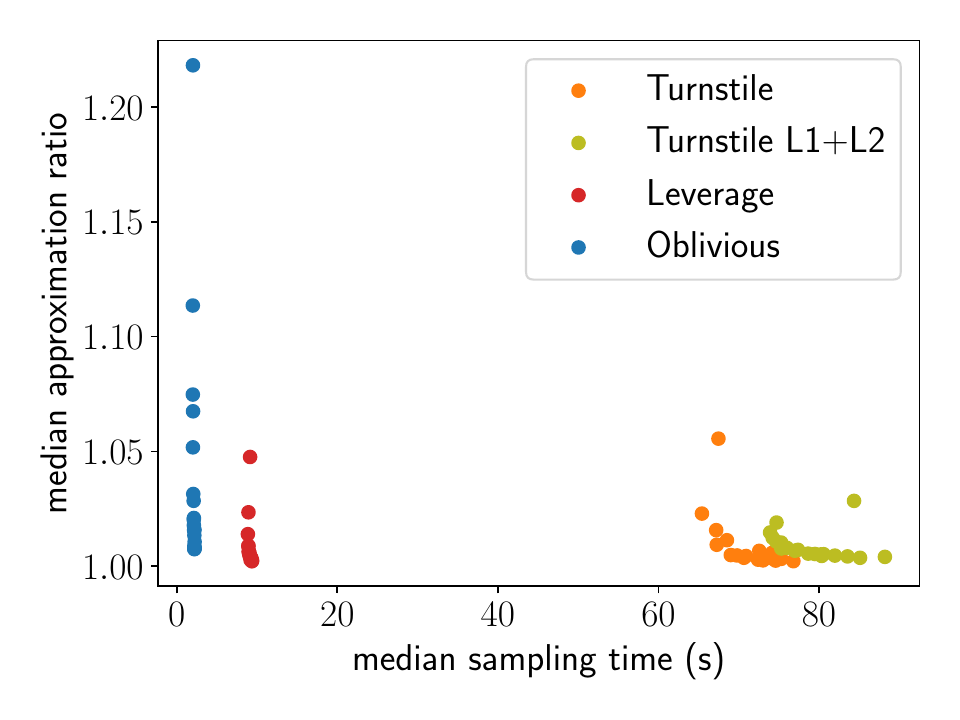}&
\includegraphics[width=0.2951\linewidth]{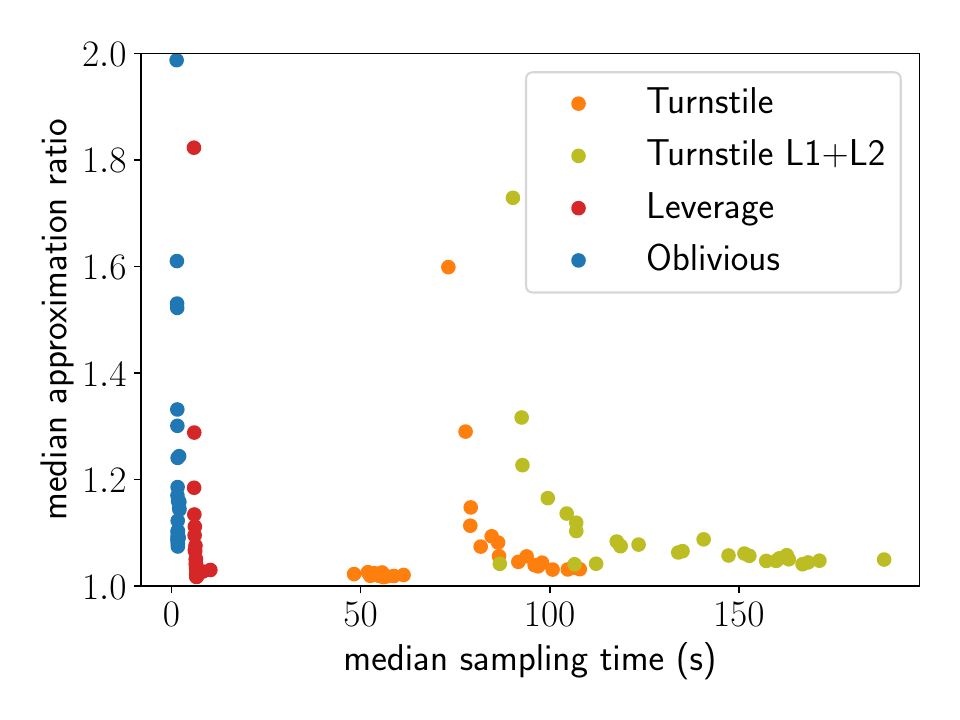}&
\includegraphics[width=0.2951\linewidth]{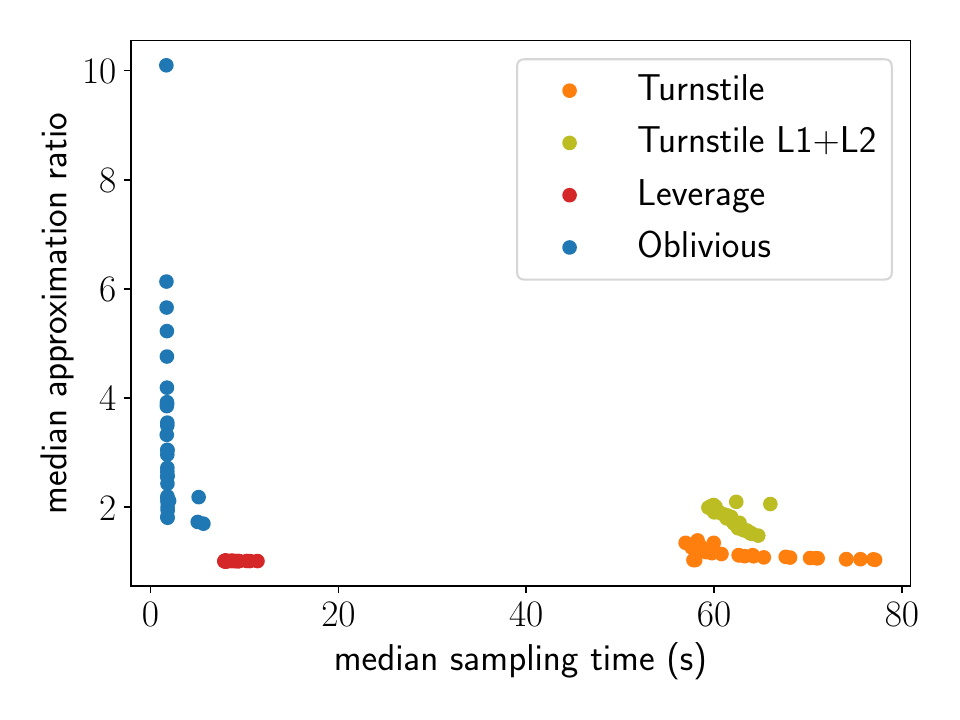}\\
\rotatebox{90}{\hspace{20pt}\textsc{Total Time}}&
\includegraphics[width=0.2951\linewidth]{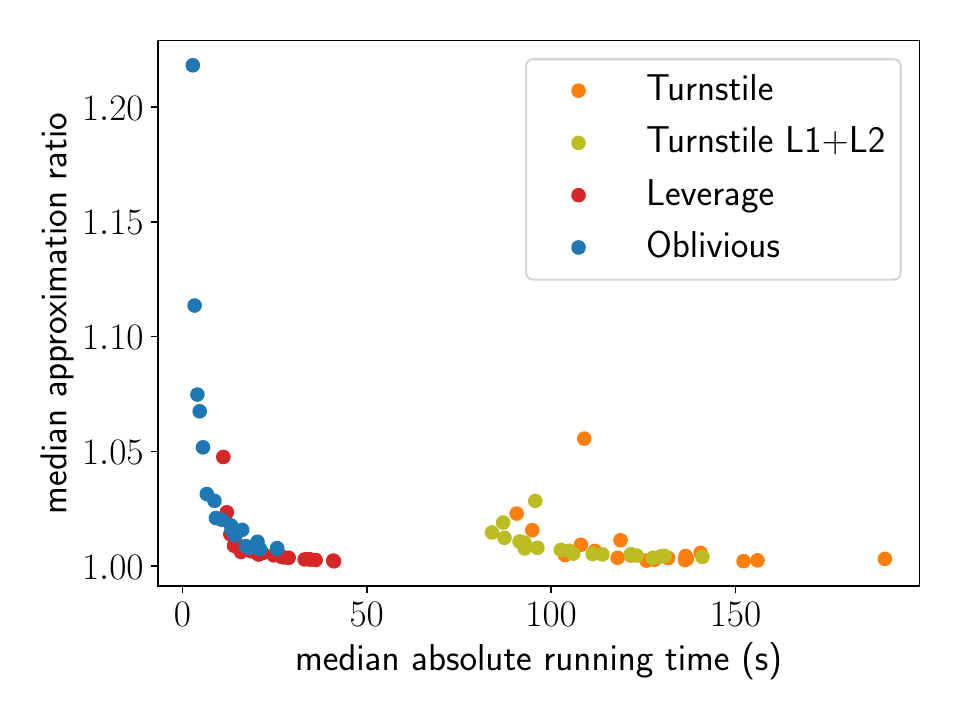}&
\includegraphics[width=0.2951\linewidth]{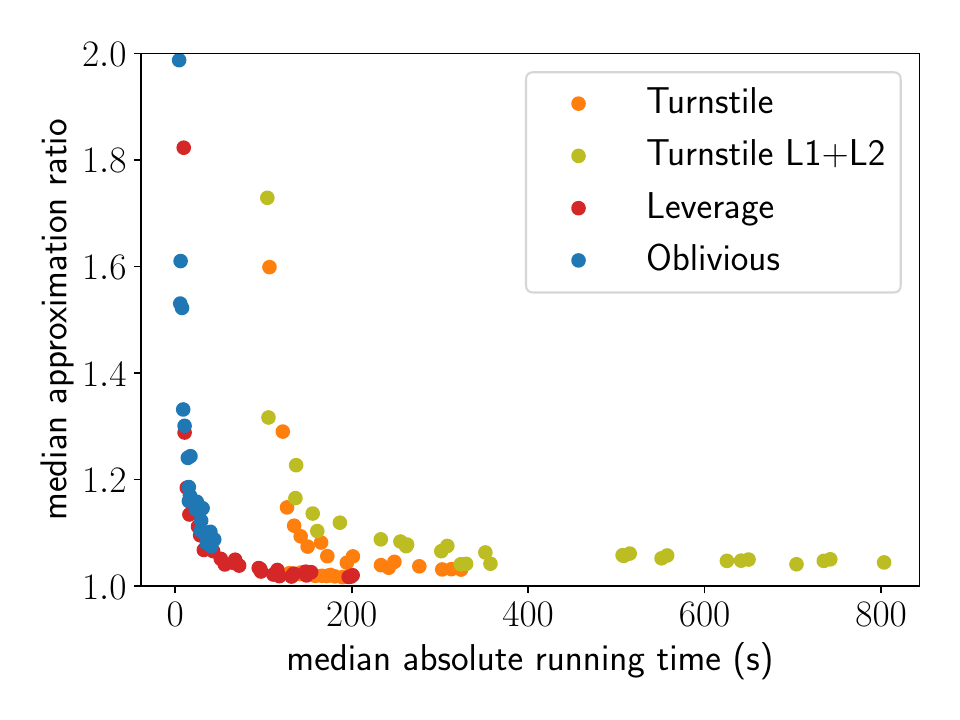}&
\includegraphics[width=0.2951\linewidth]{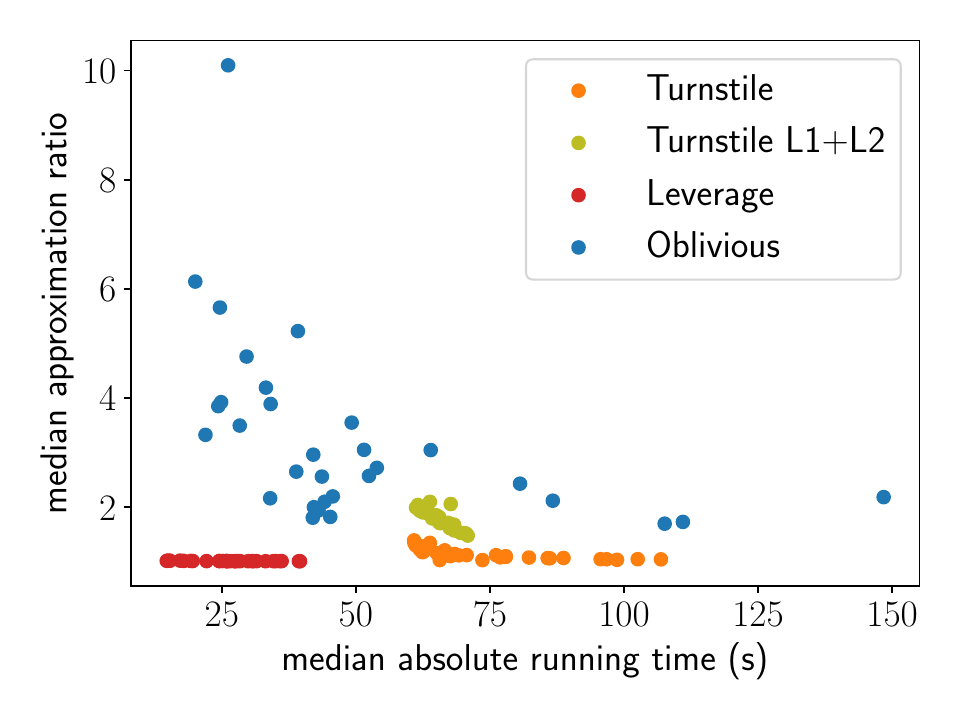}\\
\end{tabular}
\caption{Comparison of the approximation ratios and running times for logistic regression on various real-world datasets. The new turnstile data stream sampler for $p=1$ (orange) and a mixture $p=1,q=2$ (lime) is compared to plain leverage score sampling (red), and to plain oblivious sketching (blue). The plots indicate the median of approximation ratios taken over 21 repetitions for each reduced size. Best viewed in colors, lower is better.
}
\label{fig:experiments:logreg}
\end{center}
\end{figure*}

\clearpage
\subsection{Experiments for \texorpdfstring{$\ell_1$}{l1} regression}
\begin{figure*}[ht!]
\begin{center}
\textbf{\textsc{Linear $\ell_{1}$ Regression}}
\vskip 0.2in
\begin{tabular}{cccc}
{~}&
{\small\hspace{.5cm}\textsc{CoverType}}&{\small\hspace{.5cm}\textsc{WebSpam}}&
{\small\hspace{.5cm}\textsc{KddCup}}\\
\rotatebox{90}{\hspace{6pt}\textsc{Approx. Ratio}}&
\includegraphics[width=0.2951\linewidth]{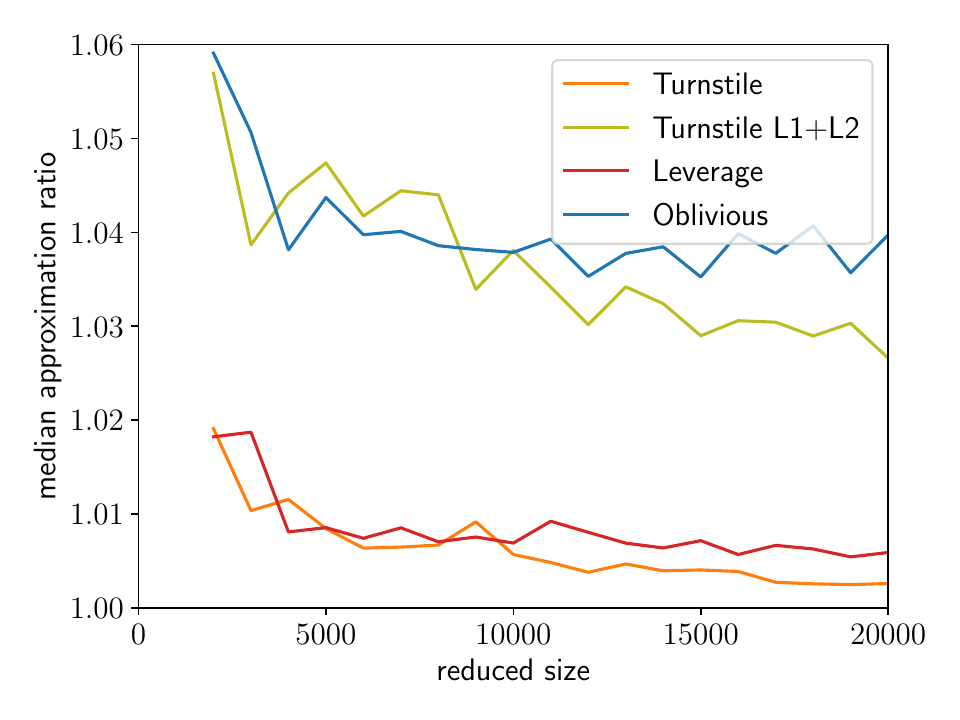}&
\includegraphics[width=0.2951\linewidth]{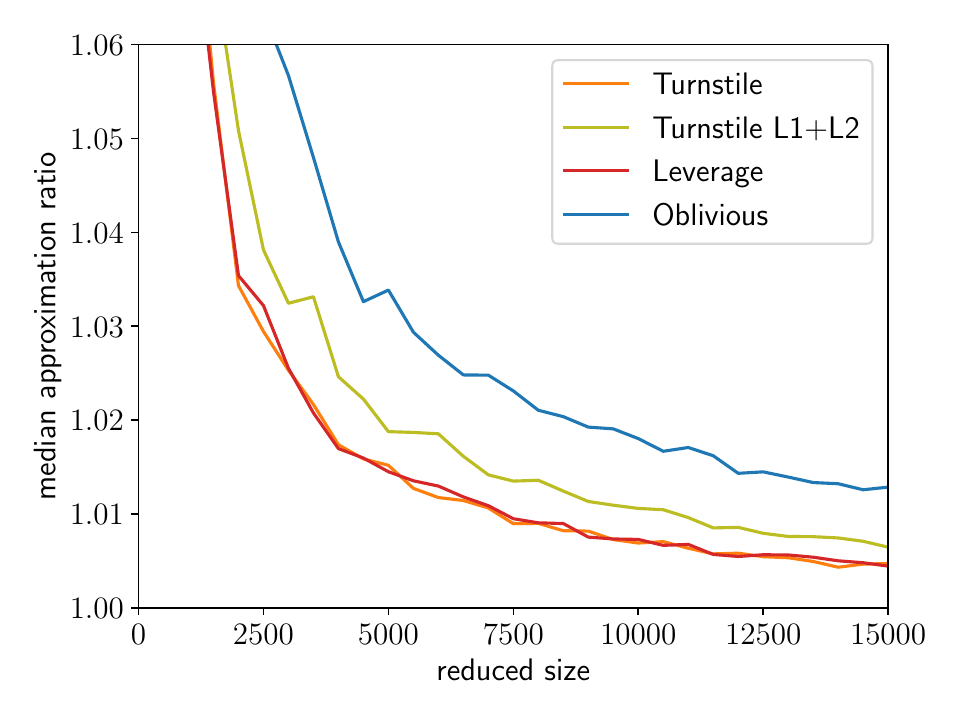}&
\includegraphics[width=0.2951\linewidth]{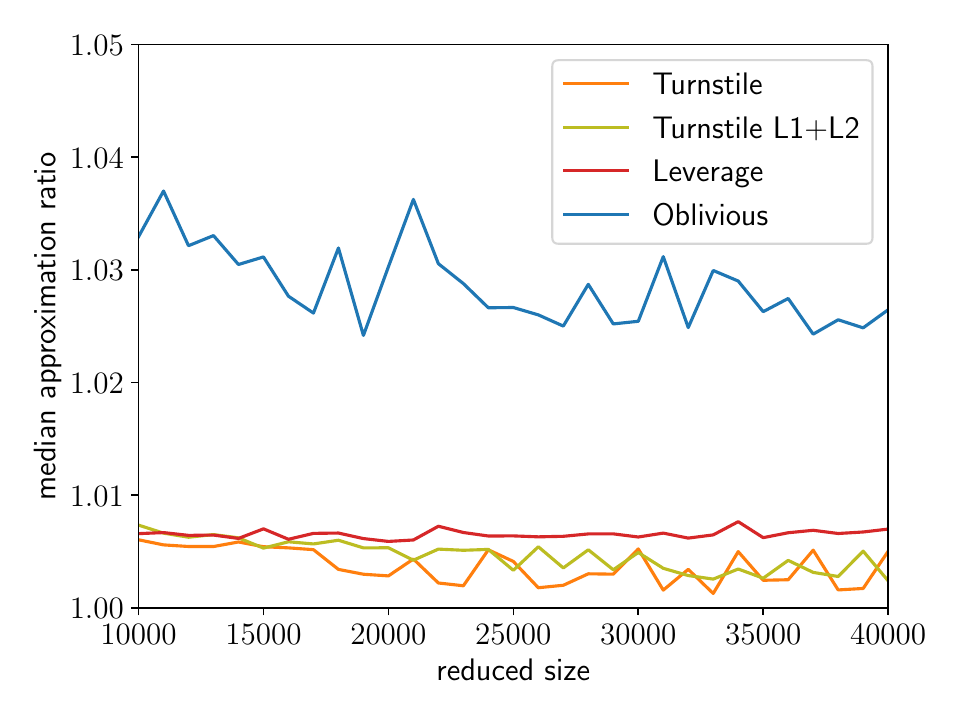}\\
\rotatebox{90}{\hspace{20pt}\textsc{Sampling Time}}&
\includegraphics[width=0.2951\linewidth]{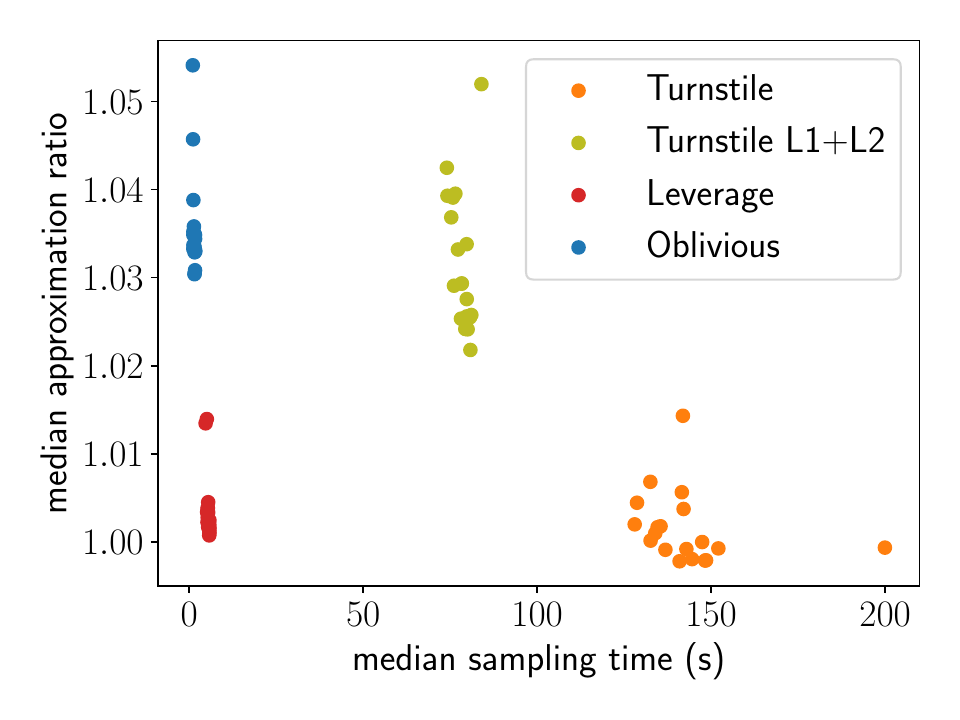}&
\includegraphics[width=0.2951\linewidth]{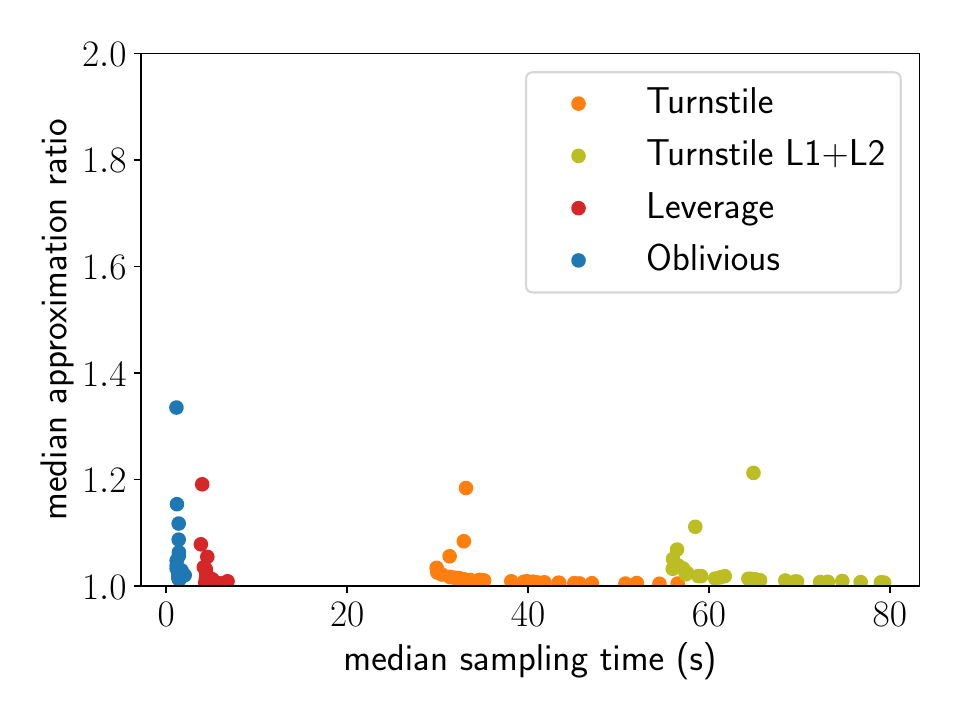}&
\includegraphics[width=0.2951\linewidth]{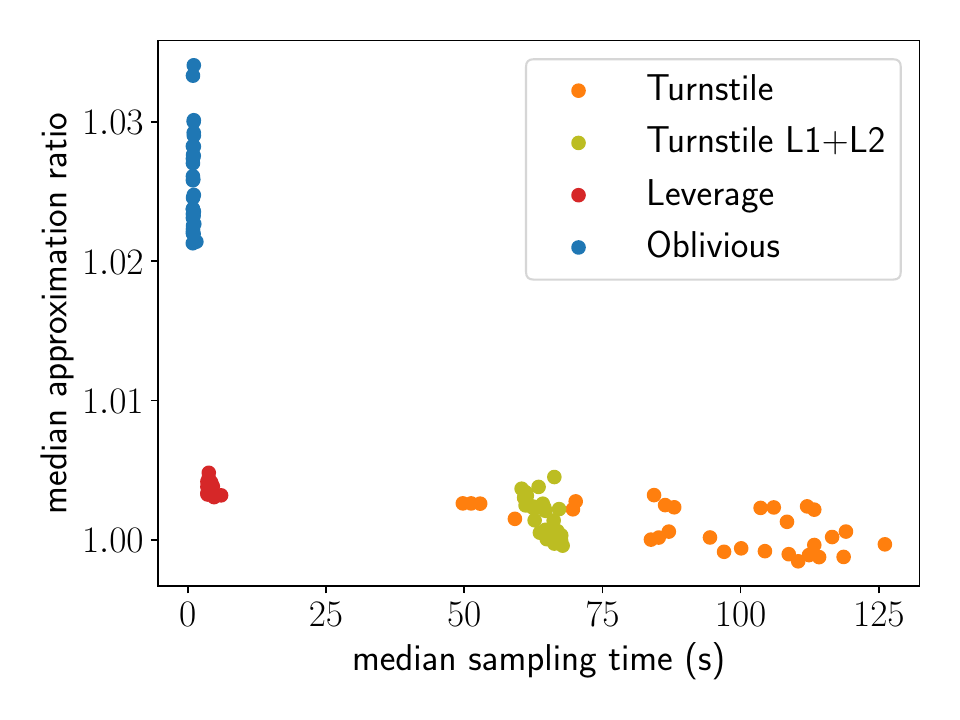}\\
\rotatebox{90}{\hspace{20pt}\textsc{Total Time}}&
\includegraphics[width=0.2951\linewidth]{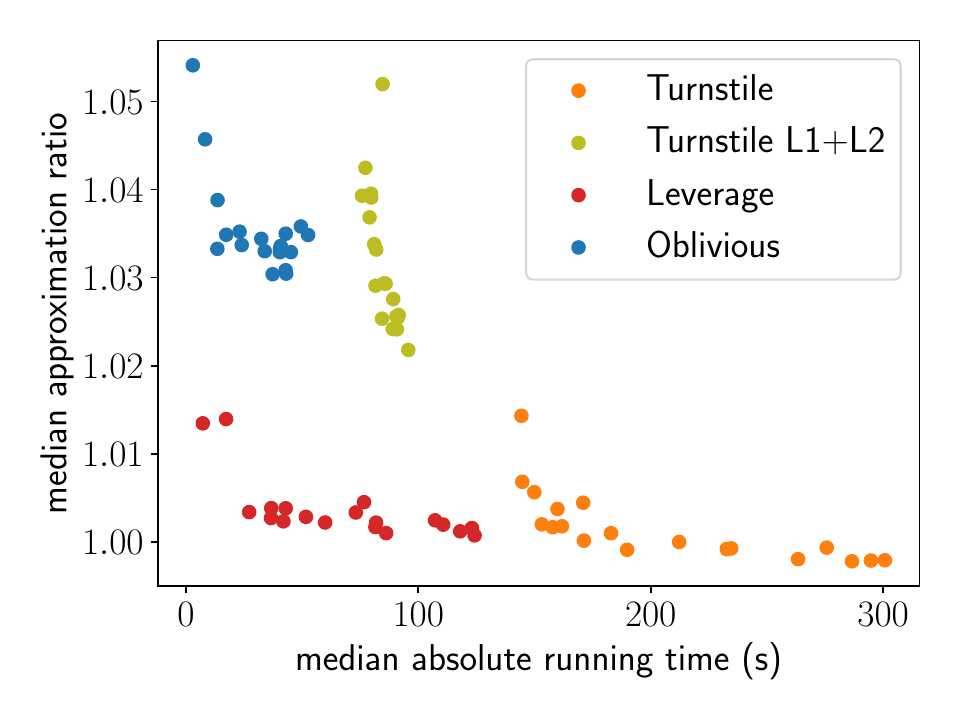}&
\includegraphics[width=0.2951\linewidth]{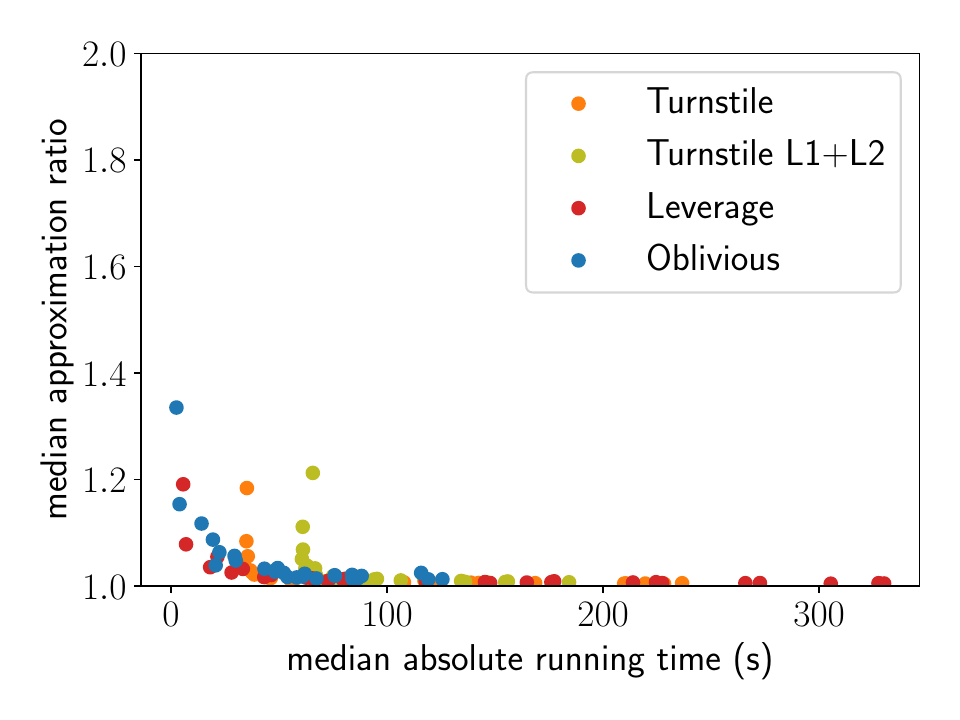}&
\includegraphics[width=0.2951\linewidth]{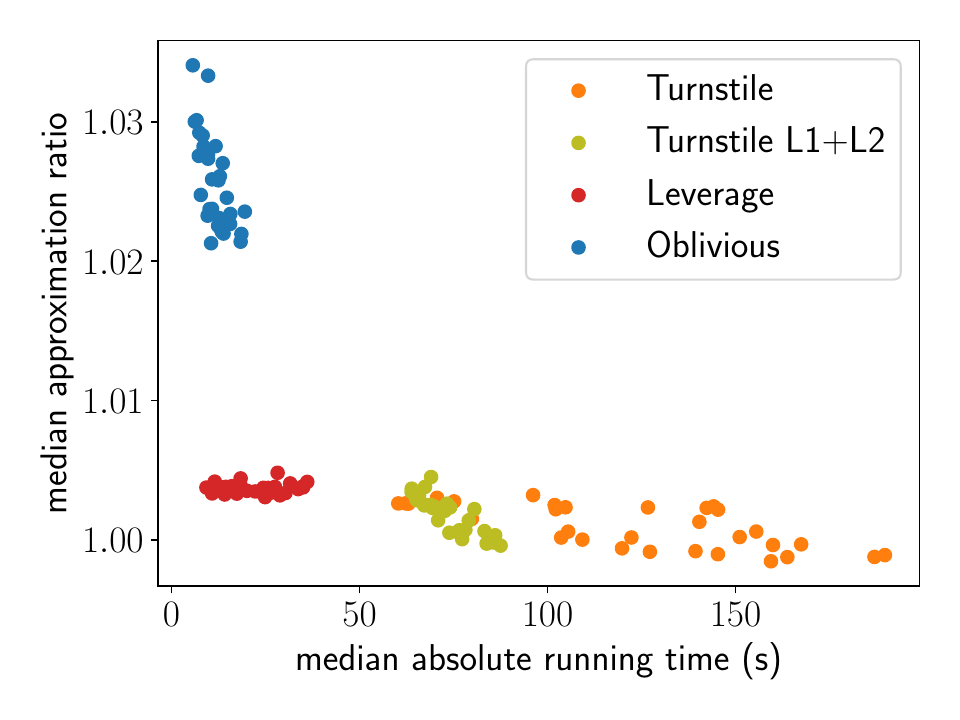}\\
\end{tabular}
\caption{Comparison of the approximation ratios and running times for $\ell_1$ regression on various real-world datasets. The new turnstile data stream sampler for $p=1$ (orange) and a mixture $p=1,q=2$ (lime) is compared to plain leverage score sampling (red), and to plain oblivious sketching (blue). The plots indicate the median of approximation ratios taken over 21 repetitions for each reduced size. Best viewed in colors, lower is better.
}
\label{fig:experiments:l1}
\end{center}
\end{figure*}

\clearpage
\subsection{Experiments for \texorpdfstring{$\ell_{1.5}$}{l1.5} regression}
\begin{figure*}[ht!]
\begin{center}
\textbf{\textsc{Linear $\ell_{1.5}$ Regression}}
\vskip 0.2in
\begin{tabular}{cccc}
{~}&
{\small\hspace{.5cm}\textsc{CoverType}}&{\small\hspace{.5cm}\textsc{WebSpam}}&
{\small\hspace{.5cm}\textsc{KddCup}}\\
\rotatebox{90}{\hspace{6pt}\textsc{Approx. Ratio}}&
\includegraphics[width=0.2951\linewidth]{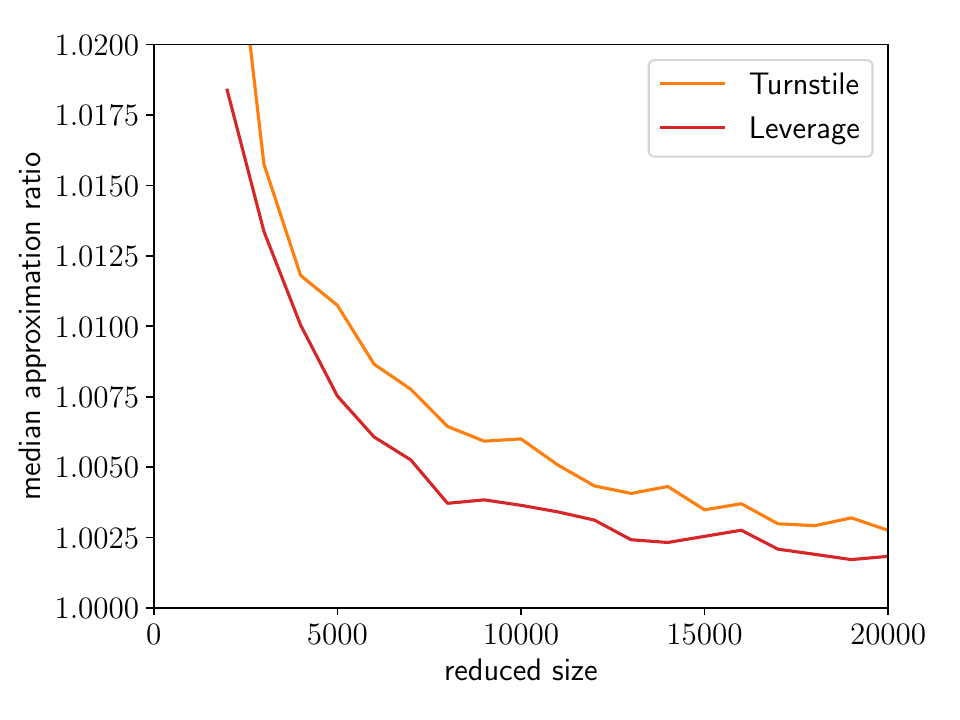}&
\includegraphics[width=0.2951\linewidth]{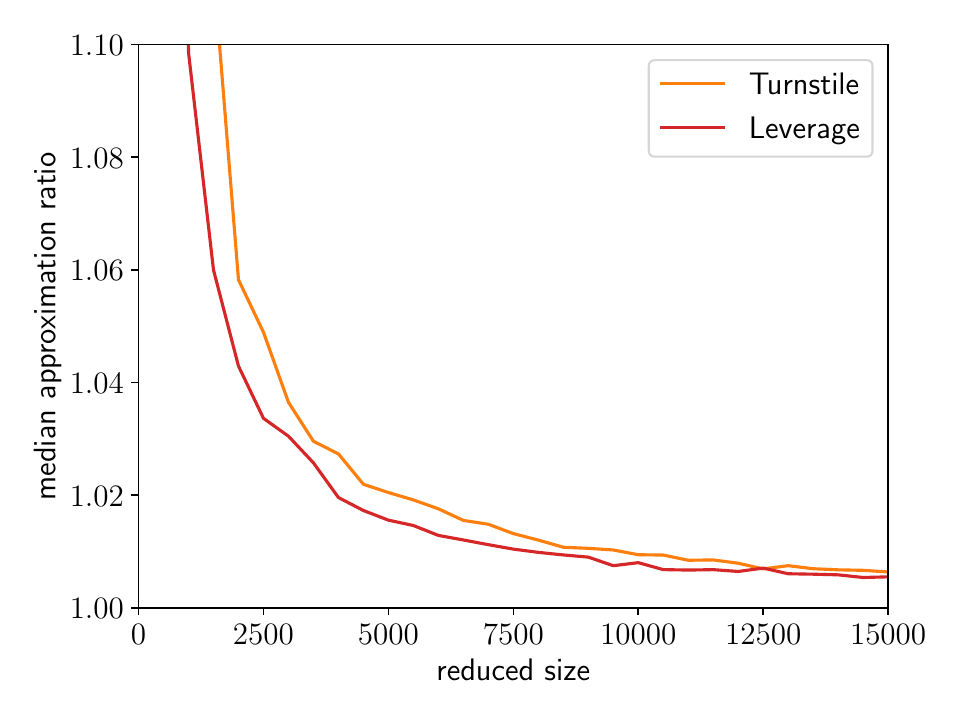}&
\includegraphics[width=0.2951\linewidth]{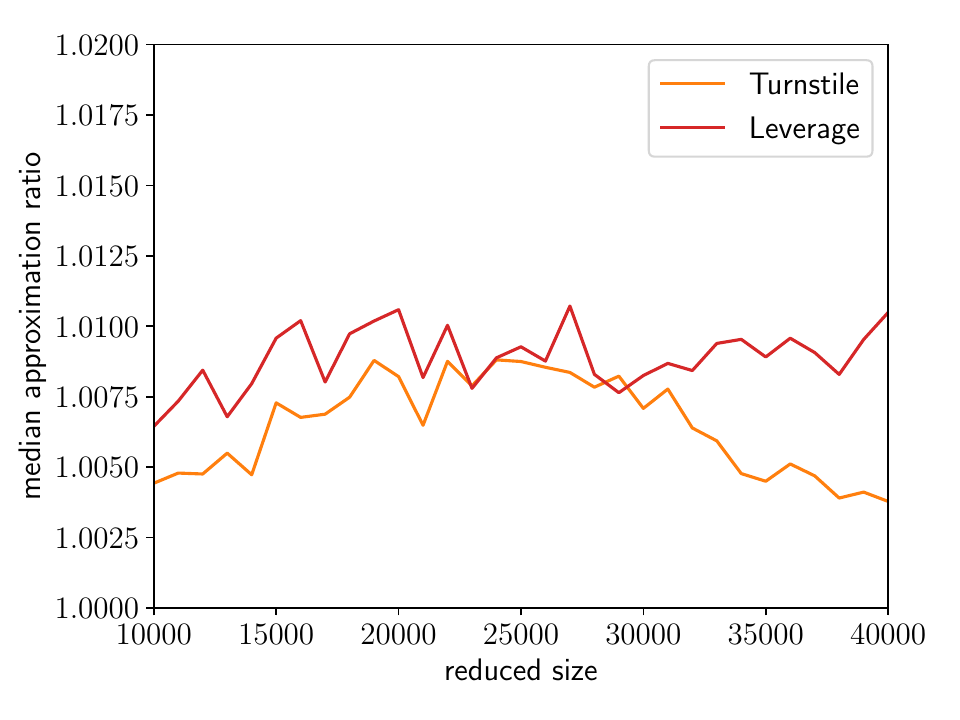}\\
\rotatebox{90}{\hspace{20pt}\textsc{Sampling Time}}&
\includegraphics[width=0.2951\linewidth]{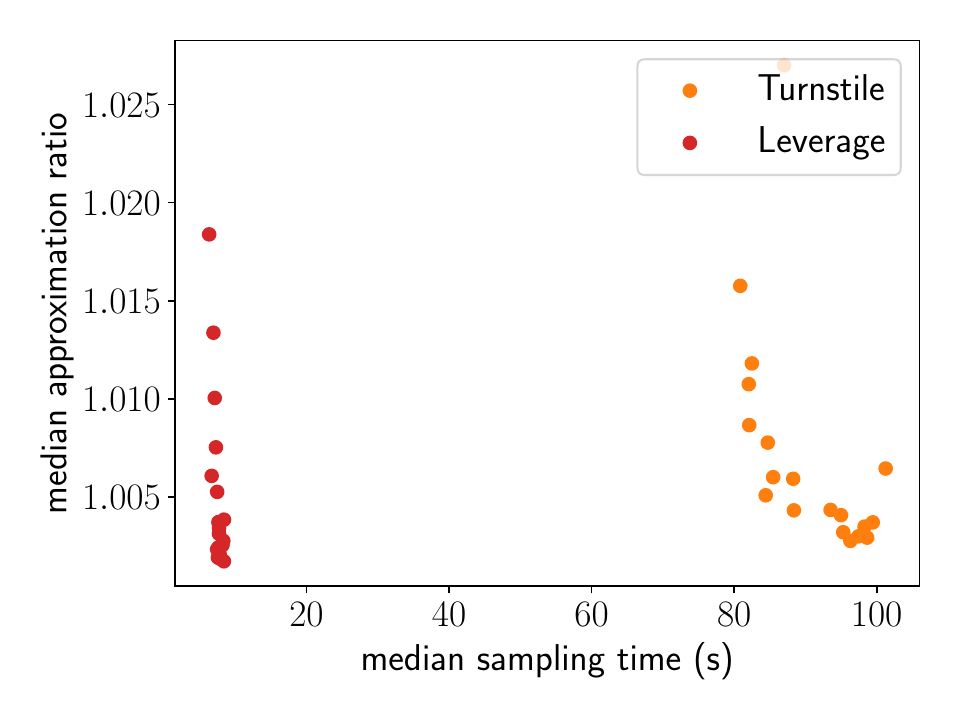}&
\includegraphics[width=0.2951\linewidth]{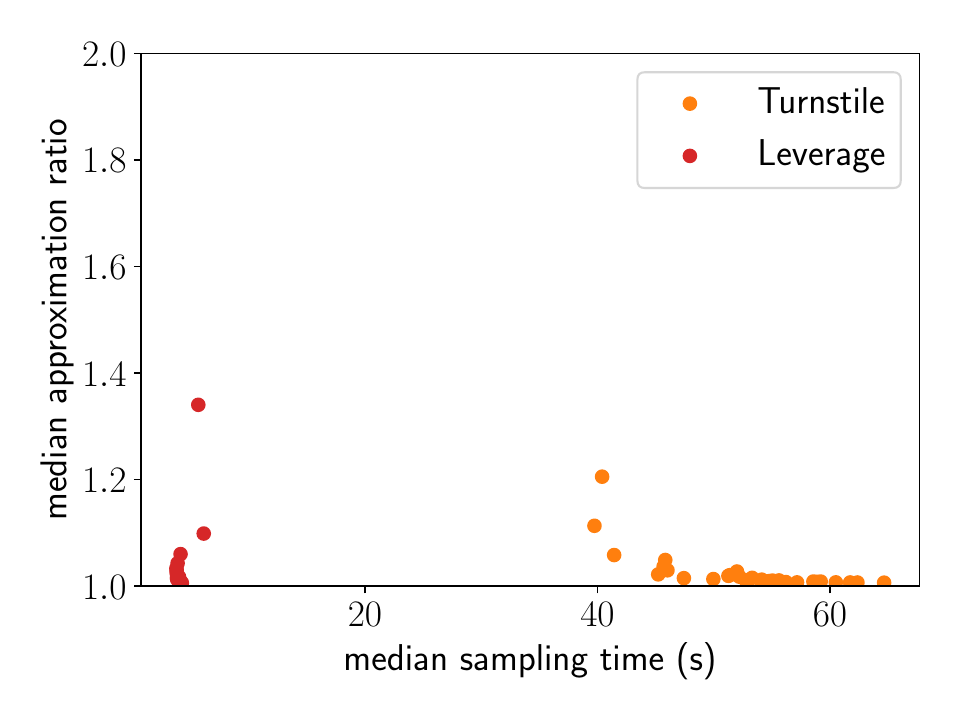}&
\includegraphics[width=0.2951\linewidth]{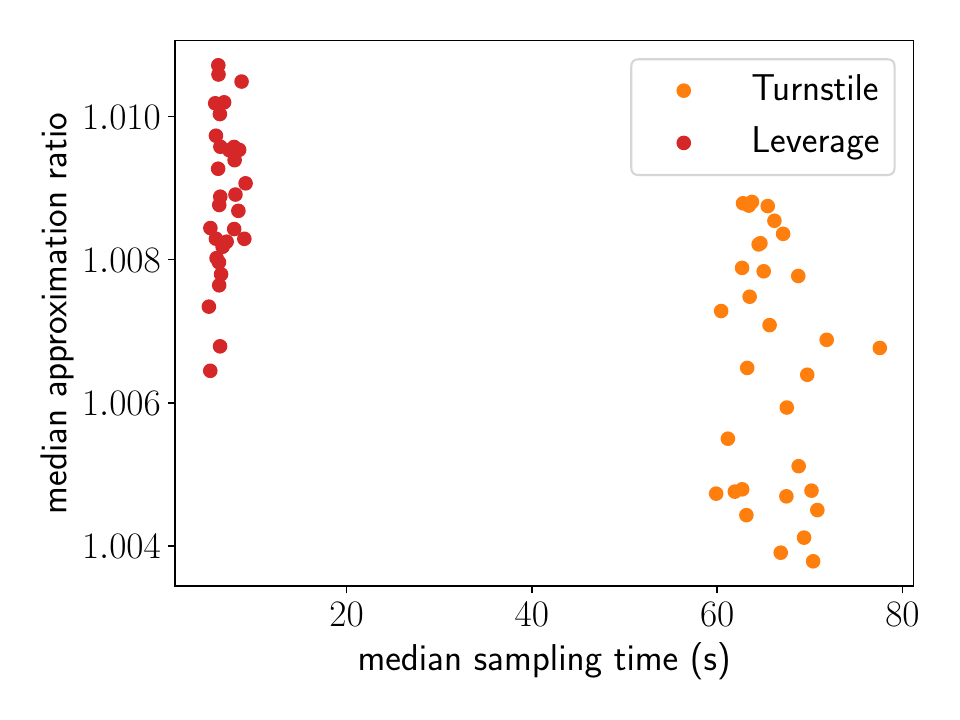}\\
\rotatebox{90}{\hspace{20pt}\textsc{Total Time}}&
\includegraphics[width=0.2951\linewidth]{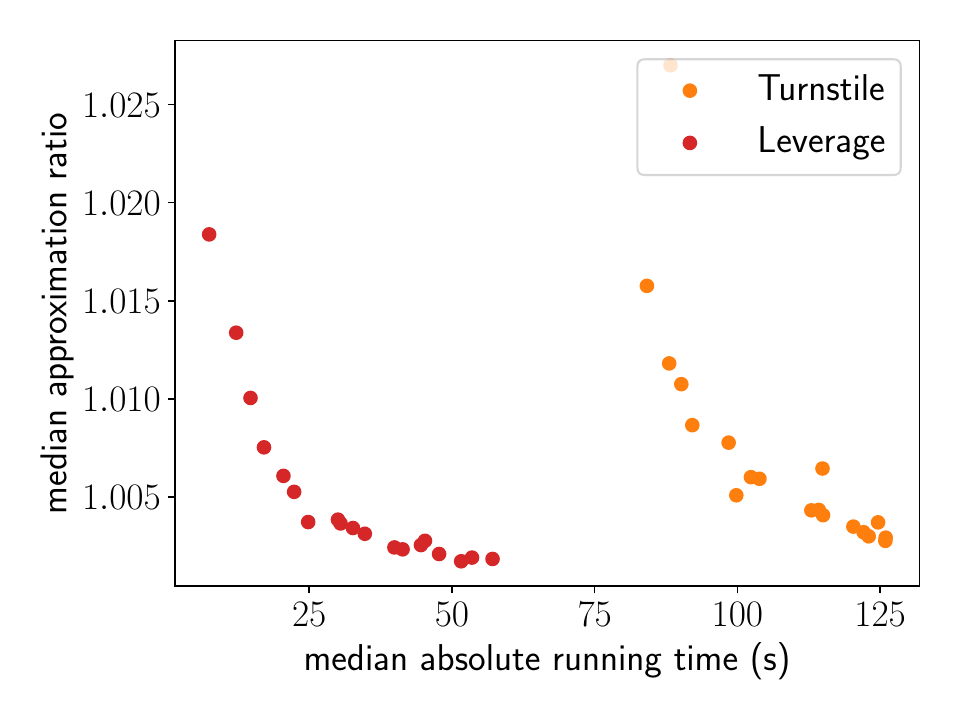}&
\includegraphics[width=0.2951\linewidth]{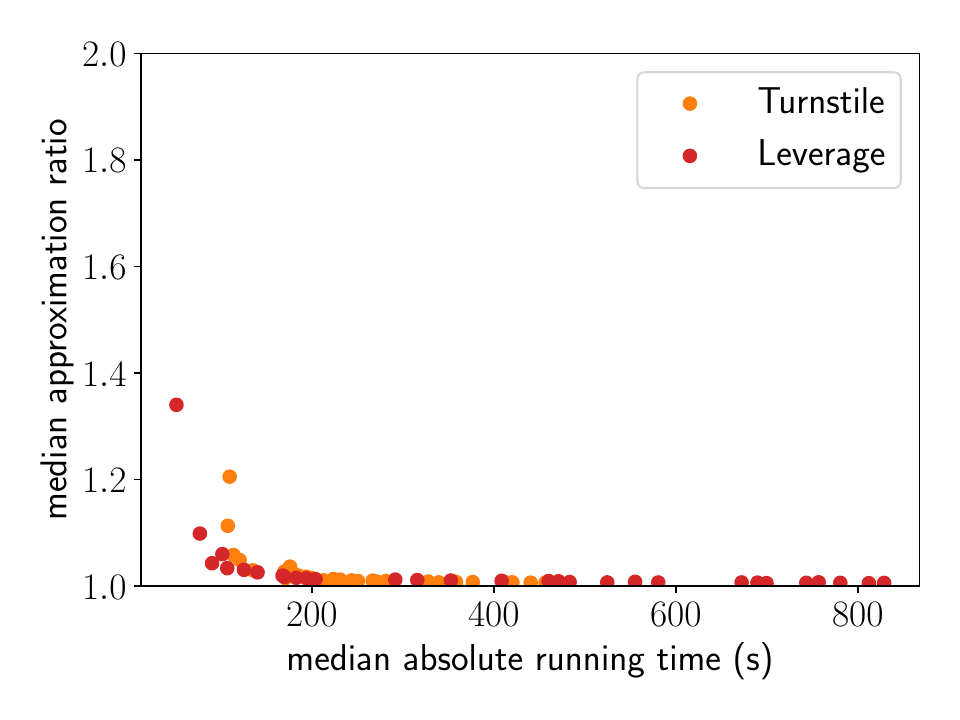}&
\includegraphics[width=0.2951\linewidth]{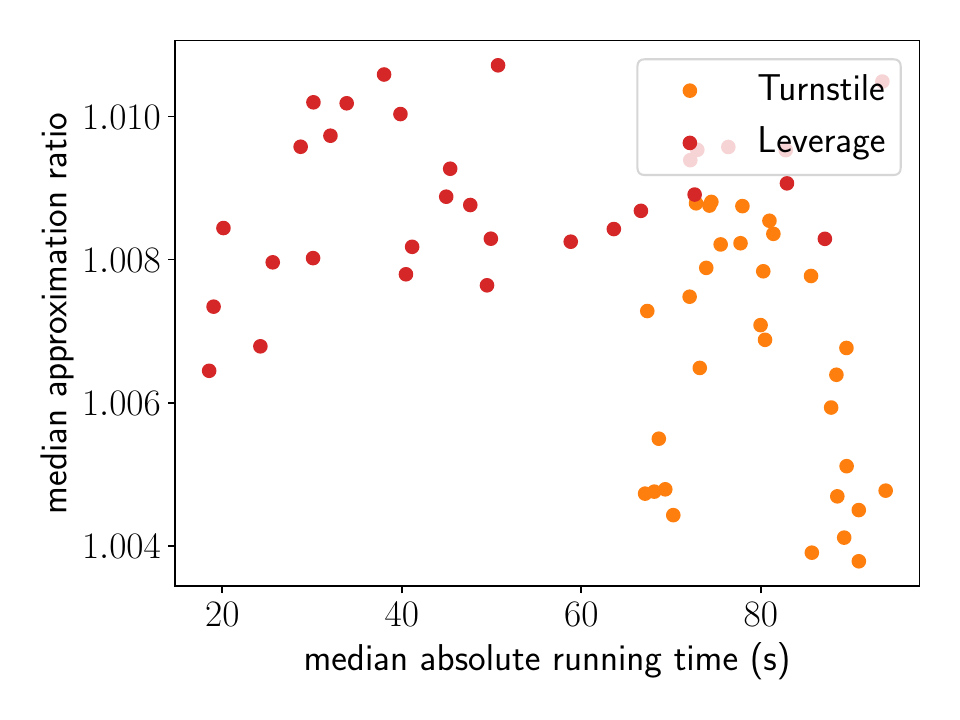}\\
\end{tabular}
\caption{Comparison of the approximation ratios and running times for $\ell_{1.5}$ regression on various real-world datasets. The new turnstile data stream sampler for $p=1.5$ (orange) is compared to plain leverage score sampling for $p=1.5$ (red). The plots indicate the median of approximation ratios taken over 21 repetitions for each reduced size. Best viewed in colors, lower is better.
}
\label{fig:experiments:l15}
\end{center}
\end{figure*}

\end{document}